\def\eqref#1{equation~\ref{#1}}
\def\1{\bm{1}}
\DeclareMathAlphabet{\mathsfit}{\encodingdefault}{\sfdefault}{m}{sl}
\SetMathAlphabet{\mathsfit}{bold}{\encodingdefault}{\sfdefault}{bx}{n}
\definecolor{MorandiLighterBlue}{HTML}{E9EFF6}
\definecolor{MorandiLightBlue}{HTML}{B7C9E2}
\newtheorem{theorem}{Theorem}[section]      
\newtheorem*{theorem*}{Theorem}
\newtheorem*{lemma*}{Lemma}
\newtheorem*{proposition*}{Proposition}
\newcommand{\minK}{\mathrm{min}\text{-}k}
\newcommand{\maxK}{\mathrm{max}\text{-}k}
\newcommand{\deltac}[1]{%
  \begingroup\edef\val{#1}%
  \ifdim \val pt > 0pt \textcolor{red}{#1}%
  \else\ifdim \val pt < 0pt \textcolor{blue}{#1}%
  \else #1\fi\fi
  \endgroup
}
\newcommand{\deltab}[1]{\deltac{#1}}
\NewDocumentCommand{\huan}
{ mO{} }{\textcolor{purple}{\textsuperscript{\textit{Huan}}\textsf{\textbf{\small[#1]}}}}
\newcommand{\rebuttal}[1]{\textcolor{black}{#1}}
\title{
On The Fragility of Benchmark Contamination Detection in Reasoning Models
}
\author{Han Wang$^{1,}$\thanks{Equal Contribution.}
\qquad Haoyu Li$^{1,*}$
\qquad Brian Ko$^{2,*}$
\qquad Huan Zhang$^{1}$ \\
$^{1}$ University of Illinois Urbana-Champaign
\qquad $^{2}$ University of Washington \\
\texttt{\{hanw14,haoyuli5\}@illinois.edu, kkm97183@uw.edu, huan@huan-zhang.com} \\
}
\begin{document}

\maketitle
\begin{abstract}
Leaderboards for large reasoning models (LRMs) have turned evaluation into a competition, incentivizing developers to optimize directly on benchmark suites. A shortcut to achieving higher rankings is to incorporate evaluation benchmarks into the training data, thereby yielding inflated performance, known as benchmark contamination. Despite that numerous contamination detection approaches have been proposed, surprisingly, our studies find that evading contamination detections for LRMs is alarmingly easy. We focus on the two scenarios where contamination may occur in practice: (\uppercase\expandafter{\romannumeral 1}) when the base model evolves into LRM via supervised fine-tuning (SFT) and reinforcement learning (RL), we find that contamination during SFT can be originally identified by contamination detection methods. Yet, even a brief Group Relative Policy Optimization (GRPO) training can markedly \textbf{conceal contamination signals} that most detection methods rely on. Further empirical experiments and theoretical analysis indicate that Proximal Policy Optimization (PPO) style importance sampling and clipping objectives are the root cause of this detection concealment, indicating that \textbf{a broad class of RL methods} may inherently exhibit similar concealment capability; (\uppercase\expandafter{\romannumeral 2}) when SFT contamination with CoT is applied to advanced LRMs as the final stage, most contamination detection methods \textbf{perform near random guesses}. Without exposure to non-members, contaminated LRMs would still have more confidence when responding to those unseen samples that share similar distributions to the training set, and thus, evade existing memorization-based detection methods. Together, our findings reveal the unique vulnerability of LRMs evaluations: Model developers could easily contaminate LRMs to achieve inflated leaderboards performance while leaving minimal traces of contamination, thereby strongly undermining the fairness of evaluation and threatening the integrity of public leaderboards. This underscores the urgent need for advanced contamination detection methods and trustworthy evaluation protocols tailored to LRMs. Our code is available at \url{https://github.com/ASTRAL-Group/LRM_Conta_Detection_Arena.git}.

\end{abstract}
\vspace{-5pt}
\vspace{-5pt}
\section{Introduction}
\vspace{-5pt}

Competition among model developers has intensified as Large Language Models (LLMs) have demonstrated remarkable capabilities in various real-world tasks~\citep{achiam2023gpt,wang2024ali}. The leaderboards for performance are becoming a competitive arena for all state-of-the-art (SOTA) LLMs. However, inadvertently, benchmark samples may appear during LLMs' pre-training due to vast amounts of web-scraped training data. In addition, in the pursuit of publicity, some model developers may even deliberately incorporate benchmark data into their training sets~\citep{sun2025emperor}, resulting in inflated benchmark performance and leaderboard rankings.
We refer to this as the benchmark contamination problem in LLMs~\citep{xu2024benchmark,balloccu2024leak}.

Accordingly, various benchmark contamination detection methods have been proposed to determine whether specific benchmarks were used during training~\citep{yeom2018privacy,mattern2023membership,shi2023detecting,dong2024generalization,tu2024dice}, based on the assumption that contamination in LLMs primarily involves memorizing the benchmark data~\citep{wu2025reasoning}. These methods rely on separability in some distributions between members (i.e., seen samples during contamination) and non-members (i.e., unseen samples). 
However, as LLMs have started to evolve into Large Reasoning Models (LRMs)~\citep{guo2025deepseek,jaech2024openai}, benchmark contamination detection faces two key challenges: 
(1) LRMs rely on chain-of-thought (CoT) reasoning to reach final answers, but model developers would not release their training CoT data,
and contamination detectors typically only have access to question-answer pairs without the intermediate reasoning steps used during training. This absence of training sequences makes detection substantially more challenging.
(2) LRMs primarily acquire reasoning abilities during two stages: SFT and RL. This potentially provides developers with opportunities to manipulate leaderboard performance by strategically contaminating benchmarks in the earlier stage \rebuttal{(e.g., SFT)}, while evading detection methods through subsequent training \rebuttal{(e.g., RL)}. Given these challenges, the effectiveness of existing detection methods against LRM contamination remains uncertain.

\begin{figure}[t]
  \centering
  \includegraphics[width=0.9\linewidth]{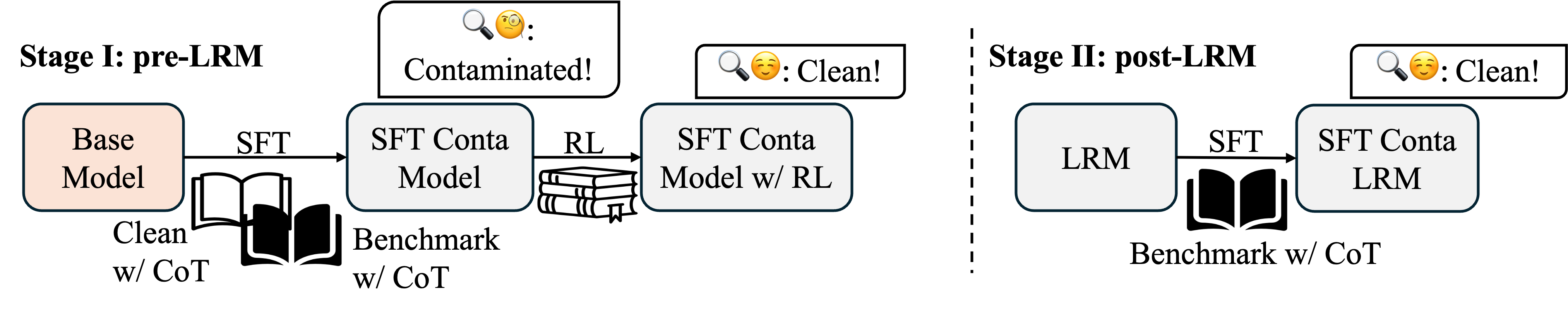}
  \vspace{-5pt}
  \caption{Two scenarios where contamination may happen to LRMs. In Stage I (pre-LRM), while SFT contamination to the base model is initially detectable, contamination evidence can be concealed through subsequent RL training. In Stage II (post-LRM), extensive contamination with CoT on advanced LRMs barely leaves evidence for existing memorization-based detection methods.}
  \vspace{-10pt}
  \label{fig:framework}
\end{figure}

In this paper, we present the first systematic study of benchmark contamination in LRMs, structured around two points where contamination can happen. In particular, \textbf{Stage I (pre-LRM)} investigates contamination introduced to the base model while acquiring reasoning ability via SFT and RL; \textbf{Stage II (post-LRM)} investigates contamination applied to an advanced LRM as a final SFT step. Under each stage, we comprehensively evaluate the effectiveness of existing detection methods. 

\textbf{Stage I (pre-LRM): contamination happens when the base model evolves into LRMs.} We simulate contamination introduced during the period which the base model acquires reasoning ability through SFT and RL. After evaluating 10 representative contamination detection methods spanning generation-based, perturbation-based, reference-based, and reference-free approaches, we find that while SFT contamination to the base model is initially detectable, contamination evidence can be concealed through subsequent GRPO~\citep{shao2024deepseekmath} training with clean samples. 
To isolate the core reasons behind GRPO's ability to conceal contamination, we conducted carefully designed controlled experiments to rule out the possibility that simply training with more clean samples results in the observed concealment, pointing to the conclusion that the GRPO optimization objective might be the primary driver for obscuring contamination. Then, we performed a theoretical analysis showing that the PPO-style importance sampling/clipping gate can drive the drop in detection performance. Our ablation studies confirm that while plain rejection sampling (RAFT) will not shrink the member/non-member separability,
its variant RAFT++~\citep{xiong2025minimalist} that adds on the importance sampling/clipping term again makes detection harder. As many RL algorithms adopt similar training objectives, this demonstrates a significant risk to the integrity of benchmark evaluations. 


\textbf{Stage II (post-LRM): contamination with CoT applied to LRMs.} We simulate contamination with CoT introduced to advanced LRMs as the final training step. Surprisingly, although exclusively SFT on the benchmark samples with CoT yields a huge inflated performance, it leaves little evidence to existing detection approaches: almost all the detection approaches consistently perform near random guess in all the benchmarks. The log-prob distributions of both members and non-members show that without exposure to non-members, contaminated LRMs still have more confidence when responding to those unseen samples that are similar to the training set. This may undermine the key assumption behind many existing detection techniques that the benchmark contamination problem is primarily about memorizing samples~\citep{morris2025much,hayes2025strong}.

Overall, our findings reveal that existing contamination detection methods are fragile under LRM contamination scenarios: RL conceals SFT contamination evidence introduced during the transition from base models to LRMs, while contamination with CoT applied to advanced LRMs leaves little detectable evidence. 
These findings underscore the urgent need for advanced contamination detection methods and trustworthy evaluation protocols tailored to LRMs. Accordingly, we outline potential directions for guaranteeing the integrity of evaluating LRMs (Section~\ref{Conclusion}).
We hope that our discoveries will inspire further research dedicated to building fair evaluation arenas for LRMs.

\vspace{-5pt}
\section{Related Works}
\vspace{-5pt}

\textbf{Benchmark Contamination Detections.} Benchmark contamination detection methods aim to identify whether evaluation datasets have been exposed during training~\citep{oren2023proving}. Prior work has proposed approaches based on: instance similarity~\citep{karamolegkou2023copyright}, probability analysis~\citep{mattern2023membership}, instance generation~\citep{deng2023investigating,ranaldi2024investigating}, and answer memorization~\citep{yim2024err}. In this work, we select representative methods applicable to our setting, from probability analysis and instance generation, and further categorize them into: generation-based~\citep{dong2024generalization,wu2025reasoning}, perturbation-based~\citep{li2025llms,mattern2023membership}, reference-based~\citep{mireshghallah2022empirical,carlini2021extracting}, embedding-based~\citep{tu2024dice,liu2024probing}, and reference-free~\citep{zhang2024min,li2025llms,yeom2018privacy,shi2023detecting} methods. Each of these relies on distinct assumptions~\citep{fu2024does}, and their effectiveness in the LRMs contamination scenario remains underexplored. 

\rebuttal{
\textbf{LRMs.} LRMs achieve superior performance on challenging mathematical and coding tasks~\citep{team2025kimi}, driven by inference-time scaling~\citep{jaech2024openai,snell2024scaling,zhang2025alphaone}. To endow reasoning abilities to existing models, numerous efforts have been focusing on either SFT distillation~\citep{li2025llms,muennighoff2025s1,guha2025openthoughts,ye2025limo,bercovich2025llama} or RL with verifiable rewards~\citep{liu2025prorl,zeng2025simplerl,yue2025does}. In SFT distillation, model developers distill knowledge from advanced LRMs into smaller models~\citep{guo2025deepseek}. While RL enables models to generate rollouts and receive rewards from verifiers, improving models' reasoning ability through feedback~\citep{liu2025prorl,zeng2025simplerl,yue2025does,liu2025understanding}. These two stages create many opportunities for developers to contaminate the benchmarks and evade detection. 
}

\rebuttal{
\textbf{Benchmark Contamination Concealment.} Model developers hope to conceal contamination evidence while still having performance inflation~\citep{dominguez2024training}. Prior work has explored evading detection through benchmark augmentation, such as rephrasing solutions with strong LLMs~\citep{dekoninck2024evading,samuel2024towards}, but in LRM settings, most benchmarks only have question–answer pairs without step-by-step solutions, making such methods inapplicable. \citep{bordt2024much} explores from the training dynamic perspective, showing that performance inflation due to contamination diminishes as pre-training progresses. To our knowledge, we are the first to investigate contamination concealment at the algorithmic level. }

\vspace{-5pt}
\section{RL Conceals Contamination (Stage I: pre-LRM)} \label{sec:RL_conceal}
\vspace{-5pt}

\textbf{Contamination Setup.} We define SFT contamination as the model being exposed to both the benchmark question and responses distilled from an advanced LRM, where RL contamination refers to the model encountering the benchmark question and having received rewards based on its generated responses during RL finetuning. For each dataset, we randomly sample half of the questions as the member set (used for contamination) and leave the remaining half as the non-member set (for detection evaluation). More details about our contamination pipelines, datasets, and implementation can be found in appendix~\ref{app:conta_pipeline}, ~\ref{app:datasets}, and ~\ref{app:implementation}.

\textbf{Detection Setup.} We consider 10 representative detection methods. 
For each question, we generate 8 responses and compute the detection value on each response, then average these values to obtain a final detection score for the question. For the rationale and ablation studies of choosing responses to compute the detection scores, please refer to Appendix~\ref{app:auroc_question}. 
We report Area Under the Receiver Operating Characteristic (AUROC) by comparing detection scores between member and non-member sets within the same benchmark. Higher AUROC values indicate better detection.

\vspace{-5pt}
\subsection{GRPO Conceals Benchmark Contamination}
\vspace{-5pt}

\textbf{Contamination Inflation Mainly Comes From SFT.} We evaluate multiple contamination scenarios that may happen during SFT and RL and summarize the empirical results in Tab.~\ref{tab:grpo_qwen_id_ood_results}. Results show that clean SFT training yields an 11.30\% improvement in pass@1 performance, while SFT contamination further inflates results by an additional 8.82\% on average across six benchmarks \rebuttal{when starting with Qwen2.5-7B-Instruct}. In contrast, RL contamination, despite introducing the benchmark questions and giving rewards based on the model-generated responses, shows no significant difference compared to using a clean RL training set after short training steps. 

To understand whether current contamination detection methods can still successfully detect contamination in LRMs, and whether RL training can alter the signals exploited by contamination detectors, we evaluate SFT-contaminated models before and after GRPO. Tab.~\ref{tab:sft_cont_mia_auroc} reveals systematic shifts in AUROC across diverse detection methods. Our analysis highlights four key observations:

\textbf{SFT contamination can be detectable at first.} \rebuttal{When starting with Qwen2.5-7B-Instruct,} several reference-free approaches (Min-K\%~\citep{shi2023detecting}, Max-K\%~\citep{maini2024llm}, and LOSS~\citep{carlini2021extracting}) can detect SFT contamination at a certain level, achieving AUROC around 73.42\% across six contaminated benchmarks. The reference-based detection approach, LiRA~\citep{mireshghallah2022empirical}, which assumes access to the training data distribution, also demonstrates superior performance with an average AUROC of 89.13\% across six benchmarks. \rebuttal{Similar results have already been observed when starting with Llama-3.1-8B-Instruct.}

\begin{table}[t]
\centering
\footnotesize
\setlength{\tabcolsep}{3pt}
\caption{\textbf{Pass@1 (\%)} under different contamination scenarios when \textbf{the base model evolves into LRMs}. 
The empirical results show that SFT contamination can easily inflate the benchmark performance. ``/'' means not used, and ``Mem'' denotes members. We first train the base model with SFT and then RL. \rebuttal{The row with both ``/'' in the SFT/RL data columns is the result of the base model.}}
\label{tab:grpo_qwen_id_ood_results}
\vspace{-5pt}
\resizebox{0.80\linewidth}{!}{%

\begin{tabular}{cccc|cccccc|c}
\toprule
\multicolumn{2}{c|}{SFT Data}  & \multicolumn{2}{c|}{RL Data} & Olypaid & GPQA & AIME25 & AIME24 & Minerva & AMC23 & Avg.  \\ \midrule
 \multicolumn{11}{c}{\textit{Base model: Qwen2.5-7B-Instruct}}\\
    
\midrule
 \multicolumn{2}{c|}{Clean \& Mem} & \multicolumn{2}{c|}{Clean \& Mem} & 52.56 & 44.70 & 30.00 & 30.00 & 39.52 & 73.00 & 44.96  \\
 \multicolumn{2}{c|}{Clean \& Mem} & \multicolumn{2}{c|}{Clean}  & 52.52 & 45.71 & \textbf{34.67} & 28.00 & 39.89 & 72.50 & 45.55  \\
 \multicolumn{2}{c|}{Clean \& Mem} & \multicolumn{2}{c|}{/}  & \textbf{53.77} & \textbf{49.58} & 31.62 & \textbf{32.73} & \textbf{40.74} & \textbf{74.92} & \textbf{47.23}  \\ 
\multicolumn{2}{c|}{Clean} & \multicolumn{2}{c|}{Clean \& Mem} & 44.62 & 40.74 & 24.85 & 27.88 & 35.23 & 65.00 & 39.72  \\
\multicolumn{2}{c|}{Clean} & \multicolumn{2}{c|}{Clean} & 47.11 & 41.41 & 24.44 & 26.67 & 32.72 & 70.83 & 40.53  \\
 \multicolumn{2}{c|}{Clean} & \multicolumn{2}{c|}{/} & 44.35 & 40.34 & 24.79 & 23.54 & 34.24 & 63.20 & 38.41  \\ 
 \multicolumn{2}{c|}{/} & \multicolumn{2}{c|}{/} & 36.48 & 32.20 & 2.50 & 10.83 & 28.58 & 52.50 & 27.18 \\ \midrule
\multicolumn{11}{c}{\textit{Base model: Llama-3.1-8B-Instruct}} \\ \midrule
  \multicolumn{2}{c|}{Clean \& Mem} & \multicolumn{2}{c|}{Clean \& Mem} & 44.30 & 43.18 & 25.42 & 24.58 & 35.20 & 61.25 & 38.99  \\
 \multicolumn{2}{c|}{Clean \& Mem} & \multicolumn{2}{c|}{Clean}  & 44.07 & \textbf{48.48} & \textbf{27.78} & 25.56 & \textbf{37.32} & \textbf{66.88} & \textbf{41.68}  \\
 \multicolumn{2}{c|}{Clean \& Mem} & \multicolumn{2}{c|}{/}  & \textbf{46.07} & 42.80 & 26.67& \textbf{26.67} & 35.20 & 66.67 & 40.68  \\ 
\multicolumn{2}{c|}{Clean} & \multicolumn{2}{c|}{Clean \& Mem} & 44.54 & 40.74 & 25.83 & 23.33 & 29.53 & 61.56 & 37.59  \\
\multicolumn{2}{c|}{Clean} & \multicolumn{2}{c|}{Clean} & 42.81 & 37.37 & 18.33 & 19.17 & 30.15 & 64.38 & 35.37  \\
 \multicolumn{2}{c|}{Clean} & \multicolumn{2}{c|}{/} & 40.69 & 39.23 & 16.67 & 18.33 & 27.70 & 56.88 & 33.25  \\ 
 \multicolumn{2}{c|}{/} & \multicolumn{2}{c|}{/} & 15.63 & 29.67 & 0.00 & 4.17 & 19.49 & 19.00 & 14.66 \\ 
\bottomrule
\end{tabular}%
}
\vspace{-10pt}
\end{table}
\begin{table}[t]
  \centering
  \setlength{\tabcolsep}{6pt}
  \caption{\textbf{AUROC (\%)} of contamination detection approaches evaluated starting from an \textbf{SFT-contaminated model w/o RL to subsequently trained with GRPO}. Results demonstrate that after GRPO, AUROC decreases across all the benchmarks and detection approaches. $\Delta$ measures the difference with the SFT-contaminated model w/o RL. Higher AUROC, better detection performance. Each AUROC is averaged over detection scores from 8 rollouts. \rebuttal{The base model here is Qwen2.5-7B-Instruct. More results of Llama-3.1-8B-Instruct as the base model are shown in Tab.\ref{tab:sft_cont_mia_llama_auroc}.}}
  \vspace{-5pt}
  \label{tab:sft_cont_mia_auroc}
  \resizebox{\linewidth}{!}{
  \begin{tabular}{l|c|cccccc|cc}
    \toprule
    Contamination Detection Methods & 
    Training Stages & Olympiad & GPQA & AIME25 & AIME24 & Minerva & AMC23 & Avg. & $\Delta$ \\
    \midrule
    \multicolumn{1}{@{}l}{\textbf{Generation based}} \\ 

    \multirow{3}{*}{Verbatim~\citep{wu2025reasoning}} & Before RL & 47.58 & 49.86 & 47.56 & 53.56 & 52.52 & 65.50 & 52.76 & \deltab{+0.00} \\
      & RL w/ Clean & 45.60&  51.28& 47.56 &56.44& 52.05 & 60.00 & 52.16 & \deltab{-0.60} \\
      & RL w/ Clean\&Mem & 46.17 & 50.34 & 52.67&  55.56 & 51.71 & 63.62 & 53.35& \deltab{+0.59} \\
      \cmidrule(lr){2-10}
      
    \multirow{3}{*}{CDD~\citep{dong2024generalization}} &  Before RL &  55.75& 57.32 &41.56  & 59.11 & 59.27 & 61.75 & 55.80 & \deltab{+0.00} \\
      & RL w/ Clean & 55.47 & 51.08 & 43.33 & 60.00 & 60.18 & 62.00 & 55.34 & \deltab{-0.46} \\ 
      & RL w/ Clean\&Mem & 56.32 & 44.14 & 35.56 & 65.11 & 60.31 & 49.38 & 51.80 & \deltab{-3.95} \\
    \midrule
    
    \multicolumn{1}{@{}l}{\textbf{Perturbation based}}\\ 
    \multirow{3}{*}{Neighbor~\citep{mattern2023membership}} & Before RL & 54.76 & 41.19 & 50.00 & 41.56 & 55.64 & 61.10 & 50.71 & \deltab{+0.00} \\
      & RL w/ Clean & 54.10 & 39.68 & 50.67 & 44.22 & 53.42 & 60.50 & 50.43 & \deltab{-0.28} \\
      & RL w/ Clean\&Mem & 53.05 & 41.08 & 50.44 & 52.67 & 68.16 & 64.00 & 54.90 & \deltab{+4.19} \\
    \midrule
    
    \multicolumn{1}{@{}l}{\textbf{Reference based}}\\ 
    \multirow{3}{*}{LiRA~\citep{mireshghallah2022empirical}}  & Before RL & 85.37 & 86.80 & 100.00 & 82.00 & 87.01 & 93.62 & 89.13 & \deltab{+0.00} \\
     & RL w/ Clean & 74.41 & 84.65 & 70.22 & 87.78 & 81.04 & 82.75 & 80.14 & \deltab{-8.99} \\
      & RL w/ Clean\&Mem & 69.73 & 77.85 & 63.11 & 82.22 & 79.05 & 77.38 & 74.89 & \deltab{-14.24} \\
    \cmidrule(lr){2-10}
    
    \multirow{3}{*}{Ref~\citep{carlini2021extracting}}  & Before RL & 73.27 & 63.30 & 60.22 & 41.11 & 73.10 & 82.00 & 65.50 & \deltab{+0.00} \\
     &  RL w/ Clean & 66.77 & 58.41 & 45.33 & 51.11 & 65.54 & 73.62 & 58.08 & \deltab{-7.42} \\ 
     & RL w/ Clean\&Mem & 62.77 & 54.17 & 43.11 & 50.44 & 65.38 & 72.62 & 58.86 & \deltab{-6.64} \\
    \midrule
    
    \multicolumn{1}{@{}l}{\textbf{Reference free}}\\
    \multirow{3}{*}{Zlib~\citep{carlini2021extracting}} & Before RL & 49.38 & 58.61 & 73.56 & 43.56 & 50.19 & 45.00 & 53.38 & \deltab{+0.00} \\
      & RL w/ Clean & 45.94 & 54.99 & 66.22 & 35.56 & 46.65 & 39.38 & 48.12 & \deltab{-5.26} \\ 
      & RL w/ Clean\&Mem & 46.04 & 55.30 & 64.89 & 28.89 & 44.87 & 39.00 & 44.74 & \deltab{-8.64} \\
    \cmidrule(lr){2-10}
    
    \multirow{3}{*}{Min\textendash K\%++~\citep{zhang2024min}}  & Before RL & 47.57 & 50.90 & 41.90 & 59.11 & 52.27 & 45.88 & 49.61 & \deltab{+0.00} \\
     & RL w/ Clean & 46.25 & 46.78 & 36.67 & 50.89 & 51.35 & 29.62 & 43.59 & \deltab{-6.02} \\
      & RL w/ Clean\&Mem & 43.77 & 48.21  & 21.78 & 38.00 & 48.91 & 43.62 & 40.72 & \deltab{-8.89} \\
    \cmidrule(lr){2-10}
    
    \multirow{3}{*}{Min\textendash K\%~\citep{shi2023detecting}} & Before RL & 69.19 & 69.51 & 85.56 & 75.56 & 71.16 & 78.75 & 74.96 & \deltab{+0.00} \\
     & RL w/ Clean & 55.19 & 60.60 & 62.89 & 65.56 & 61.50 & 61.87 & 61.27 & \deltab{-13.69} \\
      &  RL w/ Clean\&Mem  & 53.93 & 59.74 & 59.56 & 62.67 & 57.31 & 59.25 & 58.54 & \deltab{-16.42} \\
    \cmidrule(lr){2-10}
    
    \multirow{3}{*}{Max\textendash K\%~\citep{maini2024llm}}  & Before RL  & 64.50 & 64.31 & 65.11 & 81.78 & 67.27 & 76.00 & 69.83 & \deltab{+0.00} \\
    & RL w/ Clean & 53.05 & 51.43 & 49.78 & 50.22 & 51.84 & 57.75 & 52.35 & \deltab{-17.48} \\
     &  RL w/ Clean\&Mem & 49.03 & 51.04 & 50.00 & 50.00 & 52.34 & 47.50 & 49.99 & \deltab{-19.84} \\
   
    \addlinespace[2pt]
    \cmidrule(lr){2-10}
    \multirow{3}{*}{Loss~\citep{carlini2021extracting}}  & Before RL & 69.18 & 69.81 & 86.22 & 77.33 & 70.95 & 79.38 & 75.48 & \deltab{+0.00} \\
      &  RL w/ Clean & 55.22 & 60.50 & 62.44 & 65.78 & 61.50 & 62.12 & 61.26 & \deltab{-14.22} \\
      & RL w/ Clean\&Mem & 53.99 &  60.01 & 59.33 & 62.67 & 57.40 & 59.38 & 58.80 & \deltab{-16.68} \\
    
    \bottomrule
  \end{tabular}}
  \vspace{-10pt}
\end{table}

\textbf{GRPO conceals contamination.} After applying GRPO to the SFT-contaminated model, we observe a consistent decrease in AUROC across all detection methods and benchmarks. 
We further analyze the average log probability of member and non-member samples before and after GRPO training\rebuttal{, selecting Qwen2.5-7B-Instruct as the base model}. Fig.\ref{fig:log_prob_main} shows two key patterns: (1) GRPO lowers the entropy of generated sequences, indicating that the model becomes more confident in its generation, which is consistent with prior observations in~\citep{cui2025entropy}; (2) the log prob distribution of members and non-members converge after GRPO. Since the gaps in log prob are the core statistical backbone of existing contamination detectors, these findings suggest that GRPO may inherently suppress contamination evidence by rendering members and non-members indistinguishable.

\textbf{More GRPO, less contamination evidence.} To examine whether the concealment effect strengthens with additional training, we extend GRPO to SFT-contaminated models using 10K questions from DeepMath-103K~\citep{he2025deepmath} for one epoch (156 steps). As shown in Fig.\ref{fig:trend}, AUROC consistently decreases across all detection methods and benchmarks as the number of GRPO steps increases. Given that our maximum 156 training steps are still far fewer than the steps used in some advanced open-sourced reasoning models~\citep{luo2025deepscaler, luo2025deepcoder}, we expect that extensive GRPO training would render all existing detection methods to near-random performance eventually.

\rebuttal{\textbf{Further training will not make models forget contamination.} One possible explanation is that additional training makes models forget contamination, thus detections perform random guessing and pass@1 match the clean SFT baseline. To test this, we examine it with two experiments. First, we train SFT contaminated models with GRPO on both clean and contaminated datasets. As shown in Tab.~\ref{tab:sft_cont_mia_auroc}, we observe a comparable drop in AUROC relative to the no RL baseline, similar to performing RL solely on clean data. Also, the contaminated model, further trained with GRPO, still shows an average performance inflation of 7.14\% across six benchmarks and does not fall back as the clean SFT model, shown in Tab.~\ref{tab:grpo_qwen_id_ood_results}. Second, we continue SFT on the SFT contaminated model with an additional 4 epochs on clean data. Fig.~\ref{fig:trend} and Tab.~\ref{tab:further_sft_pass_at_1} demonstrate that further SFT is unable to conceal the benchmark contamination, while the pass@1 would continue to rise. Together, these results show that subsequent GRPO training preserves performance inflation while reducing detectable evidence of contamination may have some underlying reasons, rather than simply forgetting the contamination after further training. }
\vspace{-5pt}
\subsection{Theoretic Analysis}
\vspace{-5pt}
In this section, we perform theoretical analysis to demonstrate that PPO-style clipping and importance sampling are the root cause of the concealment. Intuitively, the importance sampling and clipping term reweights terms so that the most off-policy trajectories are damped by the clip while typical on-policy ones keep their influence. This reweighting hits non-members more as they have more extreme successes, so clipping cuts misaligned influence and lets ordinary, on-policy successes steer the update. With more headroom, non-member's NLL drops more and the gap contracts.

\begin{figure}[t]
  \centering
  \includegraphics[width=1.0\linewidth]{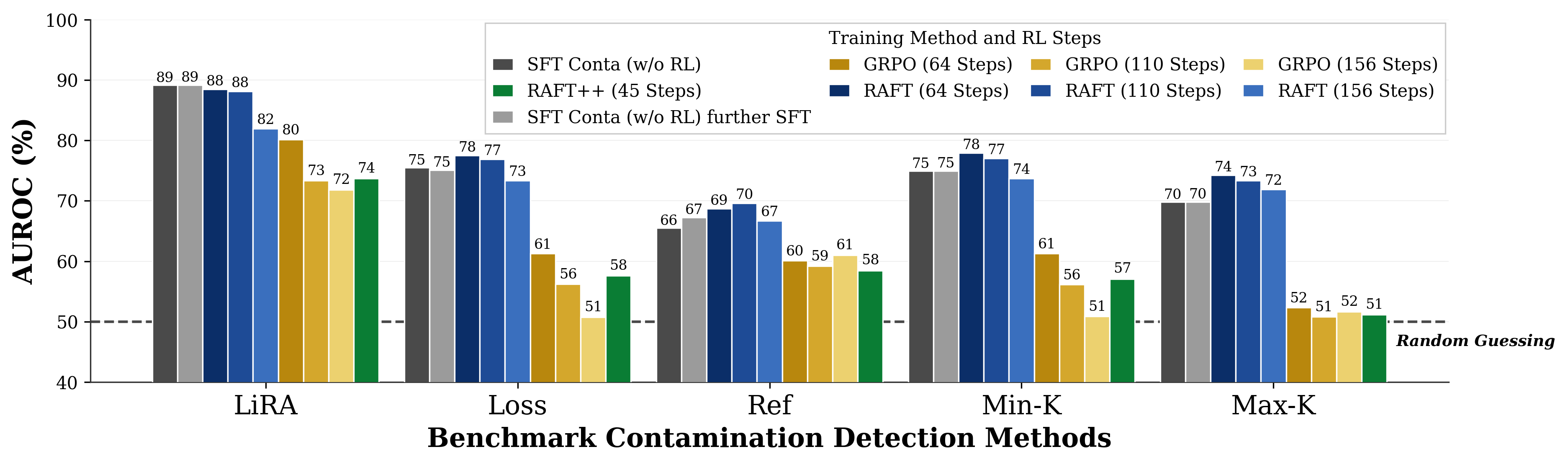}
  \vspace{-10pt}
  \caption{\textbf{AUROC (\%) trends on SFT contaminated model further trained with different objectives.} While contamination introduced through SFT is initially detectable by existing methods, subsequent RL training with clean samples (e.g., GRPO or RAFT++) consistently degrades detection performance. Moreover, we observe a monotonic decline in detection performance as the number of RL steps increases, and reference-free methods (e.g., Loss, Min-K, and Max-K) already fall into near random guesses (i.e., AUROC$\approx$50\%) simply after 156 steps. \rebuttal{The base model is Qwen2.5-7B-Instruct. More results of Llama-3.1-8B-Instruct as the base model are shown in Fig.\ref{fig:trend_llama}.}
  }
  \vspace{-5pt}
  \label{fig:trend}
\end{figure}
\begin{figure*}[t]
  \centering
  \begin{subfigure}[t]{0.5\textwidth}
    \centering
    \includegraphics[width=\textwidth]{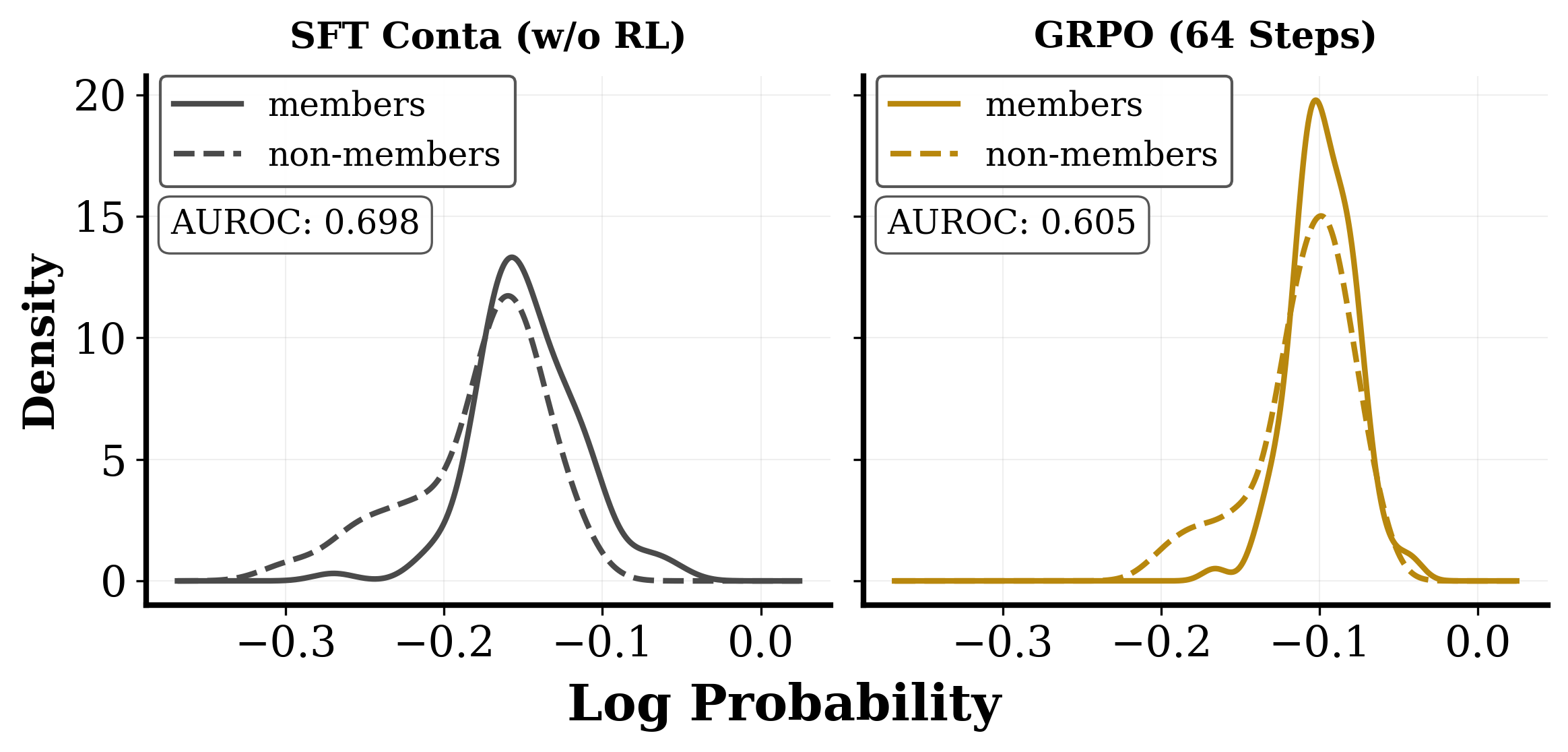}
    \caption{Results on GPQA-Diamond}
    
  \end{subfigure}\hfill
  \begin{subfigure}[t]{0.5\textwidth}
    \centering
    \includegraphics[width=\textwidth]{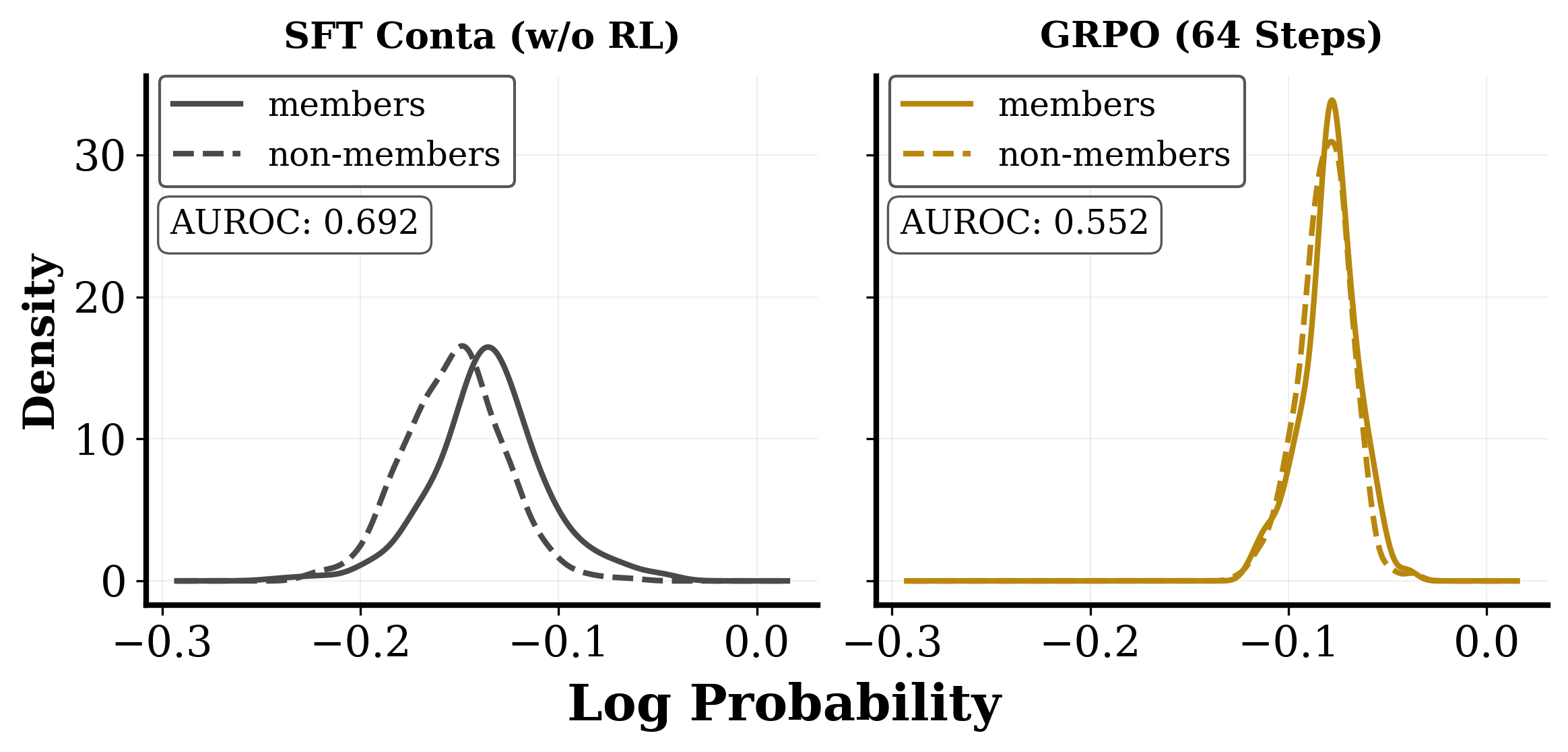}
    \caption{Results on OlympiadBench}
  \end{subfigure}
  \caption{Log-prob distributions for members vs. non-members of \textbf{SFT contaminated model before and after RL training.} After further GRPO with clean samples on the SFT contaminated model, the log-prob distributions of members and non-members become increasingly similar. Since many contamination detection methods rely on separability in this space, the shrinking gap explains their degraded effectiveness. More log-prob distributions can be found in Fig.~\ref{fig:log_prob_all_gpqa},~\ref{fig:log_prob_all_olympiadbench}, and~\ref{fig:log_prob_all_minerva_math}.}
  \vspace{-10pt}
  \label{fig:log_prob_main}
\end{figure*}

\textbf{Setup.} We denote $\ell(x,y)$ to be the negative log likelihood (NLL) of the current model of generating $y$ given prompt $x$, members as $M$ and non-members as $N$, policy model $\pi_k$ at step $k$. We focus on analyzing the gap $G_k$ of negative log likelihood for members and non-members on correct samples (i.e., $r=1$), as assessing contamination on erroneous outputs is not especially meaningful. Formally, we can write
\begin{equation}
\begin{aligned}
    G_k := \mathbb{E}_{x \sim N} \mathbb{E}_{y \sim \pi_k(\cdot|x)}[\ell_k(x,y) \mid r=1] &- \mathbb{E}_{x \sim M} \mathbb{E}_{y \sim \pi_k(\cdot|x)}[\ell_k(x,y) \mid r=1]
\end{aligned}
\end{equation}
If this gap contracts, i.e., $G_{k+1} - G_k < 0$, members and non-members become closer in the NLL sense, making contamination detection harder since many methods~\citep{zhang2024min,shi2023detecting,maini2024llm,carlini2021extracting} are based on the separation of NLLs. For a fixed prompt $x$, we define the NLL drift as
\begin{align}
    \Delta_x := \mathbb{E}_{\pi_{k+1}}[\ell_{k+1} \mid r=1, x] - \mathbb{E}_{\pi_k}[\ell_k \mid r=1,x].
\end{align}
We notice that we can rewrite the NLL gap as
\begin{align}
    G_{k+1} - G_k := \mathbb{E}_{x\in N}[\Delta_x] - \mathbb{E}_{x\in M}[\Delta_x].
\end{align}
In our following analysis, we thus focus on investigating the behavior of $\Delta_x$ on members and non-members. If an algorithm yields on average smaller $\Delta_x$ on non-members, the algorithm should be able to conceal contamination.

\textbf{Notations.} At token $t$, let $A_t$ be the method's per token reward/advantage and $w_t$ the weight from importance sampling and clipping. Define 
\begin{align}
    A_t^w := w_tA_t, \quad \bar{A}^w(s) := \mathbb{E}_{a\sim \pi_{k}(\cdot \mid s)}[A^w(s,a)], \quad \tilde{A}_t^w:= A_t^w - \bar{A}^w(s_t)
\end{align}
to measure how good a state is compared to the average. In particular, $w_t\!=\!\rho_tm_t$ with $\rho_t\!=\! \pi_{\theta}(a_t\mid s_t)/\pi_{\text{old}}(a_t\mid s_t)$ being the importance sampling and $m_t \in \{0,1\}$ be a mask indicating if the clipping is activated, specifically $m_t\!=\!0$ indicates that there is no gradients from the update. Moreover, we define $p_k(x)\!=\!\mathbb{E}_{y\sim \pi_k(\cdot\mid x)}[r(x,y)]$ to be the overall success rate of the prompt, and a value function as $q_k(s,a)\!:=\!\mathrm{Pr}(r=1|s,a)$ and $p_k(s)\!:=\!\mathbb{E}_{a\sim \pi_k(\cdot \mid s)}[q_k(s,a)]$ for success rate at that state. And we define $B(s)\!=\!\mathbb{E}_{a\sim \pi} [\rho(s,a)m(s,a)q_k(s,a)]$ and $C(s)\!=\! \mathbb{E}_{a\sim \pi}[\rho(s,a)m(s,a)]$. We assume that the RL training is performed on the benchmark data (i.e., training data is the combination of members $M$ and non-members $N$), and it is in a tabular setting for simplicity. Since members have been utilized during training, it is natural to assume the $p_k(s)$ for members are larger than non-members, and the NLL for members is lower than non-members. 


\begin{table}[t]
\centering
\footnotesize
\setlength{\tabcolsep}{3pt}
\caption{\textbf{AUROC (\%)} of detection approach, Loss~\citep{carlini2021extracting}, evaluated on \textbf{SFT contaminated model further trained with different RL objectives.} The gray row indicates no ablation on the objective, and \ding{55} means remove the term from the objective. $\Delta$ measures the difference with the SFT contaminated model w/o RL (Tab.~\ref{tab:sft_cont_mia_auroc}). RL steps are 64, or the step before the model collapses. The results show that clipping is the main driver for the contraction, which aligns with our theory.}
\label{tab:theory_empirical_evidence}
\vspace{-5pt}
\resizebox{0.80\linewidth}{!}{%

\begin{tabular}{c|c|cccccc|cc}
\toprule
\makecell[c]{Training\\Objectives} & \makecell[c]{Clipping} & Olypaid & GPQA & AIME25 & AIME24 & Minerva & AMC23 & Avg. & $\Delta$ \\ \midrule
\rowcolor{gray!20} 
RAFT & \ding{55} & 71.78 & 69.78 & 86.00 & 86.67 & 71.58 & 79.25 & 77.51 & \deltab{+2.03} \\ \midrule

\rowcolor{gray!20} 
RAFT++ & \Checkmark & 50.43 & 58.45 & 67.56 & 66.67 & 52.84 & 49.50 & 57.58 & \deltab{-17.91}  \\
RAFT++ & \ding{55}  & 69.16 & 73.68 & 74.44 & 76.22 & 71.08 & 81.75 & 74.39 & \deltab{-1.09} \\ \midrule

\rowcolor{gray!20} 
GRPO & \Checkmark & 55.22 & 60.50 & 62.44 & 65.78 & 61.50 & 62.12 & 61.26 & \deltab{-14.22} \\ 
GRPO & \ding{55} & 68.83 & 70.20 & 80.44 & 73.78 & 68.30 & 78.12 & 73.28 & \deltab{-2.20} \\ 
 
\bottomrule
\end{tabular}%
}
\vspace{-10pt}
\end{table}

\begin{theorem}\label{theorem:contraction}
For a small natural gradient step with step size $\eta$ on a PPO style loss, we have
\begin{align}
    \Delta_x 
    &= -\eta\,\underbrace{\mathbb{E}\!\left[\frac{1}{T}\sum_{t=1}^{T}\tilde{A}_t^{w}\,\middle|\, r=1, x\right]}_{\text{(A) $\mu(x)$}} +
\eta\,\underbrace{\operatorname{Cov}\!\left(\ell_k,\; \sum_{t=1}^{T}\tilde{A}_t^{w}\right)}_{\text{(B) covariance $\beta(x)$}}
+ O(\eta^{2}) \,
\end{align}
\end{theorem}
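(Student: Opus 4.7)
The plan is to expand $\Delta_x$ to first order in $\eta$ by viewing the natural-gradient step as a small log-policy perturbation, and then to split the leading contribution into a pointwise shift of the NLL and a reweighting of the conditional sampling distribution.

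First I would quantify the policy change to leading order. For the PPO-style objective $L(\theta)=\mathbb{E}[\frac{1}{T}\sum_t w_tA_t\log\pi_\theta(a_t\mid s_t)]$ under a per-state softmax parameterization, the natural-gradient direction at $(s,a)$ reduces in closed form to the effective per-token advantage $A^w(s,a)$. Because the simplex constraint $\sum_a\pi(a\mid s)=1$ must be preserved, expanding $\log\sum_a e^{\theta_{k,s,a}+\eta A^w(s,a)}$ around $\eta=0$ automatically centers the log-policy shift: $\log\pi_{k+1}(a\mid s)-\log\pi_k(a\mid s)=\eta\,\tilde{A}^w(s,a)+O(\eta^2)$, with the linear part of $\log Z$ killed by $\mathbb{E}_{a\sim\pi_k}[\tilde{A}^w]=0$. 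Summing across tokens and writing $S:=\sum_t\tilde{A}_t^{w}$, this yields both the pointwise NLL drift $\ell_{k+1}(x,y)-\ell_k(x,y)=-\tfrac{\eta}{T}S+O(\eta^2)$ (the $1/T$ inherited from the length-normalized NLL convention) and the importance ratio $\pi_{k+1}(y\mid x)/\pi_k(y\mid x)=1+\eta S+O(\eta^2)$.

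Second I would rewrite $\mathbb{E}_{\pi_{k+1}}[\ell_{k+1}\mid r=1,x]$ by importance sampling against $\pi_k$ and linearize in $\eta$:
\[
\mathbb{E}_{\pi_{k+1}}[\ell_{k+1}\mid r=1,x]=\frac{\mathbb{E}_{\pi_k}[(1+\eta S)(\ell_k-\eta S/T)\,r]}{\mathbb{E}_{\pi_k}[(1+\eta S)\,r]}+O(\eta^2).
\]
Applying the standard quotient expansion $\tfrac{a_0+\eta a_1}{b_0+\eta b_1}=\tfrac{a_0}{b_0}+\eta\tfrac{a_1 b_0-a_0 b_1}{b_0^2}+O(\eta^2)$ and regrouping, the pointwise NLL shift contributes $-\eta\,\mathbb{E}[\tfrac{1}{T}\sum_t\tilde{A}_t^{w}\mid r=1,x]$ (term (A), $\mu(x)$), whereas the cross term $S\ell_k$ minus the product of conditional means collapses exactly to $\eta\,\mathrm{Cov}_{\pi_k}(\ell_k,\sum_t\tilde{A}_t^{w}\mid r=1,x)$ (term (B), $\beta(x)$). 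Subtracting $\mathbb{E}_{\pi_k}[\ell_k\mid r=1,x]$ from both sides then delivers the claimed identity.

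I expect the main obstacle to lie in the first step, specifically the interaction between the clipping mask $m_t$ and the natural-gradient identity. Because $m_t$ is a nonsmooth indicator in $\theta$, the natural gradient must be read as the one-sided (active-set) gradient that sees only the currently unclipped tokens, which is exactly how $w_t=\rho_t m_t$ enters the effective advantage $A^w$; for $\eta$ small enough that no token crosses a clip boundary within the step this one-sided direction agrees with the true gradient and the Taylor remainder is honestly $O(\eta^2)$, while boundary crossings only contribute higher-order directional-derivative corrections that can be absorbed under mild regularity. A secondary subtlety is that the centering $\tilde{A}^w=A^w-\bar A^w$ arises only because the simplex normalization absorbs the state-wise baseline at first order, which is precisely what the tabular assumption buys; in general parameterizations the same expansion goes through with additional Fisher-metric cross terms that fall into the $O(\eta^2)$ bucket.
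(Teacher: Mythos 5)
Your proposal is correct and follows essentially the same route as the paper's proof: first-order logit shift $\eta\,\tilde A^{w}$ from the natural-gradient step on softmax logits, the resulting NLL drift $-\eta S/T$ and Radon--Nikodym factor $1+\eta S$, and a linearization of the conditional expectation under the tilted measure. The two cosmetic differences do not change the substance: you recover the centering from the log-partition expansion whereas the paper writes the centered logit update directly (equivalent, since softmax is invariant to state-wise constant shifts), and you use a single quotient expansion where the paper adds and subtracts $\mathbb{E}_{\pi_{k+1}}[\ell_k\mid r{=}1,x]$ to split $\Delta_x$ into the pointwise term $(A)$ and the covariance term $(B)$.
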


The proof can be found in appendix~\ref{app:proof}. Intuitively, $\mu(x)$ measures the average push on the example's NLL from correct trajectories, where $\beta$ serves as a reweighting term accouting for the importance sampling/clipping. Here we consider several instantiations using different algorithms to investigate the core driver for contraction. The training objectives for each algorithm are listed in Appendix~\ref{app:algorithms}.

\textbf{RAFT.}\label{raft_no_conceal}
In plain rejection sampling, we have $w_t\!=\!1$ and $A_t\!=\!\mathbf{1}\{r=1\}$, so on correct trajectories
\[
\tilde A_t^w = 1 - p_k(s_t),\quad
\mu^{\mathrm{RAFT}}(x)=\mathbb{E}\!\left[\frac{1}{T}\sum_{t=1}^T \big(1-p_k(s_t)\big)\,\middle|\, r=1,x\right].
\]
The covariance term is
\[
\beta^{\mathrm{RAFT}}(x)
= \operatorname{Cov}\!\Big(\ell_k,\;\sum (1-p_k(s_t))\Big)
= -\operatorname{Cov}\!\Big(\ell_k,\;\sum p_k(s_t)\Big).
\]
We note that lower loss $\ell_k$ corresponds to higher probabilities $p_k(s_t)$, and thus $\beta^{\mathrm{RAFT}}(x)\!>\!0$. Moreover, non-members correct trajectories can exhibit much higher variance in loss and probabilities, thus, the $\beta_N$ term is typically larger than $\beta_M$. Consequently,
\[
\Delta_N-\Delta_M
= -\eta\big(\mu_N-\mu_M\big) + \eta\big(\beta_N-\beta_M\big),
\]
where both gaps $(\mu_N-\mu_M)$ and $(\beta_N-\beta_M)$ are positive. Empirically, the covariance gap offsets the mean gap, yielding $\Delta_N-\Delta_M \geq 0$, i.e., RAFT is unable to conceal contamination evidence.

\textbf{RAFT++.} 
Using the same $A_t=\mathbf{1}\{r=1\}$, on $r=1$ paths
\[
\tilde A_t^w = \rho_t m_t - B_k(s_t),\quad
\mu^{\mathrm{RAFT{++}}}(x)
= \mathbb{E}\!\left[\frac1T\sum_{t=1}^T \big(\rho_t m_t - B_k(s_t)\big)\,\middle|\, r=1,x\right].
\]
We note that the difference of $\mu$ cannot possibly lead to large deviations between members/non-members as $0\!\leq \rho_tm_t\!\leq\!1+\epsilon\!$ and $B_k(s_t)\!\leq\!1$ for both groups and the term is normalized by length. For the covariance term though, we have
\[ 
\beta^{\mathrm{RAFT{++}}}(x)
= \operatorname{Cov}\!\Big(\ell_k,\;\sum(\rho_t m_t - B_k(s_t))\Big)
= \operatorname{Cov}\!\Big(\ell_k,\;\sum\rho_t m_t\Big)
\;-\;\operatorname{Cov}\!\Big(\ell_k,\;\sum B_k(s_t)\Big).
\]
Compared to RAFT, the new term $\operatorname{Cov}(\ell_k,\sum \rho_t m_t)$ is negative as correct path with higher loss are anomaly and typically got clipped more. Moreover, this is much more prominent in non-members due to high variance in correct trajectories loss. The second covariance term, although still negative, are not that significant for non-members compared to members due to an average over all possible actions. Therefore, overall it leads to
\[
\Delta_N-\Delta_M
= -\eta(\mu_N-\mu_M) + \eta(\beta_N-\beta_M) \;<\;0,
\]
i.e., RAFT++ contracts the membership gap. The driver is precisely the PPO-style importance sampling/clipping: it removes the RAFT covariance cancellation by making $\operatorname{Cov}(\ell_k,\sum \rho m)$ non-positive and more negative for non-members.

\textbf{GRPO.} Finally, we investigate the GRPO contraction term. To ease the analysis, we consider an idealized setting where we define the advantage term as $A_k(x,y)\!=\!r(x,y)\!-\!p_k(x)$ with no standard deviation term and $\tilde{A}_t^w \!=\!\tilde{A}_t^{\text{RAFT}}\!-\!p_k(x)(\rho_tm_t\!-\!C(s_t))$. Clearly, we have
\begin{align*}
    \mu^{\text{GRPO}}(x) &= \mu^{\text{RAFT++}}(x) - p_k(x)\mathbb{E}\!\left[\frac{1}{T}\sum \big(\rho_t m_t  - C_k(s_t)\big)\,\middle|\, r=1,x\right] \\
    \beta^{\text{GRPO}}(x) &= \beta^{\text{RAFT++}}(x) - p_k(x) \operatorname{Cov}\!\Big(\ell_k,\;\sum(\rho_t m_t - C_k(s_t))\Big)
\end{align*}
By similar argument, we know that the $\mu$ term does not contribute significantly to the concealment. The covariance term can be analyzed similarly to show that the concealment also happen on GRPO thanks to the importance sampling and clipping term.


\vspace{-5pt}
\subsubsection{Empirical Support}
\vspace{-5pt}
To confirm empirically the prediction of our theoretical results, we evaluate the Loss detector~\citep{carlini2021extracting} after training with RAFT~\citep{dong2023raft}/RAFT++~\citep{xiong2025minimalist}/GRPO. The overall results can be found in table~\ref{tab:theory_empirical_evidence}. \rebuttal{We conduct the ablation study using Qwen2.5-7B-Instruct as the base model.} From the results, there are several observations.


\textbf{Effect on detectability.} Under RAFT, the Loss detector~\citep{carlini2021extracting} performance remains essentially unchanged relative to the SFT contaminated baseline w/o further RL. In contrast, RAFT++ and GRPO (with clipping enabled) produce a sharp drop in detector performance. 

\textbf{Importance sampling vs.\ clipping.} The clipping term, often treated purely as a training stabilizer, materially contributes to concealment, as predicted by theory. When we retain importance sampling but \emph{remove clipping} in RAFT++ and GRPO, both algorithms show little to no reduction in Loss-detector performance (Table~\ref{tab:theory_empirical_evidence}). Intuitively, as the clip threshold $\epsilon\!\to\!\infty$, the effective weight satisfies $\sum_t \rho_t m_t \approx T$, and the covariance term in our decomposition tends toward zero for both members and non-members, eliminating the shrinkage effect.

These two observations perfectly reflect our theoretical analysis, empirically validating that the PPO-style importance sampling/clipping term is the key driver behind GRPO contamination concealment. Given that many RL algorithms adopt this term in their objectives, this suggests that a broad class of RL methods may inherently exhibit similar concealment capability.

\vspace{-5pt}
\section{Contamination with CoT on advanced LRMs Barely Leaves Evidence (Stage II: post-LRM)}
\vspace{-5pt}

\begin{table}[t]
\centering
\footnotesize
\setlength{\tabcolsep}{3pt}
\caption{\textbf{Pass@1 (\%)} of \textbf{advanced LRMs before and after SFT contamination with CoT.}}
\vspace{-5pt}
\label{tab:lrm_sft_inflation}
\resizebox{0.80\linewidth}{!}{
\begin{tabular}{l|cccccc|c}
\toprule
\multicolumn{1}{c|}{Models} &  Olypaid & GPQA & AIME25 & AIME24 & Minerva & AMC23 & Avg.  \\ \midrule
DeepSeek-R1-Distill-Llama-8B & 52.10 & 43.94 &	33.33 &	43.33 & 32.97 & 84.58 & 48.38 \\

\quad $\hookrightarrow$ w/ extensive SFT Contamination  & 61.83 & 53.16 & 51.67 & 61.67 & 38.74 & 93.75 & 60.14 \\ \midrule

DeepSeek-R1-Distill-Qwen-7B & 55.70 & 48.65 & 39.26 & 53.70 & 37.25 & 91.94 & 54.42 \\

\quad $\hookrightarrow$ w/ extensive SFT Contamination & 58.77 & 50.87 & 42.59 & 58.91 & 40.81 & 90.67 & 57.10 \\ \midrule

OpenThinker3-7B (15K) & 50.81 & 41.67 & 21.67 & 29.17 & 34.01 & 77.50 & 42.47  \\

\quad $\hookrightarrow$ w/ extensive SFT Contamination & 52.74 &  47.64 & 33.33 & 30.48 & 40.56 & 78.70 & 47.25 \\  \midrule

DeepSeek-R1-Distill-Qwen-14B & 59.89 & 56.69 & 44.44 & 62.78 & 42.28 & 92.92 & 59.83 \\

\quad $\hookrightarrow$ w/ extensive SFT Contamination & 66.37 & 62.75 & 64.58 & 77.78 & 46.14 & 97.81 & 69.24 \\




\bottomrule
\end{tabular}%
}
\vspace{-10pt}
\end{table}

\begin{table}[t]
  \centering
  \setlength{\tabcolsep}{6pt}
  \caption{\textbf{AUROC (\%)} of contamination detection approaches evaluated on \textbf{contaminated, advanced LRMs}. Results demonstrate that even after extensive contamination as the final stage, almost all the detection approaches perform near random guesses (i.e., AUROC$\approx$50\%). Each AUROC is averaged over detection scores from 8 rollouts.}
  \vspace{-5pt}
  \label{tab:deepseek_mia_auroc}
  \resizebox{\linewidth}{!}{
  \begin{tabular}{l|c|cccccc|c}
    \toprule
    Contamination Detection Methods & Init Models & Olympiad & GPQA & AIME25 & AIME24 & Minerva & AMC23 & Avg.  \\
    \midrule
    \multicolumn{1}{@{}l}{\textbf{Generation based}}\\

    \multirow{4}{*}{Verbatim~\citep{wu2025reasoning}} & DS Llama-8B & 48.73 & 50.45 & 41.33 &61.56 & 59.10 & 40.63 & 50.30 \\
     &  DS Qwen-7B & 46.87 & 55.85 & 60.44 & 68.89 & 56.87 &50.63 & 56.59 \\ 
      &  OpenThink-7B & 43.78 & 55.36 & 60.89 & 56.67 & 51.78 & 42.38 & 51.81   \\
       &  DS Qwen-14B & 48.51&50.73&52.38&61.11&55.18&53.79&53.62\\ \cmidrule(lr){2-9}
   \multirow{4}{*}{CDD~\citep{dong2024generalization}} &  DS Llama-8B & 51.84 & 53.83 & 60.00 & 53.11 & 58.08 & 57.50& 55.73 \\
    &  DS Qwen-7B &51.46 & 48.29 & 50.00 & 53.78 & 54.71 & 41.00 & 49.87 \\ 
      &  OpenThink-7B&49.98 & 50.23 & 53.31 & 51.24 & 54.52 & 50.44 & 51.62 \\ 
       &  DS Qwen-14B & 55.82&45.50&43.11&46.67&56.13&56.45&50.61\\
    \midrule
    
    \multicolumn{1}{@{}l}{\textbf{Perturbation based}}\\
    \multirow{4}{*}{Neighbor~\citep{mattern2023membership}} &  DS Llama-8B & 49.94 & 39.32 & 53.11 & 43.33 & 49.68 & 60.00 & 49.23   \\
        &  DS Qwen-7B & 52.99 & 40.29 & 62.44 & 49.33 & 55.34 & 54.87 & 52.54   \\ 
          &  OpenThink-7B & 53.76 & 42.95 & 34.00 & 42.22 & 52.89 & 51.50 & 46.22 \\ 
           &  DS Qwen-14B & 53.20&42.23&50.89&44.00&53.46&57.38&50.19 \\
    \midrule
    
    \multicolumn{1}{@{}l}{\textbf{Reference based}}\\
    \multirow{4}{*}{LiRA~\citep{mireshghallah2022empirical}} & DS Llama-8B & 57.92 & 53.01 & 53.56 & 75.33 & 69.44 & 58.75 & 61.34   \\
     &  DS Qwen-7B & 46.52 & 43.93 & 50.22 & 58.89 & 59.33 & 54.00 & 52.15   \\ 
       &  OpenThink-7B & 62.35 & 64.77 & 58.44 & 64.44 & 64.81 & 61.62 & 62.74 \\
        &  DS Qwen-14B & 59.93&55.23&75.56&66.00&66.55&70.00&65.55 \\ \cmidrule(lr){2-9}
   \multirow{4}{*}{Ref~\citep{carlini2021extracting}}  & DS Llama-8B & 53.79 & 46.50 & 46.44 & 64.00 & 63.57 & 51.25 & 54.26   \\
       &  DS Qwen-7B &  53.30 & 44.37 & 46.89 & 44.22 & 53.09 & 41.75 & 47.27   \\
        &  OpenThink-7B &  57.34 & 49.86 & 37.56 & 50.44 & 59.30 & 69.12 & 53.94 \\
         &  DS Qwen-14B & 55.75&47.55&52.67&30.89&55.51&53.75&49.35 \\
    \midrule
    
    \multicolumn{1}{@{}l}{\textbf{Reference free}}\\
    
    \multirow{4}{*}{Zlib~\citep{carlini2021extracting}} & DS Llama-8B & 49.52 & 54.74 & 64.22 & 37.11 & 45.97 & 47.12 & 49.78   \\
      &  DS Qwen-7B & 46.52 & 57.38 & 64.89 & 36.89 & 43.30 & 42.12 & 48.52   \\ 
       &  OpenThink-7B & 45.65 & 55.37 & 74.22 & 36.89 & 43.51 & 36.62 & 48.71 \\ 
        &  DS Qwen-14B & 48.12&56.71&70.44&43.56&45.92&51.50&52.71 \\ \cmidrule(lr){2-9}
    
    \multirow{4}{*}{Min\textendash K\%++~\citep{zhang2024min}} & DS Llama-8B & 55.45 & 59.10 & 45.95 & 70.22 & 60.89 & 57.50 & 58.19   \\
       &  DS Qwen-7B & 48.92 & 56.83 & 48.44 & 59.33 & 51.83 & 62.62 & 54.66   \\
        &  OpenThink-7B & 51.85 & 58.31 & 66.44 & 55.00 & 49.41 & 41.05 & 53.68 \\
         &  DS Qwen-14B & 52.44&56.72&48.44&76.44&57.39&59.62&58.51 \\ \cmidrule(lr){2-9}

    \multirow{4}{*}{Min\textendash K\%~\citep{shi2023detecting}} & DS Llama-8B & 57.86 & 61.68 & 53.33 & 72.67 & 67.12 & 61.87 & 62.42   \\
     &  DS Qwen-7B & 49.75 & 53.93 & 51.78 & 61.56 & 54.50 & 56.75 & 54.71   \\
      &  OpenThink-7B & 53.52 & 57.19 & 60.44 & 57.56 & 54.83 & 47.37 & 55.15 \\
       &  DS Qwen-14B & 52.77&58.08&52.44&77.33&59.43&59.62&59.95\\ \cmidrule(lr){2-9}
   \multirow{4}{*}{Max\textendash K\%~\citep{maini2024llm}} & DS Llama-8B & 53.85 & 55.96 & 50.67 & 60.44 & 59.22 & 52.50 & 55.44  \\ 
      &  DS Qwen-7B & 49.65 & 50.92 & 40.44 & 73.33 & 54.08 & 56.25 & 54.11   \\
       &  OpenThink-7B &  55.12 & 58.29 & 46.22 & 79.33 & 54.20 & 59.38 & 58.76 \\
        &  DS Qwen-14B & 50.43&53.89&50.00&50.00&51.08&52.50&51.32\\ \cmidrule(lr){2-9}
   \multirow{4}{*}{Loss~\citep{carlini2021extracting}} & DS Llama-8B & 57.91 & 61.78 & 52.89 & 73.56 & 67.00 & 62.38 & 62.59   \\ 
   &  DS Qwen-7B & 49.77 & 54.09 & 52.00 & 63.78 & 54.76 & 56.75 & 55.19   \\
    &  OpenThink-7B & 53.44 & 57.61 & 61.33 & 56.67 & 55.07 & 48.12 & 55.37 \\
     &  DS Qwen-14B & 52.81&58.39&52.89&77.56&59.37&60.37&60.23 \\
    \bottomrule
  \end{tabular}}
  \vspace{-10pt}
\end{table}



\textbf{Contamination Setup.} In this setup, we simulate contamination with CoT applied to advanced LRMs at the final stage of training. We use DeepSeek-R1-Distill-Llama-8B, DeepSeek-R1-Distill-Qwen-7B, \rebuttal{DeepSeek-R1-Distill-Qwen-14B}~\citep{guo2025deepseek}, and checkpoints from OpenThought3~\citep{guha2025openthoughts} as the initial models. 
We simulate extensive contamination with CoT by applying SFT exclusively on the member data in this section. Additional implementation details are provided in Appendix~\ref{app:implementation}.

Tab.~\ref{tab:lrm_sft_inflation} and~\ref{tab:deepseek_mia_auroc} show the results of pass@1 on six reasoning benchmarks and AUROC of detection approaches performance (w/ the same detection setup as Stage I), respectively. We observe that:

\textbf{Extensive SFT Contamination with CoT results in a huge performance inflation.} 
As shown in Tab.~\ref{tab:lrm_sft_inflation}, LRMs can substantially benefit from extensive contamination with CoT. 
Such inflation enables contaminated LRMs to artificially boost performance in benchmarks and have an overrated rank in the reasoning leaderboard with little extra training cost.

\textbf{Extensive contamination with CoT on LRMs barely leaves evidence.} As illustrated in Tab.~\ref{tab:deepseek_mia_auroc}, detection methods, which were effective in contamination introduced when the base model evolves into LRMs, consistently fail under extensive contamination with CoT to LRMs, performing close to random guessing. The previous SOTA approach, LiRA~\citep{mireshghallah2022empirical}, achieves only 58.74\% AUROC on average across six benchmarks. Then, we analyze the log prob of member and non-member samples before and after final stage contamination, shown as Fig.~\ref{fig:log_prob_main_r1}. After the extensive SFT contamination with CoT on members, the log prob of both members and non-members increases at a similar margin. This indicates that even without exposure to non-members, contaminated LRMs still have more confidence when responding to unseen samples that share similar distributions to training samples, which also explains why extensive contamination with CoT on LRMs barely leaves evidence. These results suggest that model developers could extensively contaminate their LRMs in the final stage while leaving little detectable evidence. 

\begin{figure*}[t]
  \centering
  \begin{subfigure}[t]{0.5\textwidth}
    \centering
    \includegraphics[width=\textwidth]{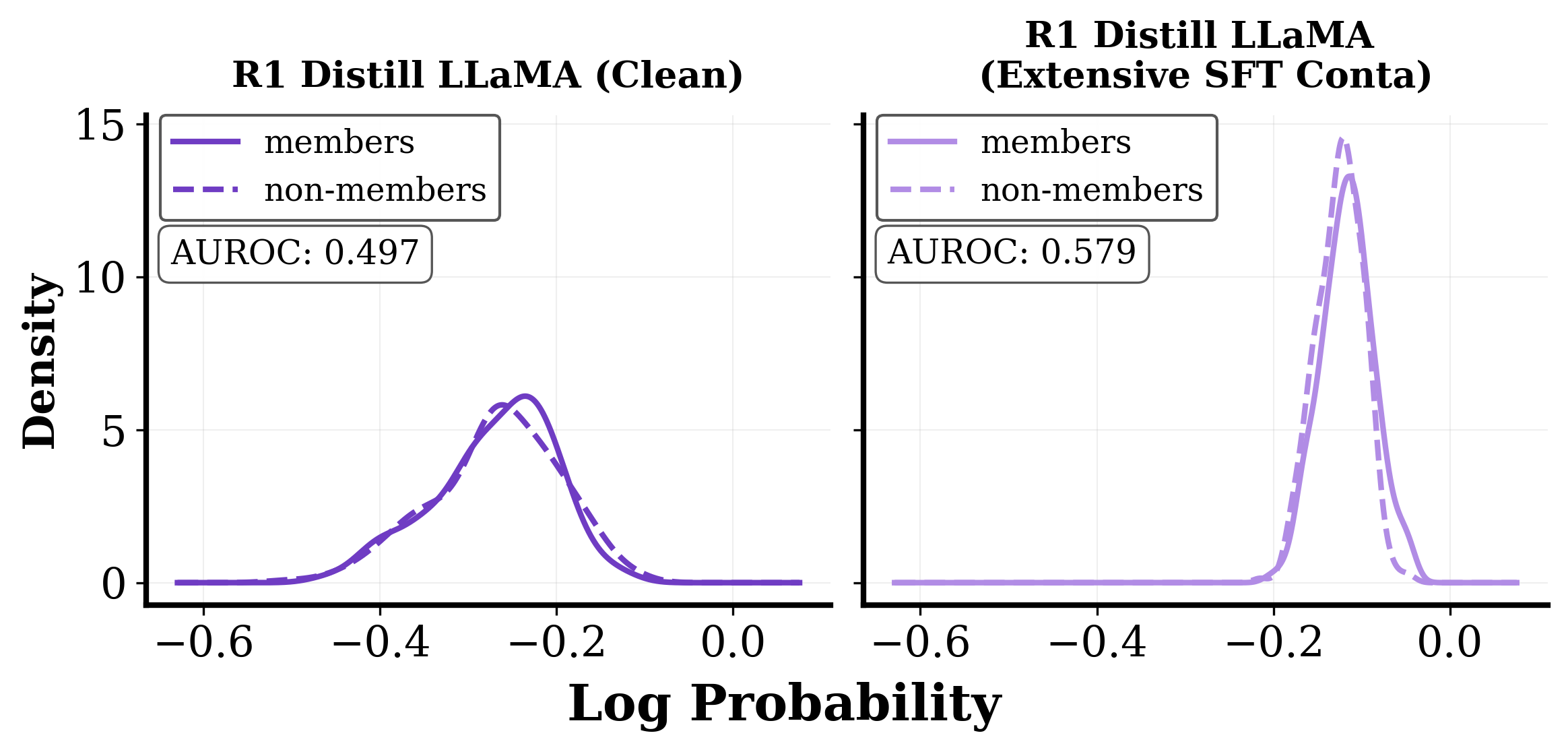}
    \caption{R1 Distill LLaMA results on OlympiadBench}
    
  \end{subfigure}\hfill
  \begin{subfigure}[t]{0.5\textwidth}
    \centering
    \includegraphics[width=\textwidth]{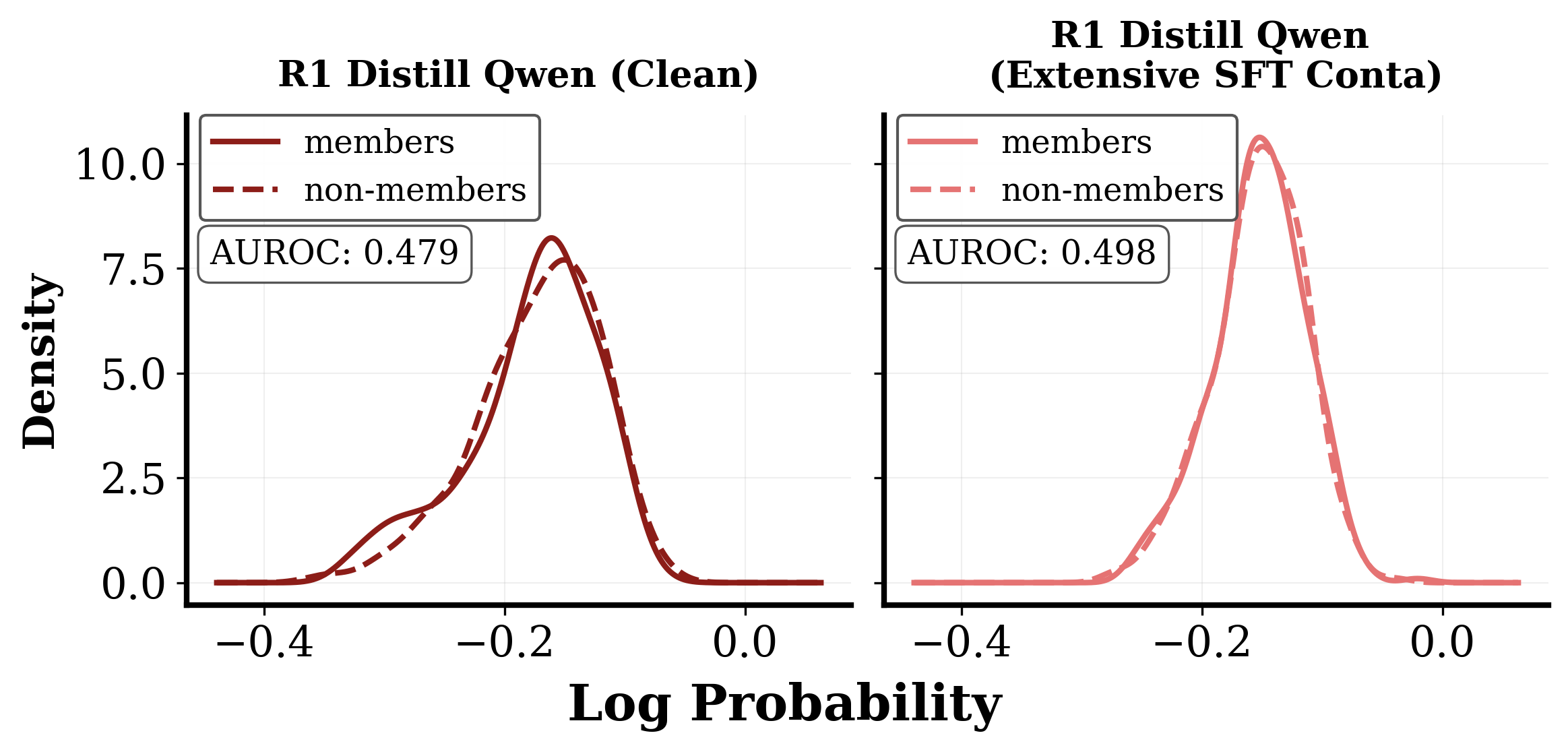}
    \caption{R1 Distill Qwen results on OlympiadBench}
    \vspace{-10pt}
  \end{subfigure}
  \caption{Log-prob distributions for members vs. non-members of \textbf{advanced LRMs before and after SFT contamination.} After extensive SFT contamination on members, the log prob of both members and non-members increases at a similar margin. More figures are in Fig.~\ref{fig:log_prob_r1_minerva_math} and~\ref{fig:log_prob_r1_gpqa}.}
  \vspace{-10pt}
  \label{fig:log_prob_main_r1}
\end{figure*}




\textbf{Discussion.} Despite~\cite{feng2024exposing} demonstrating that contamination detection could work in non-reasoning model scenarios, the detectors do not have access to the reasoning trajectories used in the LRM contamination scenario, so they have to rely on the generated responses from LRMs. However, LRMs typically possess strong reasoning abilities to output step-by-step long CoT and are difficult to converge on a specific sequence after contamination with long CoT. This may indicate that rather than memorizing specific reasoning trajectories, LRMs internalize the underlying knowledge and reasoning process during the contamination with CoT data, enabling generalization to distributionally similar questions (e.g., non-members). 
\rebuttal{While most detection methods rely on the assumption that contaminated models would achieve lower loss on training sequences~\citep{carlini2021extracting} or generate less diverse responses for seen questions~\citep{dong2024generalization} than for unseen ones. Accordingly, these methods rely on a gap in certain metrics (e.g., log-probability, Levenshtein distance, etc.) between trained and unseen samples to determine contamination. Nevertheless, these LRMs could also have lower loss when responding to those unseen samples that share similar distributions to the training set, benefiting from their long CoT ability, as shown in Fig.~\ref{fig:log_prob_main_r1}. This confounding factor (i.e, generalization) is not accounted for by existing detection approaches, challenging the assumption that benchmark data contamination is more about memorization~\citep{wu2025reasoning,morris2025much,hayes2025strong}.
}

\vspace{-5pt}
\section{Conclusion}\label{Conclusion}
\vspace{-5pt}

We present the first systematic study of benchmark contamination in LRMs, structured around two points where contamination can happen. Our results reveal a critical vulnerability in LRM evaluation: contamination detection methods are fragile and contamination introduced at either stage can be concealed. In Stage I (pre-LRM), while SFT contamination to the base model is initially detectable, contamination evidence can be concealed through subsequent RL training. In Stage II (post-LRM), extensive contamination with CoT on advanced LRMs barely leaves evidence for existing memorization-driven detection methods.
Our findings call for an urgent need of protocols that ensure fair evaluations among LRMs. Here, we propose two potential directions to ensure it: (\uppercase\expandafter{\romannumeral 1}) Model developers should release more intermediate training checkpoints, enabling the community to better monitor and regulate potential benchmark contamination in each training stage. (\uppercase\expandafter{\romannumeral 2}) Researchers working on contamination detections should advance beyond memorization-driven methods and explicitly account for the long CoT reasoning and generalization capacity of LRMs. Despite the assumption that contamination is about memorizing the training data inspires numerous detection methods before the LRM era, it may become outdated right now. Detection approaches that are solely based on log-probs or mitigation approaches such as minor benchmark modifications, are definitely inadequate in this context and risk systematically failing.
These findings all highlight the need for new assumptions in contamination detection and the development of contamination-robust evaluation protocols in the LRM setting.

\section*{Acknowledgements}

The authors thank Rui Yang and Yifan Sun for their helpful feedback. Huan Zhang is supported in part by the AI2050 program at Schmidt Sciences (AI2050 Early Career Fellowship).


\section*{Reproducible claim}

Our code is available at \url{https://github.com/ASTRAL-Group/LRM_Conta_Detection_Arena.git}. Detailed implementation of detection approaches, SFT training, and RL training can be found in appendix~\ref{app:detection_method} and ~\ref{app:implementation}. We also provide the proof of our theory in appendix~\ref{app:proof}.

\section*{Ethics statement}

We find a new vulnerability of LRM evaluations: contamination introduced at either stage can be concealed. Other than this, we do not have more ethics concerns.

\bibliography{iclr2025_conference}
\bibliographystyle{iclr2025_conference}

\appendix
\newpage
\appendix

\startcontents[app]

\clearpage
\phantomsection
\printcontents[app]{l}{1}{\section*{Appendix contents}}
\newpage

\section{Limitations}\label{app:limitations}

Our work reveals critical vulnerabilities: RL fine-tuning can conceal benchmark contamination when base models evolve into LRMs; contamination with CoT applied to advanced LRMs leaves little evidence detectable by existing memorization-based methods. Although we do not propose a new detection algorithm to mitigate these risks, we reveal that the failure of current detection approaches stems from their reliance on log-probability and on the assumption that training samples consistently incur lower loss than unseen samples. Given the unique characteristics of LRMs, future detection methods must adopt more advanced assumptions to address this fundamental challenge. By highlighting these risks, we aim to spur further research on robust and trustworthy evaluation protocols for LRMs.

\section{Algorithms} \label{app:algorithms}

\paragraph{SFT}
Let $\mathcal{D}=\{(q,o)\}$ be a corpus of questions $q$ and responses $o$, where $o$ are distilled from advanced LRMs.
Let $\pi_\theta(o\mid q)$ be an autoregressive policy model. The $\pi_\theta$ is then fine-tuned to maximize the log-likelihood over the responses:
\begin{align*}
\mathcal{L}_{\text{SFT}}(\theta)
&= \mathbb{E}_{(q,o)\sim\mathcal{D}}
\left[-\log \pi_{\theta}(o \mid q)\right]
\end{align*}

\paragraph{GRPO} For each question $q$, GRPO samples a group of outputs $\{o_1, \dots,o_G\}$ from the old policy $\pi_{\theta_{\text{old}}}$ and then optimizes the policy model $\pi_\theta$ by maximizing the following objective:
\begin{flalign*}
&\mathcal{L}_{\mathrm{GRPO}}(\theta)
= \mathbb{E}_{\,q\sim P(Q),\,\{o_i\}_{i=1}^{G}\sim \pi_{\theta_{\mathrm{old}}}(O\mid q)} \\
&\Big[\frac{1}{G}\sum_{i=1}^{G}
\frac{1}{|o_i|}\sum_{t=1}^{|o_i|}
\Big(
\min\!\Big(
\frac{\pi_\theta(o_i\mid q)}{\pi_{\theta_{\mathrm{old}}}(o_i\mid q)}\,A_i,\;
\operatorname{clip}\!\Big(
\frac{\pi_\theta(o_i\mid q)}{\pi_{\theta_{\mathrm{old}}}(o_i\mid q)},\,1-\varepsilon,\,1+\varepsilon
\Big)\,A_i
\Big)
-\beta\,\mathrm{D}_{\mathrm{KL}}\!\left(\pi_\theta\middle\Vert \pi_{\mathrm{ref}}\right)\Big)\Big],
\end{flalign*}
\begin{align*}
\mathrm{D}_{\mathrm{KL}}\!\left(\pi_\theta\middle\Vert \pi_{\mathrm{ref}}\right)
= \frac{\pi_{\mathrm{ref}}(o_i\mid q)}{\pi_\theta(o_i\mid q)}
-\log\!\frac{\pi_{\mathrm{ref}}(o_i\mid q)}{\pi_\theta(o_i\mid q)} -1,
\end{align*}
\noindent where $\varepsilon$ and $\beta$ are hyper-parameters. We denote $r \in \{0,1\}$ as a binary reward function that assigns scalar feedback to a question-output pair. $A_i$ is the advantage, computed using a group of rewards $\{r_1,r_2,\ldots,r_G\}$ corresponding to the outputs within each group:
\begin{align*}
A_i \;=\;
\frac{r_i-\operatorname{mean}(\{r_1,r_2,\ldots,r_G\})}
{\operatorname{std}(\{r_1,r_2,\ldots,r_G\})}. 
\end{align*}

\paragraph{RAFT} The RAFT is also referred to as the rejection-sampling fine-tuning. The algorithm consists of two parts:
\begin{itemize}[leftmargin=*]
    \item  \textbf{Dataset Collection.} For each question $q$, we sample a group of outputs $\{o_1, \cdots, o_G\}$. For each response $o_i$, we compute the reward $r_i \in \{0,1\}$. We then retain only the responses that receive a reward of $1$ and put them in a dataset $\mathcal{D}$.
    \item \textbf{Model Fine-tuning.} The current policy $\pi_\theta$ is then fine-tuned to maximize the log-likelihood over the selected dataset:
    \begin{align*}
        \mathcal{L}_{\text{RAFT}}(\theta) &= \mathbb{E}_{(q,o)\sim \mathcal{D}} [-\log \pi_\theta(o\mid q)]
    \end{align*}
\end{itemize}

\paragraph{RAFT++} RAFT++ is a variant of plain RAFT that also introduced importance sampling and clipping. The loss is very similar to GRPO with no KL divergence term except for the advantage calculation. The advantage for RAFT++ will be:
\begin{align*}
    A_{i} = \mathcal{I}(r_i = 1),
\end{align*}
which indicates essentially that we train only on the positive samples.


\section{Proof}\label{app:proof}
Here we restate our theorem~\ref{theorem:contraction}, and provide a full proof. 

\begin{theorem*}
For a small natural gradient step with step size $\eta$ on a PPO style loss, we have
\begin{align}
    \Delta_x 
    &= -\eta\,\mathbb{E}\!\left[\frac{1}{T}\sum_{t=1}^{T}\tilde{A}_t^{w}\,\middle|\, r=1, x\right]
+
\eta\, \operatorname{Cov}\!\left(\ell_k,\; \sum_{t=1}^{T}\tilde{A}_t^{w}\right)
+ O(\eta^{2}) \,
\end{align}
\end{theorem*}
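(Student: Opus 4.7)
The plan is to combine a first-order expansion of the per-example NLL with a change-of-measure correction for the fact that the conditioning event $\{r=1\}$ is evaluated under the \emph{new} policy $\pi_{k+1}$, not $\pi_k$. Concretely, I would proceed as follows. First, I would write the tabular natural-gradient update applied to the PPO-style surrogate so that, for any trajectory $y=(s_t,a_t)_{t=1}^T$ drawn from $\pi_k(\cdot\mid x)$,
\begin{align*}
\log\pi_{k+1}(y\mid x)-\log\pi_k(y\mid x) \;=\; \eta\sum_{t=1}^{T}\tilde A_t^{w}+O(\eta^{2}),
\end{align*}
because on the policy manifold the natural gradient of $\mathbb{E}[\sum_t w_t A_t]$ at $\theta_k$ equals the (centered) advantage $\tilde A^w$ at each $(s,a)$. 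Consequently $\ell_{k+1}(x,y)=\ell_k(x,y)-\eta\sum_t \tilde A_t^w+O(\eta^2)$ and the Radon--Nikodym weight satisfies $\rho_{k+1/k}(y)=1+\eta\sum_t\tilde A_t^{w}+O(\eta^{2})$.

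Second, I would rewrite the conditional expectation under $\pi_{k+1}$ by importance-sampling from $\pi_k$:
\begin{align*}
\mathbb{E}_{\pi_{k+1}}\!\left[\ell_{k+1}\mid r=1,x\right]
=\frac{\mathbb{E}_{\pi_k}\!\left[\rho_{k+1/k}\,\ell_{k+1}\,\mathbf{1}\{r=1\}\mid x\right]}
{\mathbb{E}_{\pi_k}\!\left[\rho_{k+1/k}\,\mathbf{1}\{r=1\}\mid x\right]}.
\end{align*}
Substituting the $O(\eta)$ expansions for $\rho_{k+1/k}$ and $\ell_{k+1}$ into numerator and denominator, collecting terms, and then expanding $1/(1+\eta z)=1-\eta z+O(\eta^{2})$ produces two first-order contributions: a mean shift $-\eta\,\mathbb{E}_{\pi_k}[\sum_t\tilde A_t^{w}\mid r=1,x]$ coming directly from the NLL drop, and a reweighting correction $\eta\big(\mathbb{E}_{\pi_k}[\ell_k\sum_t\tilde A_t^{w}\mid r=1,x]-\mathbb{E}_{\pi_k}[\ell_k\mid r=1,x]\,\mathbb{E}_{\pi_k}[\sum_t\tilde A_t^{w}\mid r=1,x]\big)$, which is exactly the covariance $\operatorname{Cov}(\ell_k,\sum_t\tilde A_t^{w})$ conditional on $(r=1,x)$. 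Subtracting $\mathbb{E}_{\pi_k}[\ell_k\mid r=1,x]$ yields the claimed expression for $\Delta_x$, up to the $1/T$ normalization inherited from the per-token form of the PPO objective.

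The main obstacle will be the change-of-measure step, since the conditioning set $\{r=1\}$ carries a nontrivial $\pi_{k+1}$-mass that is itself a $O(\eta)$ perturbation of $p_k(x)$; handling this requires being careful to keep the denominator expansion to first order and showing that the $O(\eta^2)$ cross-terms are uniformly bounded so that the remainder is genuinely $O(\eta^{2})$. A secondary subtlety is justifying the identification of the natural-gradient direction with $\tilde A^w$ once the PPO-style $\min/\operatorname{clip}$ objective is in play: at $\theta=\theta_k$ the importance ratio equals $1$ so the $\min$ is differentiable almost everywhere and the clip mask $m_t$ behaves as a fixed multiplier, which lets me absorb it into $w_t$ without additional error terms. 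Everything else reduces to routine Taylor bookkeeping.
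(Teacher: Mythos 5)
Your proposal is correct and follows essentially the same route as the paper's proof: a tabular natural-gradient step giving $\log\pi_{k+1}-\log\pi_k=\eta\sum_t\tilde A_t^w+O(\eta^2)$, the induced first-order NLL shift, and a change of measure from $\pi_{k+1}$ to $\pi_k$ on the event $\{r=1\}$ whose denominator expansion produces exactly the covariance term (the paper merely organizes the same algebra by adding and subtracting $\mathbb{E}_{\pi_{k+1}}[\ell_k\mid r=1,x]$). Your remarks on the $1/T$ normalization and on the clip mask acting as a fixed multiplier at $\rho_t=1$ match the paper's treatment.
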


\begin{proof}
Denote $\operatorname{clip}_\epsilon(\rho) = \min(\max(\rho, 1-\epsilon), 1+\epsilon)$. Consider the optimization problem
\begin{align}
\mathcal{L}(\theta) = \mathbb{E}\left[ \sum_t \min \left(\rho_t(\theta)A_t, \mathrm{clip}_\epsilon(\rho_t(\theta))A_t\right)\right]
\end{align}
The gradient of this objective is
\begin{align}
    \nabla_\theta \min\left(\rho_t(\theta)A_t, \mathrm{clip}_\epsilon(\rho_t(\theta))A_t\right) &= m_t A_t\rho_t\nabla_\theta \log \pi_\theta(y_t\mid s_t)
\end{align}
Let $z_{s,a}$ be the logit parameter for action $a$ at state $s$, so that
\(
\pi_\theta(a\mid s)=\mathrm{softmax}(z_{s,\cdot})_a.
\)
Then
\begin{align}
\frac{\partial}{\partial z_{s,a}} \log \pi_\theta(y_t\mid s_t)
= \mathbf{1}\{s_t=s\}\big(\mathbf{1}\{y_t=a\}-\pi_\theta(a\mid s)\big),
\end{align}
and thus
\begin{align}
\frac{\partial \mathcal{L}}{\partial z_{s,a}}
= \mathbb{E}\!\left[\sum_t \big(\rho_t\,m_t\,A_t\big)\big(\mathbf{1}\{y_t=a\}-\pi_\theta(a\mid s_t)\big)\right].
\end{align}
It is then clear that after one natural-gradient step,
\begin{align}
z_{s,a}^{k+1}-z_{s,a}^{k} \;=\; \eta\big(A^w(s,a)-\bar A^w(s)\big) = \eta \tilde{A}^w(s,a),
\label{eq:logit-update}
\end{align}
Using a first-order Taylor expansion of $\log \pi$ and \eqref{eq:logit-update},
\begin{align}
\log \pi_{k+1}(a\mid s)-\log \pi_k(a\mid s)
= \eta\tilde{A}^w(s,a)+ O(\eta^2).
\label{eq:log-prob-shift}
\end{align}
For a trajectory $y=(y_1,\dots,y_T)$,
\begin{align}
\log \frac{\pi_{k+1}(y\mid x)}{\pi_k(y\mid x)}
&= \sum_{t=1}^T \big(\log \pi_{k+1}(y_t\mid s_t)-\log \pi_k(y_t\mid s_t)\big)
= \eta\sum_{t=1}^T \tilde A^w(s_t,y_t) + O(\eta^2).
\end{align}
Define
\[
\Psi_x(y):=\sum_{t=1}^T \tilde A^w(s_t,y_t),
\quad
S_k^w(x,y):=\frac{1}{T}\sum_{t=1}^T \tilde A^w(s_t,y_t).
\]
Then, from $\ell_k(x,y)=-\frac{1}{T}\sum_t \log\pi_k(y_t\mid s_t)$ and \eqref{eq:log-prob-shift},
\begin{align}
\ell_{k+1}(x,y)-\ell_k(x,y)
= -\frac{1}{T}\sum_{t=1}^T\big(\log \pi_{k+1}-\log \pi_k\big)
= -\,\eta\,S_k^w(x,y) + O(\eta^2).
\label{eq:traj-loss-shift}
\end{align}
Moreover,
\begin{align}
R_x(y):=\frac{\pi_{k+1}(y\mid x)}{\pi_k(y\mid x)}
= \exp\!\big(\eta\,\Psi_x(y)+O(\eta^2)\big)
= 1+\eta\,\Psi_x(y)+O(\eta^2).
\label{eq:RN}
\end{align}
Add and subtract $\mathbb{E}_{\pi_{k+1}}[\ell_k\mid r{=}1,x]$ in $\Delta_x$, we have 
\begin{align}
\Delta_x
&= \underbrace{\mathbb{E}_{\pi_{k+1}}\!\big[\ell_{k+1}-\ell_k \mid r{=}1,x\big]}_{(A)}
\;+\; \underbrace{\Big(\mathbb{E}_{\pi_{k+1}}\![\ell_k\mid r{=}1,x]-\mathbb{E}_{\pi_k}\![\ell_k\mid r{=}1,x]\Big)}_{(B)}.
\end{align}
Using \eqref{eq:traj-loss-shift} and a change of measure with $R_x$,
\begin{align*}
(A)
&= \frac{\mathbb{E}_{\pi_k}\!\big[(\ell_{k+1}-\ell_k)\,R_x \,\big|\, r{=}1,x\big]}
         {\mathbb{E}_{\pi_k}\!\big[R_x \,\big|\, r{=}1,x\big]} \\
&= \frac{\mathbb{E}_{\pi_k}\!\big[(-\eta S_k^w + O(\eta^2))\,(1+\eta\Psi_x+O(\eta^2)) \,\big|\, r{=}1,x\big]}
         {1+\eta\,\mathbb{E}_{\pi_k}[\Psi_x\mid r{=}1,x]+O(\eta^2)} \nonumber\\
&= -\,\eta\mathbb{E}_{\pi_k}\!\big[S_k^w \mid r{=}1,x\big] + O(\eta^2)\\
&= -\,\eta\mathbb{E}_{\pi_k}\!\left[\frac{1}{T}\sum_{t=1}^T \tilde A_t^w \,\middle|\, r{=}1,x\right].
\end{align*}
where the third equality is simply a first order expansion of division. For term (B), we can compute for any integrable $f$,
\[
\mathbb{E}_{\pi_{k+1}}[f\mid r{=}1,x]
= \frac{\mathbb{E}_{\pi_k}[f\,R_x\mid r{=}1,x]}{\mathbb{E}_{\pi_k}[R_x\mid r{=}1,x]}
= \mathbb{E}_{\pi_k}[f\mid r{=}1,x] +\eta\operatorname{Cov}_{\pi_k(\cdot\mid r{=}1,x)}\!\big(f,\Psi_x\big) + O(\eta^2),
\]
by expanding numerator and denominator using \eqref{eq:RN}.
Taking $f=\ell_k$ gives
\begin{align}
(B)
= \eta\operatorname{Cov}_{\pi_k(\cdot\mid r{=}1,x)}\!\big(\ell_k,\Psi_x\big) + O(\eta^2)
= \eta\operatorname{Cov}_{\pi_k(\cdot\mid r{=}1,x)}\!\left(\ell_k,\;\sum_{t=1}^T \tilde A_t^w\right).
\end{align}
Combining the two part we get our theorem. 
\end{proof}

\newpage
\section{Experiment Setup}

\subsection{Contamination Pipelines} \label{app:conta_pipeline}



\paragraph{SFT Contamination} We randomly select 10K samples from OpenThoughts3~\citep{guha2025openthoughts} to form the clean SFT training set, following the same ratio for each domain as the original paper, to obtain the best results. For the SFT contaminated training set, we use QwQ-32B \citep{qwq32b} as an advanced LRM to help distillation for the member set. We adopt the temperature 0.6, top-p value of 0.95, and maximum output token with 32768 for the distillation. We use rejection sampling with 64 rollouts, selecting correct trajectories when available. If none of the 64 rollouts produce a correct answer, we randomly select an incorrect trajectory. We replicate the question in the member set with responses 3 times to create the SFT contamination training set. So the training set consists of 11866 samples. \rebuttal{For the Llama-3.1-8B-Instruct, we randomly select 30K samples from OpenThoughts3~\citep{guha2025openthoughts} to form the clean SFT training set, and duplicate memberships 9 times to form the SFT contamination training set.}

\begin{table}[h]
\centering
\footnotesize
\setlength{\tabcolsep}{5pt}
\caption[a]{Proportion of questions solved after up to 64 rollouts with QwQ-32B.}
\label{tab:qwq_32b_acc}
\begin{tabular}{ccccccc}
\toprule
Olypaidbench & GPQA-Diamond & AIME2025 & AIME2024 & Minerva Math & AMC2023 & Avg.  \\ \midrule
80.59 & 92.42 & 93.33 & 93.33 & 56.62 & 100.00 & 86.05 \\
\bottomrule
\end{tabular}%
\end{table}

\paragraph{RL Contamination}
For RL contamination, we replicate the questions in the member set 2 times, and randomly select 4096 samples from DeepMath-103K~\citep{he2025deepmath} as the clean RL training set. We choose GRPO as the RL algorithm and train the model with one epoch. 

\paragraph{Base Model Selection}
We choose Qwen2.5-7B-Instruct~\citep{team2024qwen2} \rebuttal{and Llama-3.1-8B-Instruct~\cite{dubey2024llama}} as the base model, and first train it with SFT and then RL. 




\subsection{Contamination Detection Methods} \label{app:detection_method}

\paragraph{Setup} For input question $q$, response $o$, and model $\pi_\theta$, we define the detection score as $f(q,o,\pi_\theta)$. We treat the contaminated benchmark as the member set and the remaining uncontaminated half as the non-member set.AUROC is computed from the detection scores between the member and non-member set within a benchmark. We define the average log probability of model $\pi_\theta$ generating the response $o$ given the question $q$ as:
\[
\phi(q,o,\pi_\theta) \;=\; \frac{1}{|o|}\log \pi_\theta\!\left(o \mid q\right).
\]

We provide experiments to illustrate why we choose the log probability on responses for most detection methods in appendix~\ref{app:auroc_question}, and assume that if models have seen the questions during the training, they would have more confidence during the generation, thus have higher log probabilities. 

\subsubsection{Generation-based Detection}

\paragraph{Verbatim~\citep{wu2025reasoning}} Verbatim-based approach~\citep{wu2025reasoning} prompts the model to complete the remaining parts of a question based on partial prefixes. ~\citep{wu2025reasoning} uses the partial-prompt completion rate measured by ROUGE-L~\citep{lin2004rouge}, which calculates the overlap of the longest common subsequence between the generated and reference text. If the model memorizes the question during the training, the partial-prompt completion rate would be higher compared to unseen questions. We use 80\% of the original problem to generate a partial completion. Using a lower ratio causes the LRM to answer the question directly instead of continuing the given sequence.

\paragraph{CDD~\citep{dong2024generalization}} CDD measures the variation of the generated responses, and assumes that if the responses share strong similarities, the question is more likely to be contaminated during the training.

\[
f(q,o,\pi_\theta)=\sum_{d=0}^{\alpha · l}\rho^*(d)=\sum_{d=0}^{\alpha · l} \frac{\sum_{i=1}^{|G|}\mathbb{I} (\mathrm{ED}(o_i, o_{\text{temperature=0}}) = d)}{|G|}
\]

where ED represents Edit distance~\citep{lcvenshtcin1966binary}. We quantify the peakedness of the edit-distance distribution $\rho^{*}(d)$ by its cumulative mass within a similarity window, i.e., $F(d \le \alpha l)$, where $\alpha \in [0,1]$ and $l=\max\{\operatorname{Len}(o)\mid o\in O\}$. In our experiments, we set $\alpha=0.5$, which performed better than other choices.

\subsubsection{Perturbation-based Detection}

\paragraph{Neighborhood \citep{mattern2023membership}}  Neighborhood method calibrates the detection score $\phi(q,o,\pi_\theta)$ with some unseen questions $q'$ that share similar semantics to the original question $q$. We denote GPT-4o as $\mathcal{N}$ to augment the question $q$ and get $q' \in \mathcal{N}(q)$. We use a total of five augmented samples to compute the detection score:
\[
f(q,o,\pi_\theta)=\phi(q,o,\pi_\theta)-\phi_{\mathrm{neighbor}}(q',o,\pi_\theta),
\qquad
\phi_{\mathrm{neighbor}}(q',o,\pi_\theta)\;=\;\frac{1}{|\mathcal{N}(q)|}\sum_{q'\in\mathcal{N}(q)}\phi(q',o,\pi_\theta).
\]

\subsubsection{Reference-based Detection}

The detector assumes that they could access to the training distribution of the target model $\pi_\theta$, and have a reference model $\pi_\text{ref}$ to calibrate the detection scores. We choose bespokelabs/Bespoke-Stratos-7B~\citep{tang2024bespokeminicheck} as the reference model $\pi_\text{ref}$ in all the experiments.

\paragraph{LiRA~\citep{mireshghallah2022empirical}} LiRA calibrates detection score by dividing. 
\[
f(q,o,\pi_\theta)=\frac{\phi(q,o,\pi_\theta)}{\phi(q,o,\pi_\text{ref})}.
\]

\paragraph{Ref~\citep{carlini2021extracting}} Ref calibrates detection score by subtraction. 
\[
f(q,o,\pi_\theta)=\phi(q,o,\pi_\theta)-\phi(q,o,\pi_\text{ref}).
\]

\subsubsection{Reference-free Detection}

\paragraph{LOSS~\citep{carlini2021extracting}} The detection score is as below:
\[
f(q,o,\pi_\theta)=\phi(q,o,\pi_\theta).
\]

\paragraph{Zlib~\citep{carlini2021extracting}} Zlib calibrates the detection score with $\mathrm{Zlib}(o)$, which is a compression-based entropy/length proxy.
\[
f(q,o,\pi_\theta)=\frac{\phi(q,o,\pi_\theta)}{\mathrm{Zlib}(o)}.
\]

\paragraph{Min-K\%~\citep{shi2023detecting}} Min-k\% compute the detection score on k\% tokens with lowest probabilities in the sequence. Following ~\citep{shi2023detecting}, we choose k as 20 by default in all the experiments.

\[
f(q,o,\pi_\theta)=\frac{1}{|\minK(o)|}\sum_{i\in \minK(o)}\Big[\log \pi_\theta\!\left(o_i\mid q, o_{<i}\right)\Big].
\]

\paragraph{Min-K++\% \citep{zhang2024min}} Min-k\%++ standardizes the log probability before taking Min-K\%. $\mu_i$ is the expectation of the next token's log probbaility over the vocabulary $\mathcal{V}$ of the model $\pi_\theta$ given the prefix $q,o_{<i}$, and $\sigma_i$ is the standard deviation. We use k=20 by default.

\[
f(q,o,\pi_\theta)=\frac{1}{|\minK(o)|}\sum_{i\in \minK(o)}
\frac{\log \pi_\theta\!\left(o_i\mid q,o_{<i}\right)-\mu_i}{\sigma_i},
\]
\[
\mu_i=\mathbb{E}_{v\in\mathcal{V}}\!\left[\log \pi_\theta \!\left(v\mid q,o_{<i}\right)\right],
\quad
\sigma_i=\mathrm{Std}_{v\in\mathcal{V}}\!\left[\log \pi_\theta\!\left(v\mid q,o_{<i}\right)\right].
\]

\paragraph{Max-K\% \citep{maini2024llm}}  Max-k\% compute the detection score on k\% tokens with largest probabilities in the sequence. We use k=20 by default.
\[
f(q,o,\pi_\theta)=\frac{1}{|\maxK(o)|}\sum_{i\in \maxK(o)}\Big[\log \pi_\theta\!\left(o_i\mid q, o_{<i}\right)\Big].
\]

\subsection{Datasets} \label{app:datasets}

\paragraph{AIME 2024 \& 2025} Olympiad-level mathematical-reasoning benchmarks consisting of 30 problems each from the American Invitational Mathematics Examination (AIME) 2024 and 2025.

\paragraph{AMC 2023}
A high school level math benchmark consisting of 40 problems from the 2023 American Mathematics Competitions (AMC).

\paragraph{GPQA Diamond \citep{rein2024gpqa}}
Graduate level scientific-reasoning, multiple-choice benchmark written by domain experts in biology, physics, and chemistry. The \emph{Diamond} split is the hardest subset, with 198 questions retained only when expert annotators agreed, and non-expert baselines typically fail.

\paragraph{OlympiadBench \citep{he2024olympiadbench}}
An Olympiad level, bilingual, multimodal benchmark designed to test scientific reasoning in mathematics and physics. We use the English, text-only math subset consisting of 674 competition problems.

\paragraph{Minerva Math \citep{lewkowycz2022solving}}
A challenging quantitative reasoning benchmark derived from Google’s Minerva work, consisting of 272 problems.

\subsection{Implementation Details} \label{app:implementation}

\paragraph{SFT Implementation}
We use the LLaMA-Factory~\citep{zheng2024llamafactory} implementation for our SFT experiments. By default, we adopt the SFT hyperparameters suggested by OpenThought3~\citep{guha2025openthoughts} for medium dataset scales for the scenario that the base model evolves into LRMs, as shown below. To improve training efficiency, we employ FlashAttention-2~\citep{dao2023flashattention}, DeepSpeed ZeRO-1~\citep{rasley2020deepspeed}, Liger kernels~\citep{hsu2024liger}, and asynchronous activation offloading~\citep{unsloth}.

\resizebox{\linewidth}{!}{%
  \begin{tabular}{ccccccccc}
    \toprule
    Training type & Batch Size & Context Length & LR & Epochs & LR Scheduler & Warmup Ratio & Weight Decay & Training Precision \\
    \midrule
    Full & 128 & 32,768 & 4e-5 & 5 & cosine & 0.1 & 0 & bf16 \\
    \bottomrule
  \end{tabular}
}

For the extensive contamination to LRMs in the final stage, we use the hyperparameters as follows:
\resizebox{\linewidth}{!}{%
  \begin{tabular}{ccccccccc}
    \toprule
    Training type & Batch Size & Context Length & LR & Epochs & LR Scheduler & Warmup Ratio & Weight Decay & Training Precision \\
    \midrule
    Full & 32 & 32,768 & 1e-5 & 7 & cosine & 0.1 & 0 & bf16 \\
    \bottomrule
  \end{tabular}
}

\paragraph{RL Implementation}
We use Verl~\citep{sheng2024hybridflow} implementation for our RL experiments. For RAFT and RAFT++, we follow the implementation of~\citep{xiong2025minimalist}. For GRPO, we follow DAPO~\citep{yu2025dapo} and do not introduce the KL term in our training in all the experiments. The detailed hyperparameters are as follows: 
\resizebox{\linewidth}{!}{%
  \begin{tabular}{ccccccccccc}
    \toprule
    Training type & Batch Size & Prompt Length & Response Length & $\epsilon$ & LR & Epochs & Rollout Num & Rollout Temp & Training Precision \\
    \midrule
    Full & 64 & 1,024 & 16,384 & 0.2 & 1e-6 & 1 & 4 & 0.6 & bf16 \\
    \bottomrule
  \end{tabular}
}

\paragraph{Prompt template}
We use a prompt template to enable long CoT during train on both SFT and RL. We use math template to AIME2024 \& 2025, AMC2023, OlympiadBench \citep{he2024olympiadbench}, and Minervamath \citep{lewkowycz2022solving}. We adapt multiple-choice template to GPQA Diamond \citep{rein2024gpqa}.

\paragraph{Evaluation and Metric}

We evaluate pass@1 and run 10 rollouts on AIME 2024 \& 2025, AMC 2023, and 3 rollouts on OlympiadBench, GPQA Diamond, and Minerva Math to compute the pass@1. We use vLLM~\citep{kwon2023efficient} for the inference. All inference uses the same configurations: temperature=0.6, top\_p=0.95, max\_new\_tokens=32,768.

\begin{tcolorbox}[
  float*=!h, enhanced, breakable,
  width=\linewidth,
  colback=MorandiLighterBlue,
  colframe=MorandiLightBlue,
  colbacktitle=MorandiLightBlue!70!black,
  title={Reasoning Template for Math},
  coltitle=black, center title,
  boxsep=1mm, left=3mm, right=3mm, top=2mm, bottom=2mm,
  toptitle=2mm, bottomtitle=2mm
]
\small\ttfamily
\{question\}\textbackslash nPlease reason step by step, and put your final answer within \textbackslash boxed\{\}.
\par
\rmfamily
\tcbline
\textbf{Example.} \ttfamily Alice and Bob play the following game. A stack of $n$ tokens lies before them. The players take turns with Alice going first. On each turn, the player removes either $1$ token or $4$ tokens from the stack. Whoever removes the last token wins. Find the number of positive integers $n$ less than or equal to $2024$ for which there exists a strategy for Bob that guarantees that Bob will win the game regardless of Alice's play.\textbackslash nPlease reason step by step, and put your final answer within \textbackslash boxed\{\}.
\end{tcolorbox} 

\begin{tcolorbox}[
  float*=!h, enhanced, breakable,
  width=\linewidth,
  colback=MorandiLighterBlue,
  colframe=MorandiLightBlue,
  colbacktitle=MorandiLightBlue!70!black,
  title={Reasoning Template for Multiple Choice Question},
  coltitle=black, center title,
  boxsep=1mm, left=3mm, right=3mm, top=2mm, bottom=2mm,
  toptitle=2mm, bottomtitle=2mm
]
\small\ttfamily
Return your final response within \textbackslash boxed\{\} and only include the letter choice (A, B, C, or D) as your final response. \{question\}\{options\}
\par
\tcbline
\rmfamily
\textbf{Example.} \ttfamily  Return your final response within \textbackslash boxed\{\} and only include the letter choice (A, B, C, or D) as your final response. trans-cinnamaldehyde was treated with methylmagnesium bromide, forming product 1.
\\
1 was treated with pyridinium chlorochromate, forming product 2.
\\
3 was treated with (dimethyl(oxo)-l6-sulfaneylidene)methane in DMSO at elevated temperature, forming product 3.
\\
how many carbon atoms are there in product 3? A) 11, B) 10, C) 12, D) 14
\end{tcolorbox} 

\paragraph{Deduplication}

We deduplicate our clean training datasets for both RL and SFT against the evaluation benchmarks using 13-gram overlap deduplication to ensure conclusive results.

\section{More Experiment Results}

\subsection{More results of Llama-3.1-8B-Instruct (Stage I: pre-LRM)}
\begin{table}[H]
  \centering
  \setlength{\tabcolsep}{6pt}
  \caption{\rebuttal{\textbf{AUROC (\%)} of contamination detection approaches evaluated starting from an \textbf{SFT-contaminated model w/o RL to subsequently trained with GRPO}. Results demonstrate that after GRPO, AUROC decreases across all the benchmarks and detection approaches. $\Delta$ measures the difference with the SFT-contaminated model w/o RL. Higher AUROC, better detection performance. Each AUROC is averaged over detection scores from 8 rollouts. The base model is Llama3.1-8B-Instruct.}}
  \label{tab:sft_cont_mia_llama_auroc}
  \resizebox{\linewidth}{!}{
  \begin{tabular}{l|c|cccccc|cc}
    \toprule
    Contamination Detection Methods & 
    Training Stages & Olympiad & GPQA & AIME25 & AIME24 & Minerva & AMC23 & Avg. & $\Delta$ \\
    \midrule
    \multicolumn{1}{@{}l}{\textbf{Generation based}} \\ 

  \multirow{3}{*}{Verbatim~\citep{wu2025reasoning}}
  & Before RL        & 48.67 & 65.36 & 51.78 & 60.00 & 56.84 & 52.62 & 55.88 & \deltab{+0.00} \\
  & RL w/ Clean      & 49.92 & 50.14 & 57.78 & 58.22 & 57.21 & 57.38 & 55.11 & \deltab{-0.77} \\
  & RL w/ Clean\&Mem & 48.42 & 48.21 & 54.89 & 56.22 & 57.83 & 58.00 & 53.93 & \deltab{-1.95} \\
  \cmidrule(lr){2-10}

    \multirow{3}{*}{CDD~\citep{dong2024generalization}} &  Before RL &  56.72 & 54.55 & 53.33 & 55.11 & 60.03 & 66.75 & 57.75 & \deltab{+0.00} \\
      & RL w/ Clean & 53.14 & 55.69 & 60.00 & 46.22 & 56.31 & 64.00 & 55.89 & \deltab{-1.86} \\ 
      & RL w/ Clean\&Mem & 57.24 & 49.31 & 53.11 & 43.78 & 57.52 & 63.50 & 54.08 & \deltab{-3.67} \\
    \midrule
    
    \multicolumn{1}{@{}l}{\textbf{Perturbation based}}\\ 
    \multirow{3}{*}{Neighbor~\citep{mattern2023membership}} & Before RL & 51.64 & 40.46 & 62.44 & 41.11 & 53.86 & 63.62 & 52.19 & \deltab{+0.00} \\
      & RL w/ Clean & 52.48 & 37.07 & 55.78 & 39.33 & 50.32 & 65.00 & 50.00 &  \deltab{-2.19} \\
      & RL w/ Clean\&Mem & 52.34 & 39.76 & 49.78 & 37.11 & 54.44 & 70.25 & 50.61 & \deltab{-1.58}  \\
    \midrule
    
    \multicolumn{1}{@{}l}{\textbf{Reference based}}\\ 
    \multirow{3}{*}{LiRA~\citep{mireshghallah2022empirical}}  & Before RL & 80.50 & 70.86 & 98.22 & 93.78 & 81.54 & 98.87 & 87.30 & \deltab{+0.00} \\
     & RL w/ Clean & 68.95 & 65.57 & 51.33 & 41.11 & 73.99 & 80.75 & 63.62 & \deltab{-23.68} \\
      & RL w/ Clean\&Mem & 68.72 & 65.74 & 90.44 & 89.11 & 72.32 & 94.12 & 80.08 & \deltab{-7.22} \\
    \cmidrule(lr){2-10}
    
    \multirow{3}{*}{Ref~\citep{carlini2021extracting}}  & Before RL & 60.21 & 48.62 & 66.00 & 40.89 & 60.47 & 83.50 & 59.95 & \deltab{+0.00}  \\
     &  RL w/ Clean & 55.27 & 47.41 & 40.67 & 32.44 & 52.31 & 53.25 & 46.89 & \deltab{-13.06} \\ 
     & RL w/ Clean\&Mem & 55.49 & 50.37 & 46.00 & 31.11 & 52.98 & 62.50 & 49.74 & \deltab{-10.21} \\
    \midrule
    
    \multicolumn{1}{@{}l}{\textbf{Reference free}}\\
    \multirow{3}{*}{Zlib~\citep{carlini2021extracting}} & Before RL & 49.46 & 56.21 & 88.44 & 72.00 & 47.04 & 60.25 & 62.23 & \deltab{+0.00} \\
      & RL w/ Clean & 46.53 & 56.37 & 66.67 & 50.00 & 45.91 & 41.38 & 51.14 & \deltab{-11.09}\\ 
      & RL w/ Clean\&Mem & 47.29 & 52.29 & 66.22 & 60.89 & 45.63 & 46.00 & 53.05 & \deltab{-9.18} \\
    \cmidrule(lr){2-10}
    
    \multirow{3}{*}{Min\textendash K\%++~\citep{zhang2024min}}  & Before RL & 48.88 & 49.33 & 41.43 & 39.11 & 51.67 & 30.63 & 43.51 & \deltab{+0.00} \\
     & RL w/ Clean & 46.67 & 47.61 & 29.33 & 24.22 & 59.85 & 34.38 & 40.34 & \deltab{-3.17} \\
      & RL w/ Clean\&Mem & 48.08 & 50.65 & 36.22 & 28.44 & 59.27 & 46.63 & 44.88 & \deltab{+1.37}\\
    \cmidrule(lr){2-10}
    
    \multirow{3}{*}{Min\textendash K\%~\citep{shi2023detecting}} & Before RL & 68.90 & 72.17 & 98.22 & 99.56 & 76.16 & 94.50 & 84.92 & \deltab{+0.00} \\
     & RL w/ Clean & 59.11 & 69.85 & 60.44 & 67.56 & 73.89 & 75.87 & 67.79 & \deltab{-17.13} \\
      &  RL w/ Clean\&Mem & 60.41 & 63.87 & 87.33 & 89.56 & 73.57 & 84.50 & 76.54 & \deltab{-8.38} \\
    \cmidrule(lr){2-10}
    
    \multirow{3}{*}{Max\textendash K\%~\citep{maini2024llm}}  & Before RL  & 63.60 & 61.58 & 92.44 & 89.33 & 69.12 & 91.50 & 77.93 & \deltab{+0.00} \\
    & RL w/ Clean & 50.59 & 52.50 & 50.00 & 50.00 & 50.37 & 52.50 & 50.99 & \deltab{-26.94} \\
     &  RL w/ Clean\&Mem & 50.91 & 52.24 & 49.11 & 49.11 & 54.86 & 61.75 & 53.00 & \deltab{-24.93} \\
   
    \addlinespace[2pt]
    \cmidrule(lr){2-10}
    \multirow{3}{*}{Loss~\citep{carlini2021extracting}}  & Before RL & 69.44 & 72.88 & 98.67 & 99.56 & 76.91 & 94.50 & 85.33 & \deltab{+0.00} \\
      &  RL w/ Clean & 59.18 & 69.84 & 60.00 & 68.00 & 74.02 & 76.75 & 67.97 & \deltab{-17.36} \\
      & RL w/ Clean\&Mem & 60.37 & 63.92 & 89.33 & 90.44 & 74.18 & 84.62 & 77.14 & \deltab{-8.19} \\
    
    \bottomrule
  \end{tabular}}
\end{table}

\begin{figure}[htbp]
  \centering
  \includegraphics[width=1.0\linewidth]{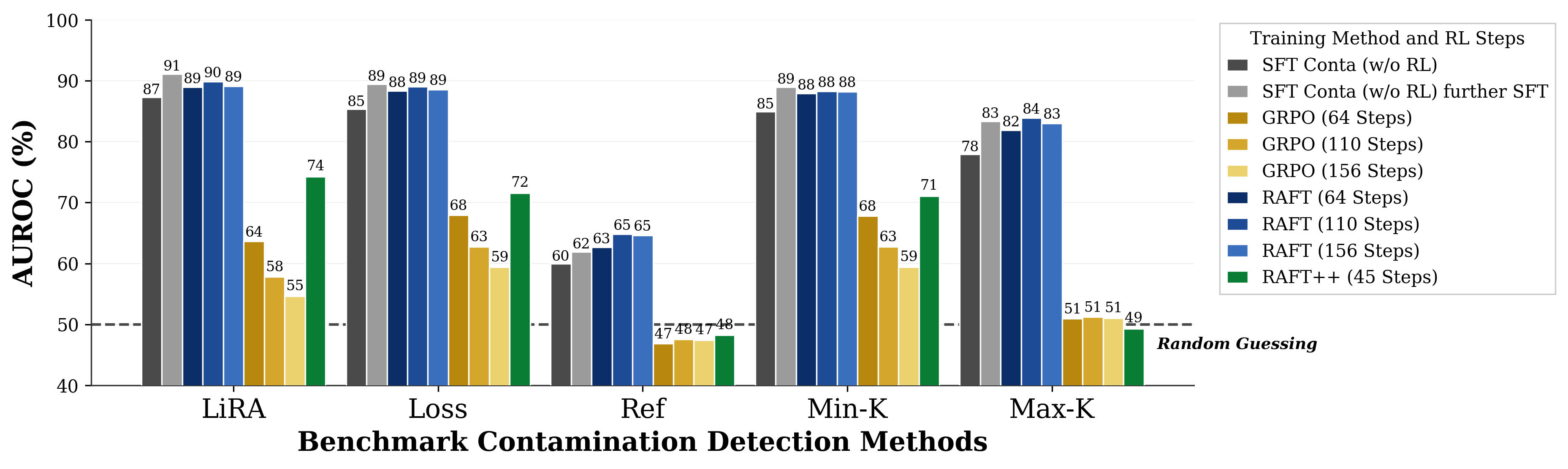}
  \caption{\rebuttal{\textbf{AUROC (\%) trends on SFT contaminated model further trained with different objectives.} While contamination introduced through SFT is initially detectable by existing methods, subsequent RL training with clean samples (e.g., GRPO or RAFT++) consistently degrades detection performance. Moreover, we observe a monotonic decline in detection performance as the number of RL steps increases, and reference-free methods (e.g., Loss, Min-K, and Max-K) already fall into near random guesses (i.e., AUROC$\approx$50\%) simply after 156 steps. The base model is Llama3.1-8B-Instruct.}
  }
  \label{fig:trend_llama}
\end{figure}



\newpage
\subsection{AUROC on Question, Question+Response, Thinking Process, and Non-thinking Process (Stage I: pre-LRM)} \label{app:auroc_question}

We compare the AUROC computed on response tokens with computed on the question tokens, whole tokens, thinking tokens, and non-thinking tokens in the response. As shown in Tab.~\ref{tab:sft_cont_mia_auroc_question} and ~\ref{tab:sft_cont_mia_auroc_question_response}, none of the approach outperform the AUROC computed on the response tokens. Thus, we choose to compute all the detection scores on the response if applicable. \rebuttal{The base model here is Qwen2.5-7B-Instruct.}

\begin{table}[htbp]
  \centering
  \setlength{\tabcolsep}{6pt}
  \caption{AUROC (\%) of contamination detection approaches using \textbf{question tokens} to compute the detection score, evaluated on the SFT-contaminated model w/o RL. $\Delta$ measures the difference with the \textbf{reponse tokens} as signal (Tab.~\ref{tab:sft_cont_mia_auroc}). Results demonstrate that question tokens are not suitable for detection in the LRM contamination setting, compared with using response tokens.}
  \label{tab:sft_cont_mia_auroc_question}
  \resizebox{\linewidth}{!}{
  \begin{tabular}{l|cccccc|cc}
    \toprule
    Contamination Detection Methods & Olympiad & GPQA & AIME25 & AIME24 & Minerva & AMC23 & Avg. & $\Delta$ \\
    \midrule
    \multicolumn{9}{@{}l}{\textbf{Generation base}}\\
    Verbatim~\citep{wu2025reasoning} & 47.58 & 49.86 & 47.56 & 53.56 & 52.52 & 65.50 & 52.76 & \deltab{+0.00} \\
    \midrule
    \multicolumn{9}{@{}l}{\textbf{Perturbation base}}\\
    Neighbor~\citep{mattern2023membership}  & 47.04 & 56.10 & 44.22 & 32.44 & 49.99 & 54.62 & 47.40 & \deltab{-3.31} \\
    \midrule
    \multicolumn{9}{@{}l}{\textbf{Reference base}}\\
    LiRA~\citep{mireshghallah2022empirical} & 49.78 & 43.12 & 42.22 & 46.00 & 48.77 & 55.00 & 47.48 & \deltab{-41.65} \\
    Ref~\citep{carlini2021extracting}       & 49.80 & 48.47 & 40.22 & 47.56 & 49.38 & 53.88 & 48.22 & \deltab{-17.28} \\
    \midrule
    \multicolumn{9}{@{}l}{\textbf{Reference free}}\\
    Zlib~\citep{carlini2021extracting}      & 48.15 & 50.78 & 60.00 & 40.89 & 54.23 & 52.75 & 51.13 & \deltab{-2.25} \\
    Min\textendash K\%++~\citep{zhang2024min}  & 44.33 & 50.53 & 32.00 & 36.89 & 49.67 & 34.25 & 41.28 & \deltab{-8.33} \\
    Min\textendash K\%~\citep{shi2023detecting} & 46.68 & 51.91 & 53.33 & 43.56 & 54.13 & 44.38 & 49.00 & \deltab{-15.96} \\
    Max\textendash K\%~\citep{maini2024llm} & 49.39 & 54.58 & 71.56 & 55.33 & 54.98 & 66.25 & 58.68 & \deltab{-11.15} \\
    Loss~\citep{carlini2021extracting}      & 47.38 & 54.03 & 58.44 & 46.00 & 55.61 & 52.00 & 52.24 & \deltab{-23.24} \\
    \bottomrule
  \end{tabular}}
\end{table}

\begin{table}[htbp]
  \centering
  \setlength{\tabcolsep}{6pt}
  \caption{AUROC (\%) of contamination detection approaches using \textbf{question + response tokens} to compute the detection score, evaluated on the SFT-contaminated model w/o RL. $\Delta$ measures the difference with \textbf{response tokens} as signal (Tab.~\ref{tab:sft_cont_mia_auroc}). Results demonstrate that considering both question and response tokens when computing detection scores actually harms the AUROC.}
  \label{tab:sft_cont_mia_auroc_question_response}
  \resizebox{\linewidth}{!}{
  \begin{tabular}{l|cccccc|cc}
    \toprule
    Contamination Detection Methods & Olympiad & GPQA & AIME25 & AIME24 & Minerva & AMC23 & Avg. & $\Delta$ \\
    \midrule
    \multicolumn{9}{@{}l}{\textbf{Perturbation base}}\\
    Neighbor~\citep{mattern2023membership}  & 54.56 & 43.42 & 48.89 & 38.44 & 56.09 & 61.50 & 50.48 & \deltab{-0.23} \\
    \midrule
    \multicolumn{9}{@{}l}{\textbf{Reference base}}\\
    LiRA~\citep{mireshghallah2022empirical} & 79.31 & 72.49 & 89.56 & 71.11 & 66.07 & 89.00 & 77.92 & \deltab{-11.21} \\
    Ref~\citep{carlini2021extracting}       & 72.95 & 64.91 & 60.89 & 40.00 & 72.36 & 82.25 & 65.56 & \deltab{+0.06} \\
    \midrule
    \multicolumn{9}{@{}l}{\textbf{Reference free}}\\
    Zlib~\citep{carlini2021extracting}      & 46.96 & 54.24 & 69.33 & 41.11 & 43.31 & 42.00 & 49.49 & \deltab{-3.89} \\
    Min\textendash K\%++~\citep{zhang2024min}  & 40.49 & 51.08 & 36.19 & 48.67 & 39.61 & 30.50 & 41.09 & \deltab{-8.52} \\
    Min\textendash K\%~\citep{shi2023detecting} & 56.46 & 56.73 & 85.78 & 73.78 & 48.48 & 58.13 & 63.23 & \deltab{-1.73} \\
    Max\textendash K\%~\citep{maini2024llm} & 64.43 & 63.78 & 64.22 & 81.78 & 66.09 & 76.50 & 69.47 & \deltab{-0.36} \\
    Loss~\citep{carlini2021extracting}      & 59.18 & 58.44 & 87.11 & 76.44 & 49.75 & 63.12 & 65.67 & \deltab{-9.81} \\
    \bottomrule
  \end{tabular}}
\end{table}

\begin{table}[htbp]
  \centering
  \setlength{\tabcolsep}{6pt}
  \caption{AUROC (\%) of contamination-detection methods using \textbf{reasoning tokens only} (tokens inside \texttt{<think>}\texttt{</think>}) to compute the detection score, evaluated on the SFT-contaminated model w/o RL. The last column, $\Delta$, reports the difference relative to using \textbf{the entire response} as the signal (Tab.~\ref{tab:sft_cont_mia_auroc}). Results show only minor differences between reasoning token only and whole-response signals, so we use the entire response in the main analysis.}
  \resizebox{\linewidth}{!}{
  \begin{tabular}{l|cccccc|cc}
    \toprule
    Contamination Detection Methods & Olympiad & GPQA & AIME25 & AIME24 & Minerva & AMC23 & Avg. & $\Delta$ \\
    \midrule
    \multicolumn{9}{@{}l}{\textbf{Reference free}}\\
    Zlib~\citep{carlini2021extracting}      & 49.67 & 59.60 & 70.89 & 35.24 & 48.60 & 46.00 & 51.67 & \deltab{-0.93} \\
    Min\textendash K\%~\citep{shi2023detecting} & 70.76 & 66.46 & 82.89 & 74.52 & 70.34 & 82.25 & 74.54 & \deltab{+0.69} \\
    Max\textendash K\%~\citep{maini2024llm} & 67.40 & 62.00 & 66.67 & 82.38 & 65.87 & 79.00 & 70.55 & \deltab{+0.82} \\
    Loss~\citep{carlini2021extracting}      & 70.81 & 66.75 & 84.89 & 74.52 & 70.06 & 82.75 & 74.96 & \deltab{+1.21} \\
    \bottomrule
  \end{tabular}}
\end{table}

\begin{table}[htbp]
  \centering
  \setlength{\tabcolsep}{6pt}
 \caption{AUROC (\%) of contamination-detection methods using \textbf{non-reasoning tokens only} (tokens after \texttt{</think>}) to compute the detection score, evaluated on the SFT-contaminated model w/o RL. The last column, $\Delta$, reports the difference relative to using \textbf{the entire response} as the signal (Tab.~\ref{tab:sft_cont_mia_auroc}). Results show that using non-reasoning tokens degrades performance compared with whole-response signals, so we use the entire response in the main analysis.}

  \label{tab:sft_cont_mia_auroc_question_response}
  \resizebox{\linewidth}{!}{
  \begin{tabular}{l|cccccc|cc}
    \toprule
    Contamination Detection Methods & Olympiad & GPQA & AIME25 & AIME24 & Minerva & AMC23 & Avg. & $\Delta$ \\
    \midrule
    \multicolumn{9}{@{}l}{\textbf{Reference free}}\\
    Zlib~\citep{carlini2021extracting}      & 59.73 & 51.49 & 49.11 & 61.67 & 63.37 & 74.25 & 59.94 & \deltab{+7.34} \\
    Min\textendash K\%~\citep{shi2023detecting} & 59.26 & 54.82 & 48.22 & 80.00 & 64.8 & 76.5 & 63.93 & \deltab{-9.92} \\
    Max\textendash K\%~\citep{maini2024llm} & 60.86 & 54.93 & 55.33 & 69.52 & 59.76 & 77.5 & 62.98 & \deltab{-6.75} \\
    Loss~\citep{carlini2021extracting}      & 59.28 & 54.78 & 48.44 & 80.00 & 64.74 & 76.5 & 63.96 & \deltab{-9.79} \\
    \bottomrule
  \end{tabular}}
\end{table}

\newpage
\subsection{Response Length Analysis (Stage I: pre-LRM)}

\rebuttal{We visualize the average response length per question across different benchmarks in Fig.~\ref{fig:response_length}. Here, the base model is Qwen2.5-7B-Instruct. Although the average response length does not change too much after GRPO training, several detection approaches already exhibit a substantial performance drop, as shown in Tab.~\ref{tab:sft_cont_mia_auroc}. In contrast, for RAFT, even though the response lengths remain similar to those after GRPO, no concealment effect is observed. These results suggest that response length is not the key factor; instead, the entropy of the model’s output appears to drive the concealment phenomenon.}

\begin{figure}[t]
    \includegraphics[width=1\linewidth]{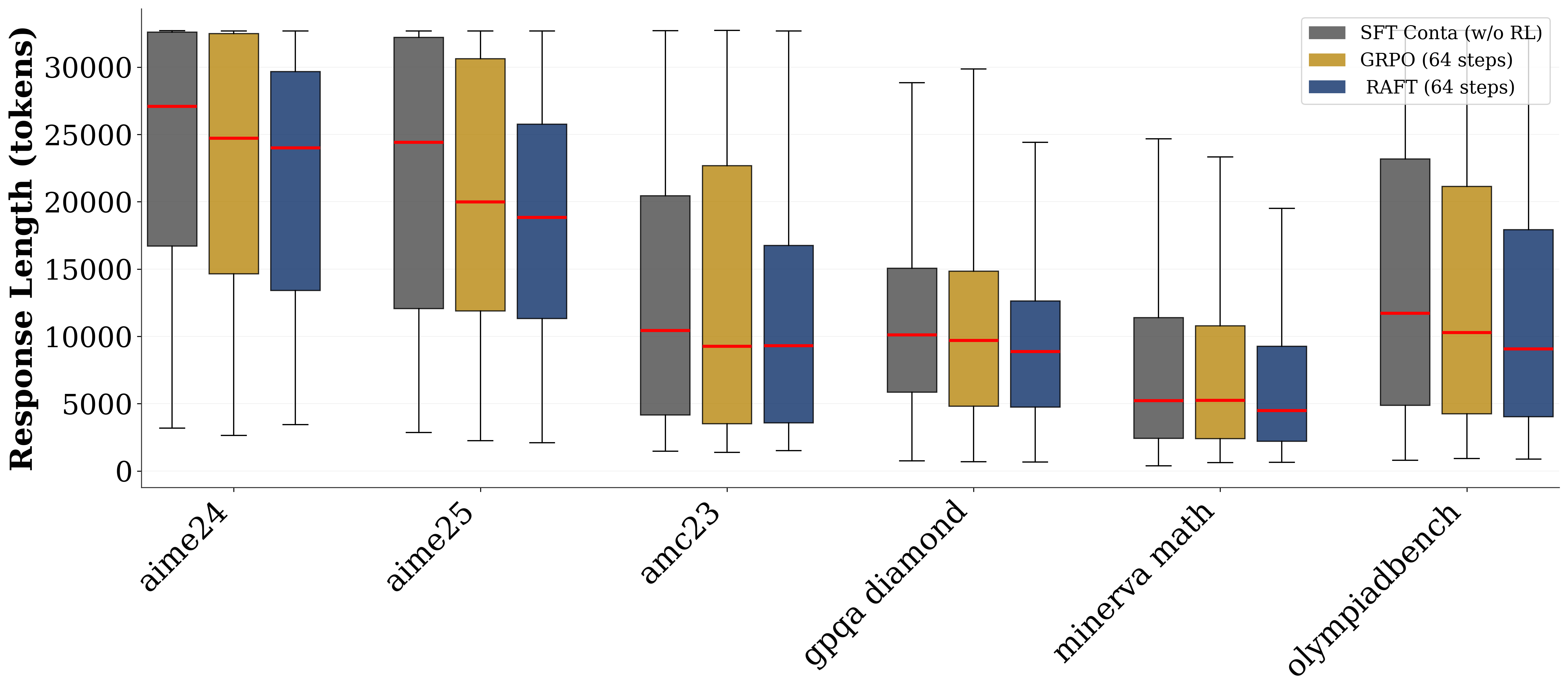}
  \caption{\rebuttal{Average response length for each question across different benchmarks.}}
   \label{fig:response_length}
\end{figure}

\newpage
\subsection{Log-probability distribution before and after RL (Stage I: pre-LRM)}

We provide the log probability distribution for members vs. non-members in diverse scenarios. Fig.~\ref{fig:log_prob_all_gpqa},~\ref{fig:log_prob_all_olympiadbench}, and~\ref{fig:log_prob_all_minerva_math} demonstrate that further SFT and RAFT are unable to contract the distribution for members and non-members, while GRPO and RAFT++ could conceal the contamination evidence due to the PPO-style importance sampling/clipping term in the training objective. \rebuttal{The base model here is Qwen2.5-7B-Instruct.}

\begin{figure}[htbp]
    \includegraphics[width=1\linewidth]{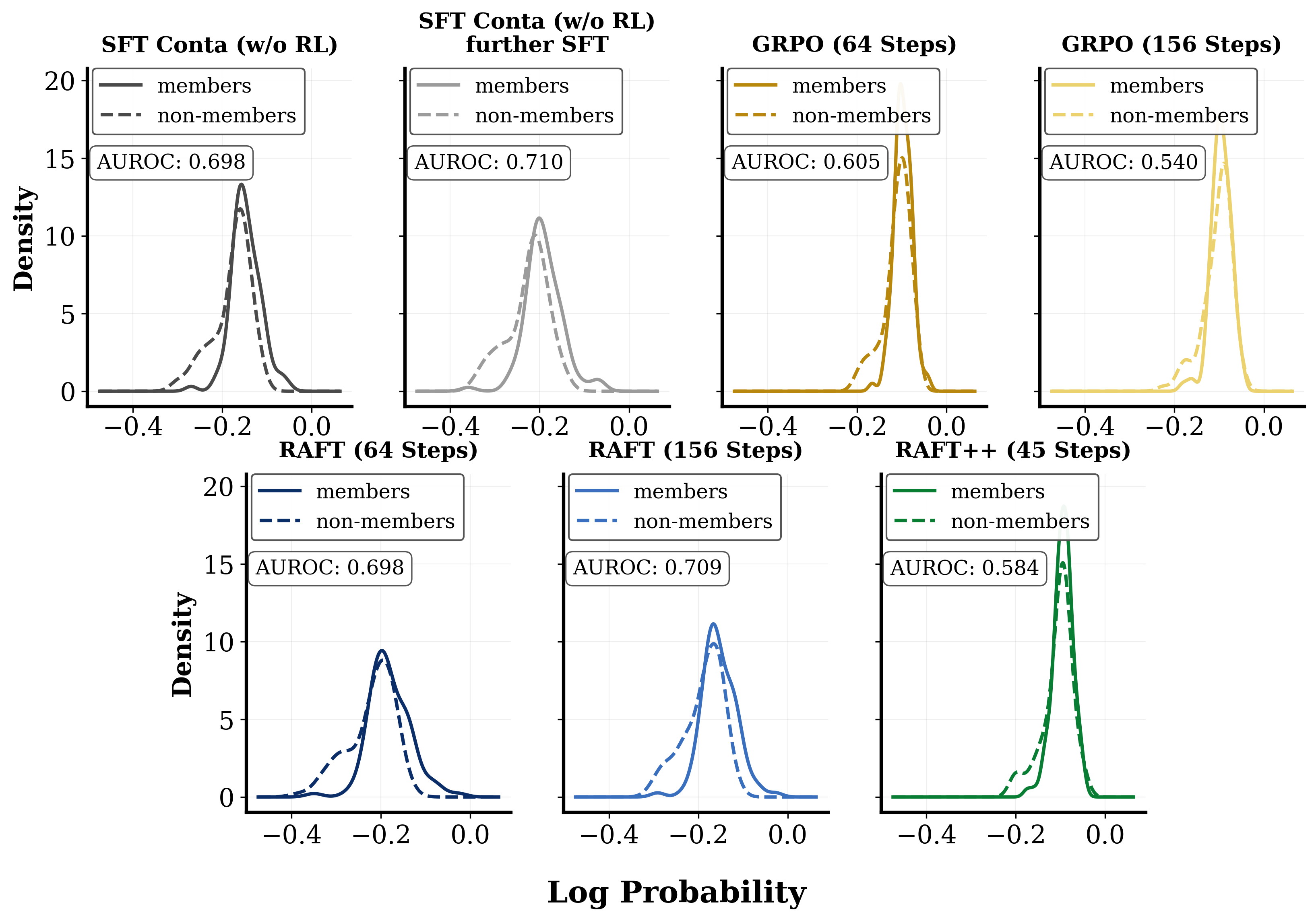}
  \caption{Log-prob distributions for members vs. non-members of \textbf{SFT contaminated model before and after RL training} on GPQA-Diamond. With additional GRPO or RAFT++ training on clean samples, the member and non-member log-probability distributions become increasingly similar. Since many contamination detection methods rely on separability in this space, the shrinking gap explains their degraded effectiveness. In contrast, further RAFT training does not induce the earlier distribution collapse; as we explain in Sec.~\ref{raft_no_conceal}, the absence of a clipping term prevents it. Likewise, additional SFT does not collapse the membership distributions.}
   \label{fig:log_prob_all_gpqa}
\end{figure}

\begin{figure}[htbp]
    \includegraphics[width=1\linewidth]{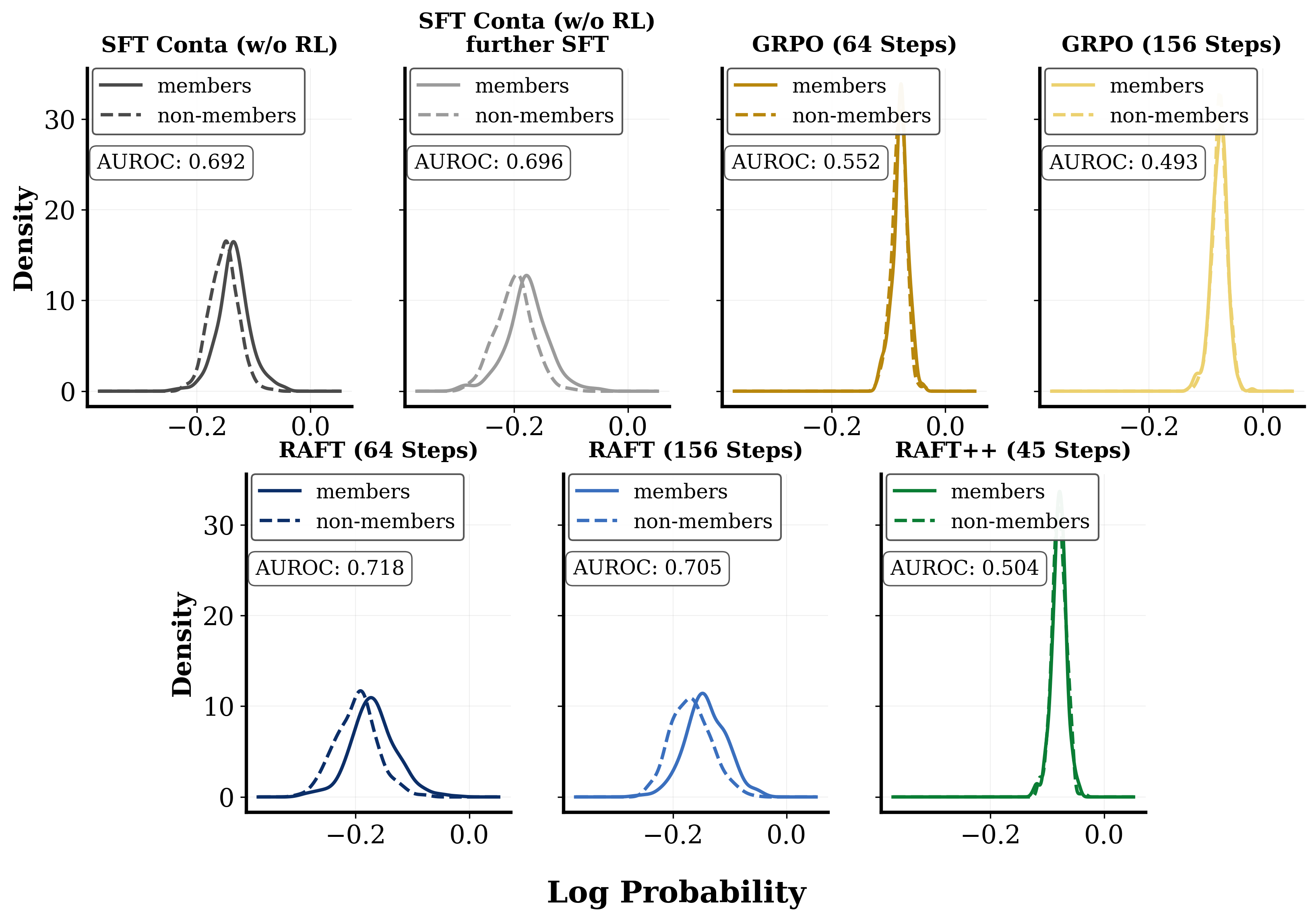}

  \caption{Log-prob distributions for members vs. non-members of \textbf{SFT contaminated model before and after RL training} on OlympiadBench. With additional GRPO or RAFT++ training on clean samples, the member and non-member log-probability distributions become increasingly similar. Since many contamination detection methods rely on separability in this space, the shrinking gap explains their degraded effectiveness. In contrast, further RAFT training does not induce the earlier distribution collapse; as we explain in Sec.~\ref{raft_no_conceal}, the absence of a clipping term prevents it. Likewise, additional SFT does not collapse the membership distributions.}
   \label{fig:log_prob_all_olympiadbench}
\end{figure}

\begin{figure}[htbp]  
    \includegraphics[width=1\linewidth]{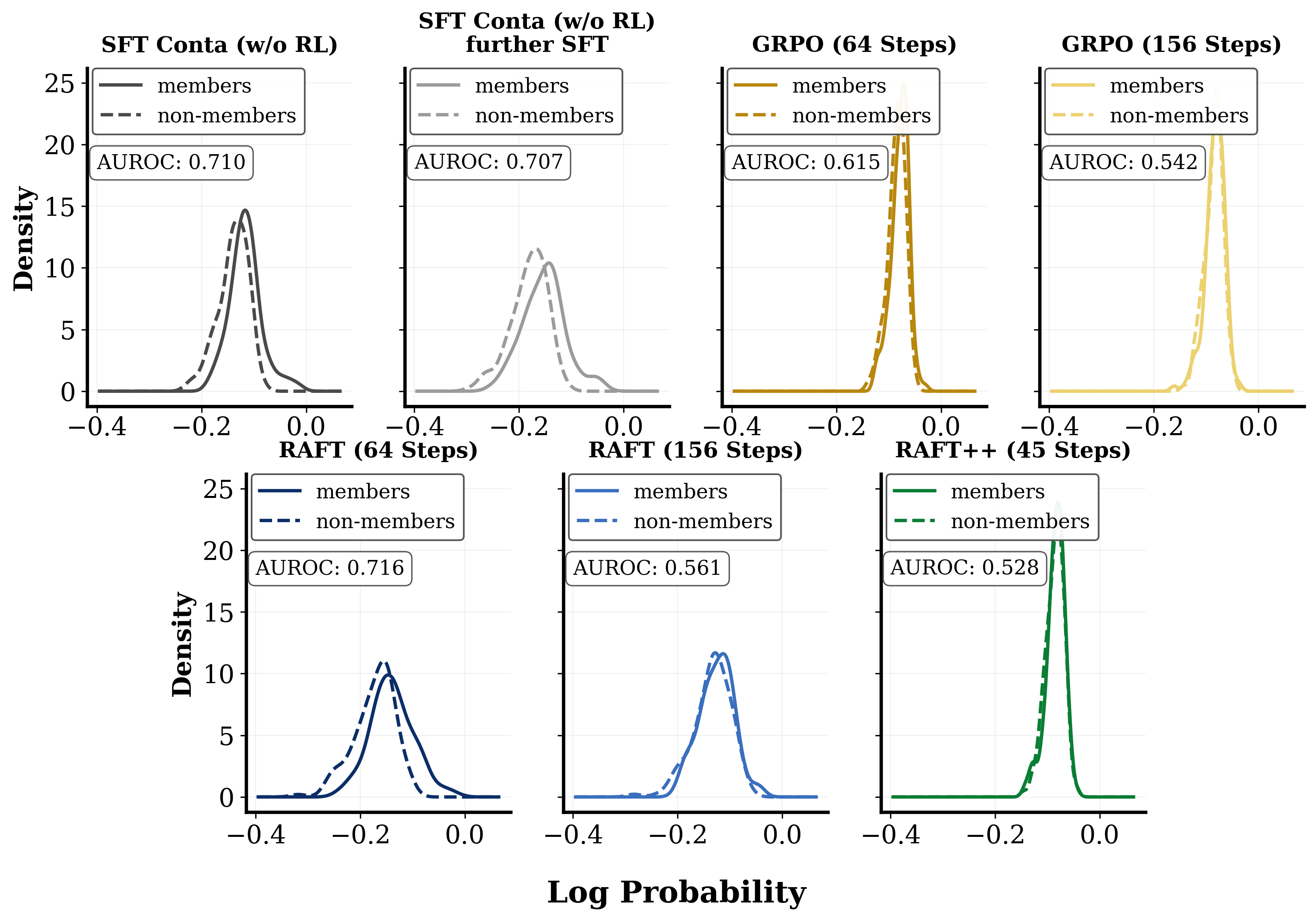}
  \caption{Log-prob distributions for members vs. non-members of \textbf{SFT contaminated model before and after RL training} on Minerva Math. With additional GRPO or RAFT++ training on clean samples, the member and non-member log-probability distributions become increasingly similar. Since many contamination detection methods rely on separability in this space, the shrinking gap explains their degraded effectiveness. In contrast, further RAFT training does not induce the earlier distribution collapse; as we explain in Sec.~\ref{raft_no_conceal}, the absence of a clipping term prevents it. Likewise, additional SFT does not collapse the membership distributions.}
  \label{fig:log_prob_all_minerva_math}
\end{figure}

\newpage
\subsection{AUROC for different training steps (Stage I: pre-LRM)}

We provide complete results of AUROC in different RL training steps. As shown in Tab.~\ref{tab:sft_cont_mia_auroc_step_trend_grpo} and~\ref{tab:sft_cont_mia_auroc_step_trend_raft}, we observe a monotonic decline in detection performance as the number of RL steps increases when using GRPO. While even with 156 steps in RAFT, there is no sign of AUROC decline. These results perfectly validate our theoretical analysis that RAFT is unable to conceal contamination, while GRPO could.

\begin{table}[htbp]
  \centering
  \setlength{\tabcolsep}{6pt}
 \caption{AUROC (\%) of contamination detection approaches evaluated starting from an SFT-contaminated model w/o RL to subsequently trained with \textbf{GRPO in different steps}. $\Delta$ measures the difference with the SFT contaminated model w/o RL (Tab.~\ref{tab:sft_cont_mia_auroc}). The results demonstrate that additional GRPO steps better conceal the contamination evidence. \rebuttal{The base model here is Qwen2.5-7B-Instruct.}}
  \label{tab:sft_cont_mia_auroc_step_trend_grpo}
  \resizebox{\linewidth}{!}{
  \begin{tabular}{l|c|cccccc|cc}
    \toprule
    Contamination Detection Methods & Steps & Olympiad & GPQA & AIME25 & AIME24 & Minerva & AMC23 & Avg. & $\Delta$ \\
    \midrule
    \multicolumn{1}{@{}l}{\textbf{Generation base}}\\
    \multirow{3}{*}{Verbatim~\citep{wu2025reasoning}}
      & 64  & 45.46 & 50.49 & 48.89 & 57.56 & 52.74 & 59.63 & 52.46 & \deltab{-0.30} \\
      & 110 & 45.27 & 52.32 & 52.44 & 50.22 & 50.86 & 60.88 & 52.00 & \deltab{-0.76} \\
      & 156 & 46.51 & 51.09 & 53.33 & 53.56 & 52.84 & 58.50 & 52.64 & \deltab{-0.12} \\
      \cmidrule(lr){2-10}
      
    \multirow{3}{*}{CDD~\citep{dong2024generalization}}
      & 64  & 55.47 & 51.08 & 43.33 & 60.00 & 60.18 & 62.00 & 55.34 & \deltab{-0.45} \\
      & 110 & 54.90 & 54.20 & 49.33 & 59.11 & 53.48 & 58.88 & 54.98 & \deltab{-0.81} \\ 
      & 156 & 53.99 & 43.42 & 34.22 & 70.89 & 56.32 & 62.38 & 53.54 & \deltab{-2.25} \\
    \midrule
    
    \multicolumn{1}{@{}l}{\textbf{Perturbation base}}\\
    \multirow{3}{*}{Neighbor~\citep{mattern2023membership}}
      & 64  &  54.10 & 39.68 & 50.67 & 44.22 & 53.42 & 60.5& 50.43 & \deltab{-0.28} \\
      & 110 & 53.93 & 40.93 & 43.11 & 46.22 & 54.27 & 59.62 & 49.68 & \deltab{-1.03} \\
      & 156 & 53.32 & 41.07 & 47.11 & 38.67 & 54.49 & 60.75 & 49.24 & \deltab{-1.47} \\
    \midrule
    
    \multicolumn{1}{@{}l}{\textbf{Reference base}}\\
    \multirow{3}{*}{LiRA~\citep{mireshghallah2022empirical}}
      & 64  & 74.41 & 84.65 & 70.22 & 87.78 & 81.04 & 82.75 & 80.14 & \deltab{-8.99} \\
      & 110 & 67.87 & 79.80 & 60.22 & 78.89 & 81.49 & 71.88 & 73.36 & \deltab{-15.77} \\
      & 156 & 66.41 & 77.28 & 54.22 & 80.22 & 77.62 & 75.13 & 71.81 & \deltab{-17.32} \\
    \cmidrule(lr){2-10}
    
    \multirow{3}{*}{Ref~\citep{carlini2021extracting}}
      & 64  & 66.77 & 58.41 & 45.33 & 51.11 & 65.54 & 73.62 & 60.13 & \deltab{-5.37} \\
      & 110 & 64.92 & 54.70 & 44.44 & 52.22 & 66.47 & 72.37 & 59.19 & \deltab{-6.31} \\ 
      & 156 & 62.20 & 57.30 & 50.89 & 55.78 & 64.62 & 75.25 & 61.01 & \deltab{-4.49} \\
    \midrule
    
    \multicolumn{1}{@{}l}{\textbf{Reference free}}\\
    \multirow{3}{*}{Zlib~\citep{carlini2021extracting}}
      & 64  & 45.94 & 54.99 & 66.22 & 35.56 & 46.65 & 39.38 & 48.12 & \deltab{-5.26} \\
      & 110 & 44.98 & 56.54 & 64.22 & 39.11 & 44.93 & 38.38 & 48.03 & \deltab{-5.35} \\ 
      & 156 & 44.82 & 53.21 & 63.11 & 41.33 & 46.43 & 39.12 & 48.00 & \deltab{-5.38} \\
    \cmidrule(lr){2-10}
    
    \multirow{3}{*}{Min\textendash K\%++~\citep{zhang2024min}}
      & 64  & 46.25 & 46.78 & 36.67 & 50.89 & 51.35 & 29.62 & 43.59 & \deltab{-6.02} \\
      & 110 & 43.17 & 44.69 & 28.33 & 50.00 & 49.44 & 46.25 & 43.65 & \deltab{-5.96} \\
      & 156 & 46.12 & 44.86 & 32.44 & 44.00 & 53.39 & 35.38 & 42.70 & \deltab{-6.91} \\
    \cmidrule(lr){2-10}
    
    \multirow{3}{*}{Min\textendash K\%~\citep{shi2023detecting}}
      & 64  & 55.19 & 60.60 & 62.89 & 65.56 & 61.50 & 61.87 & 61.27 & \deltab{-13.69} \\
      & 110 & 49.98 & 58.49 & 61.78 & 61.33 & 56.20 & 49.00 & 56.13 & \deltab{-18.83} \\
      & 156 & 49.17 & 54.14 & 44.67 & 54.44 & 53.97 & 48.75 & 50.86 & \deltab{-24.10} \\
    \cmidrule(lr){2-10}
    
    \multirow{3}{*}{Max\textendash K\%~\citep{maini2024llm}}
      & 64  & 53.05 & 51.43 & 49.78 & 50.22 & 51.84 & 57.75 & 52.35 & \deltab{-17.48} \\
      & 110 & 51.02 & 51.39 & 49.78 & 53.33 & 51.88 & 47.50 & 50.82 & \deltab{-19.01} \\
      & 156 & 49.81 & 53.31 & 50.00 & 50.22 & 51.52 & 55.00 & 51.64 & \deltab{-18.19} \\
   
    \addlinespace[2pt]
    \cmidrule(lr){2-10}
    \multirow{3}{*}{Loss~\citep{carlini2021extracting}}
      & 64  & 55.22 & 60.50 & 62.44 & 65.78 & 61.50 & 62.12 & 61.26 & \deltab{-14.22} \\
      & 110 & 50.17 & 58.25 & 60.67 & 62.89 & 56.47 & 49.12 & 56.26 & \deltab{-19.22} \\
      & 156 & 49.32 & 54.04 & 44.22 & 54.22 & 54.20 & 48.50 & 50.75 & \deltab{-24.73} \\
    
    \bottomrule
  \end{tabular}}
  \vspace{-15pt}
\end{table}

\begin{table}[htbp]
  \centering
  \setlength{\tabcolsep}{6pt}
 \caption{AUROC (\%) of contamination detection approaches evaluated starting from an SFT-contaminated model w/o RL to subsequently trained with \textbf{RAFT in different steps}.  $\Delta$ measures the difference with the SFT contaminated model w/o RL (Tab.~\ref{tab:sft_cont_mia_auroc}). Results demonstrate that even with more RL steps, RAFT is unable to conceal the contamination evidence. \rebuttal{The base model here is Qwen2.5-7B-Instruct.}}
  \label{tab:sft_cont_mia_auroc_step_trend_raft}
  \resizebox{\linewidth}{!}{
  \begin{tabular}{l|c|cccccc|cc}
    \toprule
    Contamination Detection Methods & Steps & Olympiad & GPQA & AIME25 & AIME24 & Minerva & AMC23 & Avg. & $\Delta$ \\
    \midrule
    \multicolumn{1}{@{}l}{\textbf{Generation base}}\\

    \multirow{3}{*}{Verbatim~\citep{wu2025reasoning}}
      & 64  & 45.43 & 51.19 & 59.33 & 60.67 & 51.48 & 61.13 & 54.87 & \deltab{+2.11} \\
      & 110 & 45.67 & 51.24 & 58.67 & 58.67 & 52.33 & 57.50 & 54.01 & \deltab{+1.25} \\
      & 156 & 45.15 & 52.16 & 59.11 & 59.33 & 51.28 & 61.25 & 54.71 & \deltab{+1.95} \\
      \cmidrule(lr){2-10}
      
    \multirow{3}{*}{CDD~\citep{dong2024generalization}}
      & 64  & 57.85 & 53.55 & 46.89 & 60.22 & 59.63 & 66.62 & 57.46 & \deltab{+1.67} \\
      & 110 & 55.97 & 52.86 & 43.11 & 52.89 & 56.45 & 58.25 & 53.26 & \deltab{-2.53} \\ 
      & 156 & 55.59 & 52.94 & 32.22 & 53.11 & 55.84 & 67.62 & 52.89 & \deltab{-2.90} \\
    \midrule
    
    \multicolumn{1}{@{}l}{\textbf{Perturbation base}}\\
    \multirow{3}{*}{Neighbor~\citep{mattern2023membership}}
      & 64  & 55.44 & 39.63 & 50.22 & 43.11 & 53.15 & 58.88 & 50.07 & \deltab{-0.64} \\
      & 110 & 55.06 & 38.47 & 46.89 & 40.67 & 54.20 & 63.38 & 49.78 & \deltab{-0.93} \\
      & 156 & 55.06 & 39.19 & 40.89 & 41.78 & 49.66 & 57.12 & 47.28 & \deltab{-3.43} \\
    \midrule
    
    \multicolumn{1}{@{}l}{\textbf{Reference base}}\\
    \multirow{3}{*}{LiRA~\citep{mireshghallah2022empirical}}
      & 64  & 85.85 & 85.69 & 98.22 & 84.22 & 84.86 & 91.75 & 88.43 & \deltab{-0.70} \\
      & 110 & 84.32 & 85.76 & 94.44 & 88.44 & 84.55 & 91.25 & 88.13 & \deltab{-1.00} \\
      & 156 & 83.32 & 86.57 & 94.44 & 82.00 & 58.73 & 86.63 & 81.95 & \deltab{-7.18} \\
    \cmidrule(lr){2-10}
    
    \multirow{3}{*}{Ref~\citep{carlini2021extracting}}
      & 64  & 76.02 & 65.94 & 74.67 & 46.00 & 73.99 & 75.62 & 68.71 & \deltab{+3.21} \\
      & 110 & 76.41 & 63.76 & 71.11 & 50.22 & 74.50 & 81.75 & 69.63 & \deltab{+4.13} \\ 
      & 156 & 75.92 & 64.23 & 67.33 & 54.67 & 56.73 & 81.38 & 66.71 & \deltab{+1.21} \\
    \midrule
    
    \multicolumn{1}{@{}l}{\textbf{Reference free}}\\
    \multirow{3}{*}{Zlib~\citep{carlini2021extracting}}
      & 64  & 51.88 & 59.73 & 76.00 & 52.22 & 52.40 & 47.75 & 56.66 & \deltab{+3.28} \\
      & 110 & 52.39 & 62.20 & 76.22 & 49.78 & 74.51 & 45.88 & 56.29 & \deltab{+2.91} \\ 
      & 156 & 53.04 & 62.35 & 79.78 & 47.78 & 52.58 & 50.37 & 57.65 & \deltab{+4.27} \\
    \cmidrule(lr){2-10}
    
    \multirow{3}{*}{Min\textendash K\%++~\citep{zhang2024min}}
      & 64  & 51.14 & 54.13 & 51.78 & 62.00 & 59.86 & 52.75 & 55.28 & \deltab{+5.67} \\
      & 110 & 52.30 & 55.56 & 56.67 & 62.14 & 55.00 & 49.50 & 55.20 & \deltab{+5.59} \\
      & 156 & 51.26 & 59.72 & 61.19 & 58.44 & 52.53 & 63.12 & 57.71 & \deltab{+8.10} \\
    \cmidrule(lr){2-10}
    
    \multirow{3}{*}{Min\textendash K\%~\citep{shi2023detecting}}
      & 64  & 72.28 & 69.56 & 88.44 & 86.67 & 71.63 & 78.88 & 77.91 & \deltab{+2.95} \\
      & 110 & 71.44 & 73.69 & 78.67 & 90.22 & 70.45 & 77.62 & 77.02 & \deltab{+2.06} \\
      & 156 & 70.76 & 71.12 & 84.00 & 80.67 & 56.12 & 79.38 & 73.68 & \deltab{-1.28} \\
    \cmidrule(lr){2-10}
    
    \multirow{3}{*}{Max\textendash K\%~\citep{maini2024llm}}
      & 64  & 68.11 & 68.37 & 68.44 & 90.67 & 69.21 & 80.50 & 74.22 & \deltab{+4.39} \\
      & 110 & 67.97 & 68.16 & 65.78 & 92.44 & 69.00 & 76.75 & 73.35 & \deltab{+3.52} \\
      & 156 & 69.14 & 68.30 & 69.33 & 90.67 & 58.06 & 76.00 & 71.92 & \deltab{+2.09} \\
   
    \addlinespace[2pt]
    \cmidrule(lr){2-10}
    \multirow{3}{*}{Loss~\citep{carlini2021extracting}}
      & 64  & 71.78 & 69.78 & 86.00 & 86.67 & 71.58 & 79.25 & 77.51 & \deltab{+2.03} \\
      & 110 & 71.09 & 73.54 & 78.44 & 91.11 & 70.13 & 77.12 & 76.90 & \deltab{+1.42} \\
      & 156 & 70.49 & 70.95 & 83.11 & 80.44 & 56.07 & 79.12 & 73.36 & \deltab{-2.12} \\
    
    \bottomrule
  \end{tabular}}
  \vspace{-15pt}
\end{table}

\begin{table}[htbp]
  \centering
  \setlength{\tabcolsep}{6pt}
 \caption{\rebuttal{AUROC (\%) of contamination detection approaches evaluated starting from an SFT-contaminated model w/o RL to subsequently trained with \textbf{GRPO in different steps}. $\Delta$ measures the difference with the SFT contaminated model w/o RL (Tab.~\ref{tab:sft_cont_mia_llama_auroc}). The results demonstrate that additional GRPO steps better conceal the contamination evidence. The base model here is Llama-3.1-8B-Instruct.}}
  \label{tab:sft_cont_mia_auroc_step_trend_grpo_llama}
  \resizebox{\linewidth}{!}{
  \begin{tabular}{l|c|cccccc|cc}
    \toprule
    Contamination Detection Methods & Steps & Olympiad & GPQA & AIME25 & AIME24 & Minerva & AMC23 & Avg. & $\Delta$ \\
    \midrule
    \multicolumn{1}{@{}l}{\textbf{Generation base}}\\
    \multirow{3}{*}{Verbatim~\citep{wu2025reasoning}}
      & 64  & 49.92 & 50.14 & 57.78 & 58.22 & 57.21 & 57.38 & 55.11 & \deltab{-0.77} \\
      & 110 & 49.53 & 51.25 & 54.22 & 60.67 & 54.73 & 53.12 & 53.92 & \deltab{-1.96} \\
      & 156 & 49.45 & 52.11 & 51.11 & 60.44 & 57.02 & 47.12 & 52.88 & \deltab{-3.00} \\
      \cmidrule(lr){2-10}
      
    \multirow{3}{*}{CDD~\citep{dong2024generalization}}
      & 64  & 53.14 & 55.69 & 60.00 & 46.22 & 56.31 & 64.00 & 55.89 & \deltab{-1.89} \\
      & 110 & 53.48 & 48.40 & 46.67 & 49.11 & 57.34 & 56.00 & 51.83 & \deltab{-5.92} \\ 
      & 156 & 53.99 & 52.18 & 45.56 & 52.89 & 60.62 & 57.25 & 53.75 & \deltab{-4.00} \\
    \midrule
    
    \multicolumn{1}{@{}l}{\textbf{Perturbation base}}\\
    \multirow{3}{*}{Neighbor~\citep{mattern2023membership}}
      & 64  & 52.48 & 37.07 & 55.78 & 39.33 & 50.32 & 65.00 & 50.00 & \deltab{-2.19} \\
      & 110 & 52.83 & 38.31 & 49.78 & 39.56 & 50.18 & 60.00 & 48.44 & \deltab{-3.75} \\
      & 156 & 53.09 & 39.15 & 48.00 & 44.00 & 52.29 & 62.25 & 49.80 & \deltab{-2.39} \\
    \midrule
    
    \multicolumn{1}{@{}l}{\textbf{Reference base}}\\
    \multirow{3}{*}{LiRA~\citep{mireshghallah2022empirical}}
      & 64  & 68.95 & 65.57 & 51.33 & 41.11 & 73.99 & 80.75 & 63.62 & \deltab{-23.68} \\
      & 110 & 66.57 & 65.72 & 44.22 & 27.56 & 72.02 & 71.00 & 57.85 & \deltab{-29.45} \\
      & 156 & 60.93 & 64.21 & 40.22 & 23.11 & 69.25 & 70.12 & 54.64 & \deltab{-32.66} \\
    \cmidrule(lr){2-10}
    
    \multirow{3}{*}{Ref~\citep{carlini2021extracting}}
      & 64  & 55.27 & 47.41 & 40.67 & 32.44 & 52.31 & 53.25 & 46.89 & \deltab{-13.06} \\
      & 110 & 54.98 & 49.70 & 42.89 & 26.67 & 53.16 & 58.00 & 47.57 & \deltab{-12.38} \\ 
      & 156 & 53.23 & 47.75 & 43.11 & 27.33 & 56.31 & 56.75 & 47.41 & \deltab{-12.54} \\
    \midrule
    
    \multicolumn{1}{@{}l}{\textbf{Reference free}}\\
    \multirow{3}{*}{Zlib~\citep{carlini2021extracting}}
      & 64  & 46.53 & 56.37 & 66.67 & 50.00 & 45.91 & 41.38 & 51.14 & \deltab{-11.09} \\
      & 110 &  46.78 & 57.11 & 64.89 & 48.44 & 46.40 & 37.25 & 50.15 & \deltab{-12.08} \\ 
      & 156 &  46.32 & 54.41 & 65.78 & 39.78 & 42.50 & 42.88 & 48.61 & \deltab{-13.62} \\
    \cmidrule(lr){2-10}
    
    \multirow{3}{*}{Min\textendash K\%++~\citep{zhang2024min}}
      & 64  & 46.67 & 47.61 & 29.33 & 24.22 & 59.85 & 34.38 & 40.34 & \deltab{-3.17} \\
      & 110 & 49.92 & 50.25 & 42.89 & 27.78 & 57.32 & 42.75 & 45.15 & \deltab{+1.64} \\
      & 156 & 49.42 & 48.28 & 41.33 & 38.67 & 53.09 & 48.25 & 46.51 & \deltab{+3.00} \\
    \cmidrule(lr){2-10}
    
    \multirow{3}{*}{Min\textendash K\%~\citep{shi2023detecting}}
      & 64  & 59.11 & 69.85 & 60.44 & 67.56 & 73.89 & 75.87 & 67.79 & \deltab{-17.13} \\
      & 110 & 56.75 & 66.89 & 54.89 & 64.22 & 71.61 & 62.25 & 62.77 & \deltab{-22.15} \\
      & 156 & 55.83 & 65.12 & 50.44 & 59.78 & 62.76 & 62.50 & 59.41 & \deltab{-25.51} \\
    \cmidrule(lr){2-10}
    
    \multirow{3}{*}{Max\textendash K\%~\citep{maini2024llm}}
      & 64  & 50.59 & 52.50 & 50.00 & 50.00 & 50.37 & 52.50 & 50.99 & \deltab{-26.94} \\
      & 110 & 50.29 & 54.44 & 50.00 & 50.00 & 50.00 & 52.50 & 51.21 & \deltab{-26.72} \\
      & 156 & 50.30 & 51.41 & 53.33 & 50.00 & 51.09 & 50.00 & 51.02 & \deltab{-26.91} \\
   
    \addlinespace[2pt]
    \cmidrule(lr){2-10}
    \multirow{3}{*}{Loss~\citep{carlini2021extracting}}
      & 64  & 59.18 & 69.84 & 60.00 & 68.00 & 74.02 & 76.75 & 67.97  & \deltab{-17.36} \\
      & 110 &  56.85 & 66.90 & 55.11 & 63.78 & 71.60 & 62.38 & 62.77 & \deltab{-22.56} \\
      & 156 & 55.75 & 65.06 & 50.00 & 59.78 & 62.72 & 63.00 & 59.39  & \deltab{-25.94} \\
    
    \bottomrule
  \end{tabular}}
  \vspace{-15pt}
\end{table}

\begin{table}[htbp]
  \centering
  \setlength{\tabcolsep}{6pt}
 \caption{\rebuttal{AUROC (\%) of contamination detection approaches evaluated starting from an SFT-contaminated model w/o RL to subsequently trained with \textbf{RAFT in different steps}.  $\Delta$ measures the difference with the SFT contaminated model w/o RL (Tab.~\ref{tab:sft_cont_mia_llama_auroc}). Results demonstrate that even with more RL steps, RAFT is unable to conceal the contamination evidence. The base model here is Llama-3.1-8B-Instruct.}}
  \label{tab:sft_cont_mia_auroc_step_trend_raft_llama}
  \resizebox{\linewidth}{!}{
  \begin{tabular}{l|c|cccccc|cc}
    \toprule
    Contamination Detection Methods & Steps & Olympiad & GPQA & AIME25 & AIME24 & Minerva & AMC23 & Avg. & $\Delta$ \\
    \midrule
    \multicolumn{1}{@{}l}{\textbf{Generation base}}\\

  \multirow{3}{*}{Verbatim~\citep{wu2025reasoning}}
  & 64  & 46.75 & 49.07 & 56.44 & 58.89 & 55.77 & 57.00 & 53.99 & \deltab{-1.89} \\
  & 110 & 46.89 & 50.85 & 52.00 & 53.33 & 58.78 & 55.12 & 52.83 & \deltab{-3.05} \\
  & 156 & 47.06 & 49.70 & 54.22 & 60.44 & 55.63 & 61.75 & 54.80 & \deltab{-1.08} \\
  \midrule
      
    \multirow{3}{*}{CDD~\citep{dong2024generalization}}
      & 64  & 57.35 & 54.44 & 52.22 & 46.44 & 53.57 & 62.62 & 54.44  & \deltab{-3.31} \\
      & 110 &  56.97 & 54.92 & 60.00 & 50.00 & 55.76 & 65.38 & 57.17  & \deltab{-0.58} \\ 
      & 156 &  55.78 & 57.84 & 53.11 & 50.89 & 60.63 & 70.88 & 58.19 & \deltab{+0.44} \\
    \midrule
    
    \multicolumn{1}{@{}l}{\textbf{Perturbation base}}\\
    \multirow{3}{*}{Neighbor~\citep{mattern2023membership}}
      & 64  &  51.81&40.93&64.00&48.44&55.28&66.50&54.49 & \deltab{+2.30} \\
      & 110 &  52.08&39.36&64.22&54.00&54.84&68.88&55.56 & \deltab{+3.37} \\
      & 156 &  51.99&40.25&66.22&49.56&51.68&72.75&55.41  & \deltab{+3.22} \\
    \midrule
    
    \multicolumn{1}{@{}l}{\textbf{Reference base}}\\
    \multirow{3}{*}{LiRA~\citep{mireshghallah2022empirical}}
      & 64  & 81.82&77.88&96.89&96.67&82.92&97.50&88.95 & \deltab{+1.75} \\
      & 110 & 81.54&77.08&99.56&95.78&86.03&99.00&89.83  & \deltab{+2.63} \\
      & 156 & 80.52&77.21&99.11&96.00&84.04&97.75&89.11  & \deltab{+1.91} \\
    \cmidrule(lr){2-10}
    
    \multirow{3}{*}{Ref~\citep{carlini2021extracting}}
      & 64  & 60.81&51.55&65.33&49.11&61.41&88.00&62.70 & \deltab{+2.75} \\
      & 110 & 62.25&49.94&76.89&52.67&61.33&86.00&64.85 & \deltab{+4.93} \\ 
      & 156 & 61.14&48.53&77.78&49.78&63.06&87.50&64.63 & \deltab{+4.68} \\
    \midrule
    
    \multicolumn{1}{@{}l}{\textbf{Reference free}}\\
    \multirow{3}{*}{Zlib~\citep{carlini2021extracting}}
      & 64  & 53.02&58.98&80.67&68.44&51.21&60.75&62.18 & \deltab{-0.05} \\
      & 110 & 53.40&64.35&89.33&66.22&53.83&67.88&65.84 & \deltab{+3.61} \\ 
      & 156 & 54.13&62.78&85.33&64.89&54.81&68.62&65.09 & \deltab{+2.86} \\
    \cmidrule(lr){2-10}
    
    \multirow{3}{*}{Min\textendash K\%++~\citep{zhang2024min}}
      & 64  & 48.97&48.15&36.00&36.89&55.41&37.25&43.78 & \deltab{+0.27} \\
      & 110 & 51.00&56.82&44.44&38.89&58.25&39.47&48.15 & \deltab{+4.64} \\
      & 156 & 50.43&58.89&34.44&29.11&58.98&45.50&46.23 & \deltab{+2.72} \\
    \cmidrule(lr){2-10}
    
    \multirow{3}{*}{Min\textendash K\%~\citep{shi2023detecting}}
      & 64  & 73.95&76.84&98.00&98.89&82.44&97.25&87.90 & \deltab{+2.98} \\
      & 110 & 71.87&80.78&98.89&98.22&83.92&95.88&88.26 & \deltab{+3.34} \\
      & 156 & 71.30&81.59&98.22&100.00&82.54&95.50&88.19 & \deltab{+3.27} \\
    \cmidrule(lr){2-10}
    
    \multirow{3}{*}{Max\textendash K\%~\citep{maini2024llm}}
      & 64  & 66.57&70.43&92.44&89.11&77.31&95.25&81.85 & \deltab{+3.92} \\
      & 110 & 67.37&75.39&92.89&92.00&78.65&96.88&83.86 & \deltab{+5.93} \\
      & 156 & 67.61&73.85&90.67&94.22&76.10&95.50&82.99 & \deltab{+5.06} \\
   
    \addlinespace[2pt]
    \cmidrule(lr){2-10}
    \multirow{3}{*}{Loss~\citep{carlini2021extracting}}
      & 64  & 73.77&78.18&98.00&99.56&83.14&97.38&88.34& \deltab{+3.01} \\
      & 110 & 72.40&81.76&99.11&98.67&84.18&97.75&88.98 & \deltab{+3.65} \\
      & 156 & 71.57&82.91&98.67&100.00&82.34&95.75&88.54 & \deltab{+3.21} \\
    
    \bottomrule
  \end{tabular}}
\end{table}

\newpage
\subsection{\rebuttal{Contraction effect on small and large models (Stage I: pre-LRM)}}

\rebuttal{We conduct experiments on smaller and larger models to see whether the concealment of RL has correlations with the model size. In particular, we provide additional results with a small model (i.e., Qwen2.5-3B-Instruct) in Table~\ref{tab:auroc_qwen3b}, and a large model (i.e., Qwen2.5-14B-Instruct) in Table~\ref{tab:auroc_qwen14b}. We compare them with the contraction of the medium-sized model (i.e., Qwen2.5-7B-Instruct). Here, RL steps are 64 by default. The empirical results further validate that the contraction extends to a large range of model sizes. Regardless of different model sizes, the RL training could still conceal the contamination introduced in the SFT stage.}

\begin{table}[htbp]
  \centering
  \setlength{\tabcolsep}{6pt}
  \caption{\rebuttal{\textbf{AUROC (\%)} of contamination detection approaches evaluated starting from an SFT-contaminated model w/o RL to subsequently trained with GRPO. Results demonstrate that after GRPO, AUROC decreases across all the benchmarks and detection approaches. $\Delta$ measures the difference with the SFT-contaminated model w/o RL. \textbf{The base model here is Qwen2.5-3B-Instruct.}}}
  \vspace{-5pt}
  \label{tab:auroc_qwen3b}
  \resizebox{\linewidth}{!}{
  \begin{tabular}{l|c|ccccccc|c}
    \toprule
    Contamination Detection Methods &
    Training Stages & Olympiad & GPQA & AIME25 & AIME24 & Minerva & AMC23 & Avg. & $\Delta$ \\
    \midrule
    
    \multicolumn{1}{@{}l}{\textbf{Reference based}}\\
    \multirow{2}{*}{LiRA~\citep{mireshghallah2022empirical}}
      & Before RL   & 54.89 & 48.60 & 88.67 & 95.56 & 61.74 & 90.25 & 73.29 & \deltab{+0.00} \\
      & RL w/ Clean & 53.93 & 51.17 & 62.89 & 81.56 & 60.10 & 81.50 & 65.19 & \deltab{-8.09} \\
    \cmidrule(lr){1-10}
    \multirow{2}{*}{Ref~\citep{carlini2021extracting}}
      & Before RL   & 52.66 & 43.09 & 84.00 & 73.33 & 51.75 & 75.75 & 63.43 & \deltab{+0.00} \\
      & RL w/ Clean & 53.65 & 44.54 & 66.44 & 63.33 & 52.24 & 73.25 & 58.91 & \deltab{-4.52} \\
    \midrule
    
    \multicolumn{1}{@{}l}{\textbf{Reference free}}\\
    \multirow{2}{*}{Loss~\citep{carlini2021extracting}}
      & Before RL   & 51.75 & 57.64 & 58.00 & 76.00 & 56.84 & 64.75 & 60.83 & \deltab{+0.00} \\
      & RL w/ Clean & 49.20 & 57.54 & 42.22 & 55.11 & 54.83 & 54.75 & 52.28 & \deltab{-8.56} \\
    \cmidrule(lr){1-10}
    \multirow{2}{*}{Min\textendash K~\citep{shi2023detecting}}
      & Before RL   & 50.86 & 57.24 & 51.11 & 73.33 & 56.47 & 59.13 & 58.02 & \deltab{+0.00} \\
      & RL w/ Clean & 48.71 & 56.93 & 39.56 & 49.78 & 54.90 & 51.37 & 50.21 & \deltab{-7.82} \\
    \cmidrule(lr){1-10}
    \multirow{2}{*}{Max\textendash K~\citep{maini2024llm}}
      & Before RL   & 52.86 & 54.58 & 88.44 & 88.44 & 51.35 & 79.12 & 69.13 & \deltab{+0.00} \\
      & RL w/ Clean & 52.07 & 54.54 & 56.67 & 57.78 & 52.57 & 74.00 & 57.94 & \deltab{-11.19} \\
    
    \bottomrule
  \end{tabular}}
\end{table}

\begin{table}[htbp]
  \centering
  \setlength{\tabcolsep}{6pt}
  \caption{\rebuttal{\textbf{AUROC (\%)} of contamination detection approaches evaluated starting from an SFT-contaminated model w/o RL to subsequently trained with GRPO. Results demonstrate that after GRPO, AUROC decreases across all the benchmarks and detection approaches. $\Delta$ measures the difference with the SFT-contaminated model w/o RL. \textbf{The base model here is Qwen2.5-14B-Instruct.}}}
  \vspace{-5pt}
  \label{tab:auroc_qwen14b}
  \resizebox{\linewidth}{!}{
  \begin{tabular}{l|c|ccccccc|c}
    \toprule
    Contamination Detection Methods &
    Training Stages & Olympiad & GPQA & AIME25 & AIME24 & Minerva & AMC23 & Avg. & $\Delta$ \\
    \midrule
    
    \multicolumn{1}{@{}l}{\textbf{Reference based}}\\
    \multirow{2}{*}{LiRA~\citep{mireshghallah2022empirical}}
      & Before RL   & 87.35 & 86.54 & 91.56 & 85.33 & 88.12 & 91.25 & 88.36 & \deltab{+0.00} \\
      & RL w/ Clean & 83.46 & 82.38 & 90.56 & 84.00 & 82.23 & 84.88 & 84.59 & \deltab{-3.77} \\
    \cmidrule(lr){1-10}
    \multirow{2}{*}{Ref~\citep{carlini2021extracting}}
      & Before RL   & 62.33 & 54.69 & 60.44 & 37.78 & 59.26 & 65.25 & 56.63 & \deltab{+0.00} \\
      & RL w/ Clean & 56.28 & 58.85 & 50.67 & 36.00 & 53.32 & 63.88 & 53.17 & \deltab{-3.46} \\
    \midrule
    
    \multicolumn{1}{@{}l}{\textbf{Reference free}}\\
    \multirow{2}{*}{Loss~\citep{carlini2021extracting}}
      & Before RL   & 73.59 & 76.98 & 79.11 & 88.00 & 77.45 & 84.88 & 80.00 & \deltab{+0.00} \\
      & RL w/ Clean & 63.50 & 61.21 & 67.33 & 89.11 & 71.73 & 68.75 & 70.27 & \deltab{-9.73} \\
    \cmidrule(lr){1-10}
    \multirow{2}{*}{Min\textendash K~\citep{shi2023detecting}}
      & Before RL   & 73.04 & 76.16 & 77.11 & 86.22 & 77.61 & 83.75 & 78.98 & \deltab{+0.00} \\
      & RL w/ Clean & 63.15 & 60.09 & 66.67 & 87.78 & 71.69 & 67.12 & 69.42 & \deltab{-9.57} \\
    \cmidrule(lr){1-10}
    \multirow{2}{*}{Max\textendash K~\citep{maini2024llm}}
      & Before RL   & 63.75 & 71.46 & 64.00 & 84.89 & 57.99 & 78.00 & 70.02 & \deltab{+0.00} \\
      & RL w/ Clean & 52.94 & 59.10 & 46.67 & 53.33 & 48.89 & 55.00 & 52.66 & \deltab{-17.36} \\
    
    \bottomrule
  \end{tabular}}
\end{table}

\newpage
\subsection{\rebuttal{More RL training brings performance gain (Stage I: pre-LRM)}}

\rebuttal{In Table~\ref{tab:grpo_qwen_id_ood_results}, we note that sometimes RL does not bring any performance gain, even in the clean setting. To clarify that our RL setting and recipe are correct, we would like to point out that the insignificant RL improvement was due to our relatively short RL training run to save experiment cost, which already effectively demonstrated the contamination concealment effect, whereas RL fine-tuning on reasoning models typically requires many more training steps to boost performance significantly (i.e, thousands of RL steps, more than 16k GPU hours, etc).}

\rebuttal{However, we further verified our RL pipeline by continuing RL training on the SFT-contaminated Qwen-3B model for up to 280 steps. The RL concealment still exists: the average AUROC of LiRA on the AIME24, AIME25, and AMC23 datasets drops from 91.49\% before RL to approaching random guessing after 280 steps. At the same time, RL continues to improve task performance, as reported in Table~\ref{tab:pass1_rl_qwen3b}. Overall, these results demonstrate that additional RL training on an SFT-contaminated model can both inflate performance and further conceal contamination signals. }

\begin{table}[htbp]
  \centering
  \setlength{\tabcolsep}{6pt}
  \caption{\rebuttal{\textbf{Pass@1 (\%)} across six reasoning benchmarks. Results show that additional RL training on an SFT-contaminated model can bring performance gain.}}
  \vspace{-5pt}
  \label{tab:pass1_rl_qwen3b}
  \resizebox{\linewidth}{!}{
  \begin{tabular}{l|ccccccc}
    \toprule
    Model & Olympiad & GPQA & AIME25 & AIME24 & Minerva & AMC23 & Avg. \\
    \midrule
    Qwen2.5-3B-Instruct & 26.98 & 31.06 & 0.83 & 6.67 & 21.69 & 36.07 & 20.54 \\
     \quad $\hookrightarrow$ SFT contamination          & 35.07 & 37.11 & 13.23 & 14.32 & 29.13 & 44.06 & 28.82 \\
     \quad $\hookrightarrow$ Further RL         & \textbf{40.74} & \textbf{38.38} & \textbf{15.12} & \textbf{15.83} & \textbf{31.25} & \textbf{49.06} & \textbf{31.73} \\
    \bottomrule
  \end{tabular}}
\end{table}

\newpage
\subsection{Token Embedding Visualization of member/non-member of benchmarks (Stage I: pre-LRM)}

We provide UMAP visualization of token embeddings of member and non member of benchmarks. 
Specifically, we \rebuttal{input the entire question and response to the model,} and extract the last token from the last layer of the hidden space. \rebuttal{The base model is selected as Qwen2.5-7B-Instruct. We choose n\_neighbors as 15,  min\_dist as 0.1, n\_components as 2, and metric as cosine. As shown in Fig.~\ref{fig:embedding_minerva}, ~\ref{fig:embedding_gpqa}, and ~\ref{fig:embedding_olympiadbench}, the visualization indicates that member/non-member embeddings are highly overlapped and are hard to distinguish. We also provide quantitative evaluations of embedding based detection following~\citep{liu2024probing} in Table~\ref{tab:embedding_detection_auroc}. In particular, we randomly split all member and non-member examples into training and test sets, maintaining a consistent 0.8/0.2 ratio across datasets, and train an MLP classifier to predict the member/non-member labels. The input to the MLP is the hidden state feature of the last token in the question and response pair. Following~\citep{liu2024probing}, we evaluated several hidden layers and selected the best performing one. We report results only on the medium-scale benchmarks (Olympiad, GPQA, Minerva), since the small-scale datasets have very few test examples, resulting in too much variance. The results show that contamination is initially detectable, but becomes undetectable after RL.}

\begin{table}[htbp]
  \centering
  \setlength{\tabcolsep}{6pt}
  \caption{\rebuttal{AUROC (\%) of the embedding-based detection~\citep{liu2024probing} evaluated starting from an SFT-contaminated model w/o RL to subsequently trained with GRPO. Results demonstrate that after GRPO, AUROC decreases. $\Delta$ measures the difference with the SFT-contaminated model w/o RL. Here, the base model is Qwen2.5-7B-Instruct.}}
  \label{tab:embedding_detection_auroc}
  \begin{tabular}{l|cccc|c}
    \toprule
    Training Stage & Olympiad & GPQA & Minerva & Avg. & $\Delta$ \\
    \midrule
    Before RL & 58.89 & 80.00 & 60.05 & 66.31 & \deltab{+0.00}  \\
    RL w/ Clean & 56.63 & 45.75 & 49.60  & 50.66 & \deltab{-15.65} \\
    RL w/ Clean\&Mem    & 57.43 & 45.00    & 48.94 & 50.46\ & \deltab{-15.85} \\
    \bottomrule
  \end{tabular}
\end{table}

\begin{figure}[htbp]
  \centering
  \begin{subfigure}[t]{0.5\textwidth}
    \centering
    \includegraphics[width=\textwidth]{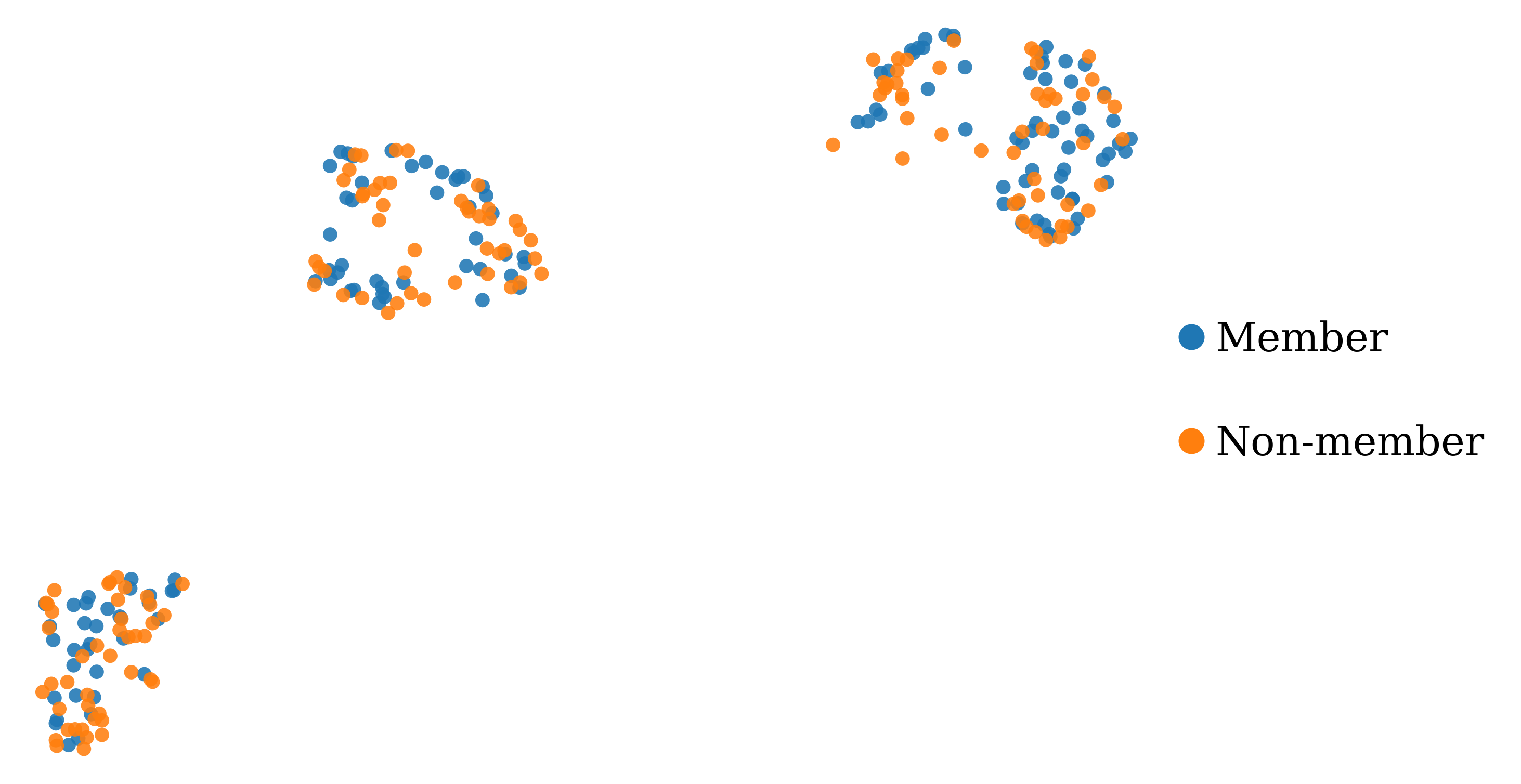}
    \caption{Base model}
    
  \end{subfigure}\hfill
  \begin{subfigure}[t]{0.5\textwidth}
    \centering
    \includegraphics[width=\textwidth]{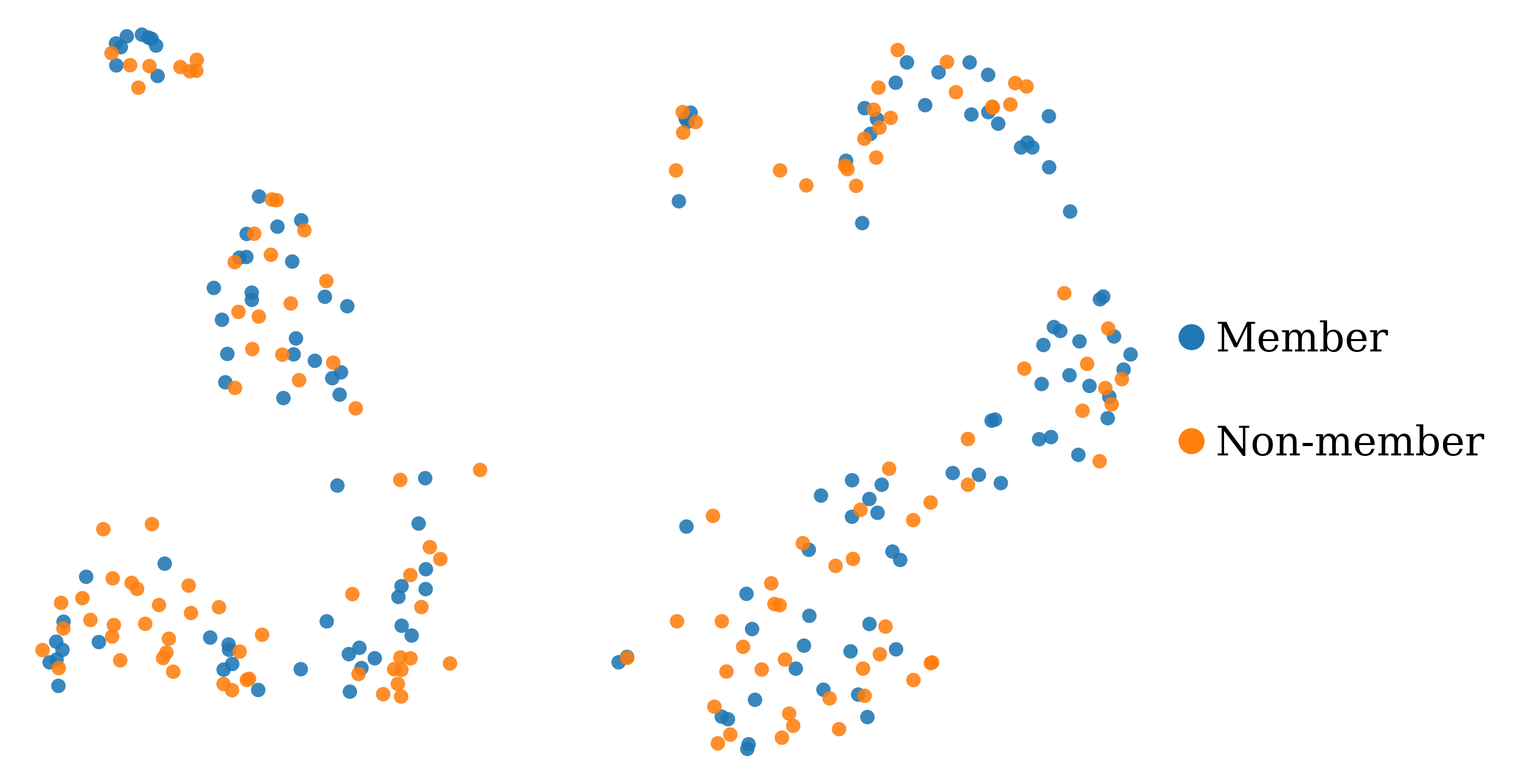}
    \caption{SFT contaminated model}
  \end{subfigure}
  \caption{Last token embedding visualization on the Minerva Math dataset. 
  }
  \label{fig:embedding_minerva}
\end{figure}

\begin{figure}[htbp]
  \centering
  \begin{subfigure}[t]{0.5\textwidth}
    \centering
    \includegraphics[width=\textwidth]{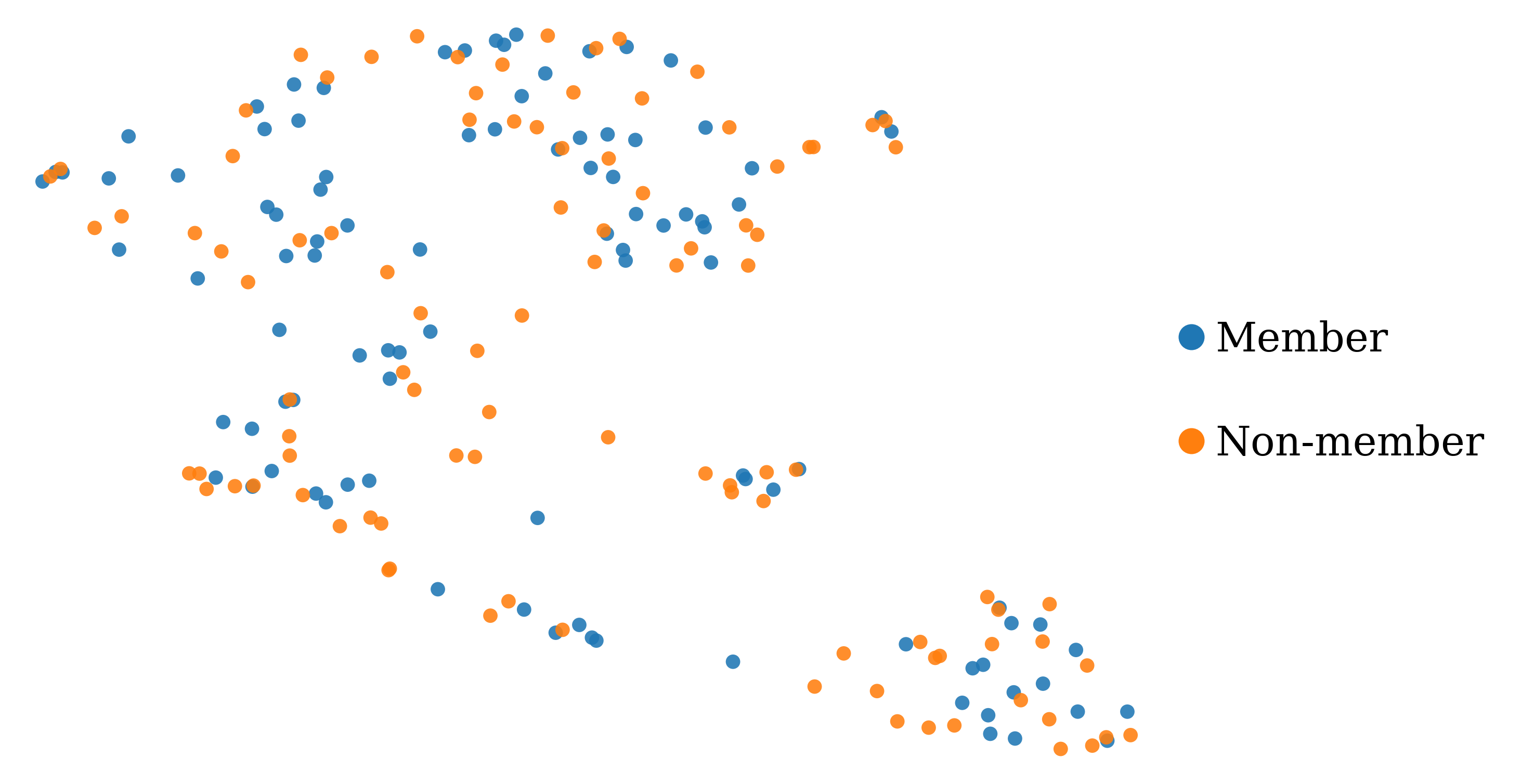}
    \caption{Base model}
    
  \end{subfigure}\hfill
  \begin{subfigure}[t]{0.5\textwidth}
    \centering
    \includegraphics[width=\textwidth]{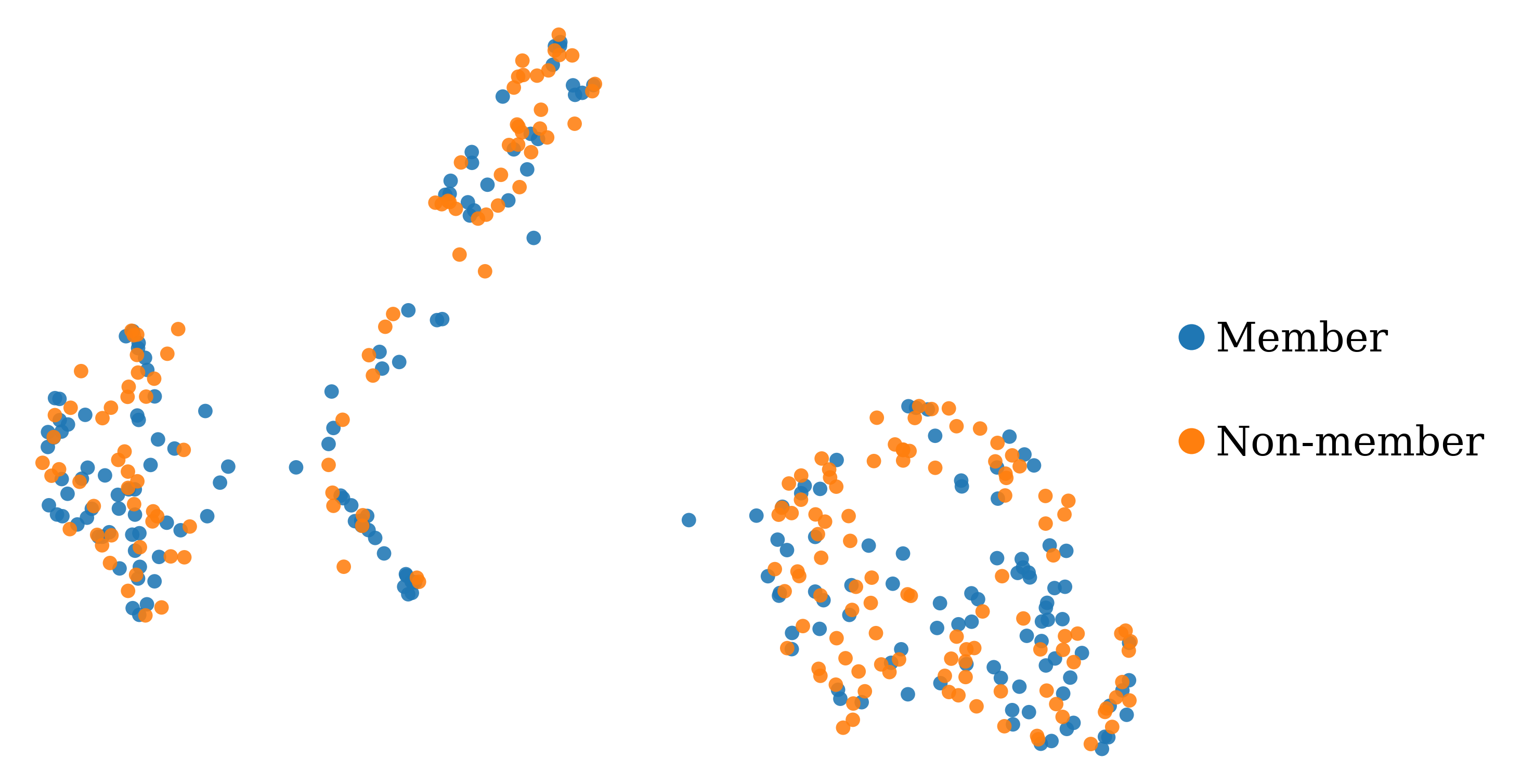}
    \caption{SFT contaminated model}
  \end{subfigure}
  \caption{Last token embedding visualization on the GPQA-Diamond dataset.
  }
   \label{fig:embedding_gpqa}
\end{figure}

\begin{figure}[htbp]
  \centering
  \begin{subfigure}[t]{0.5\textwidth}
    \centering
    \includegraphics[width=\textwidth]{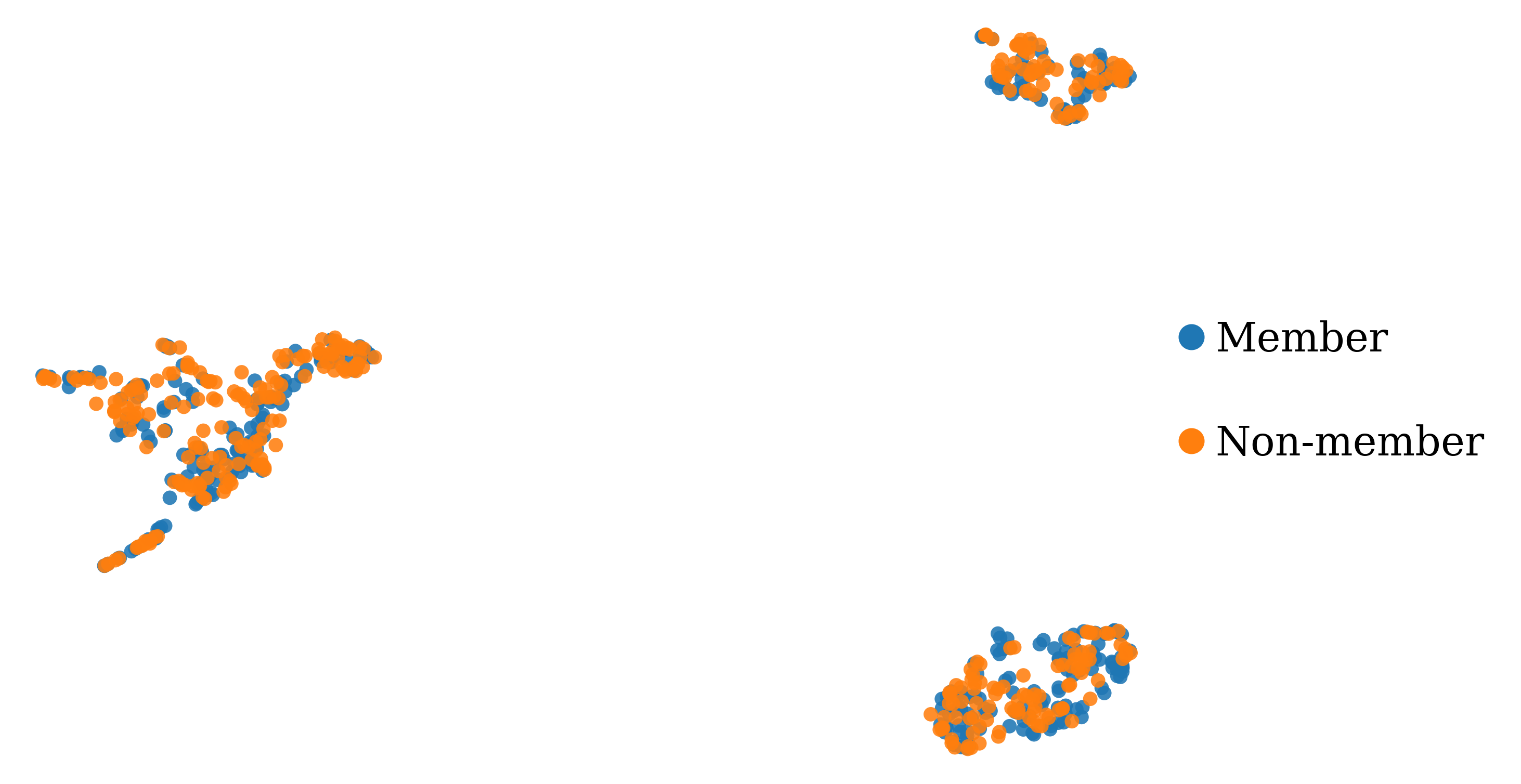}
    \caption{Base model}
    
  \end{subfigure}\hfill
  \begin{subfigure}[t]{0.5\textwidth}
    \centering
    \includegraphics[width=\textwidth]{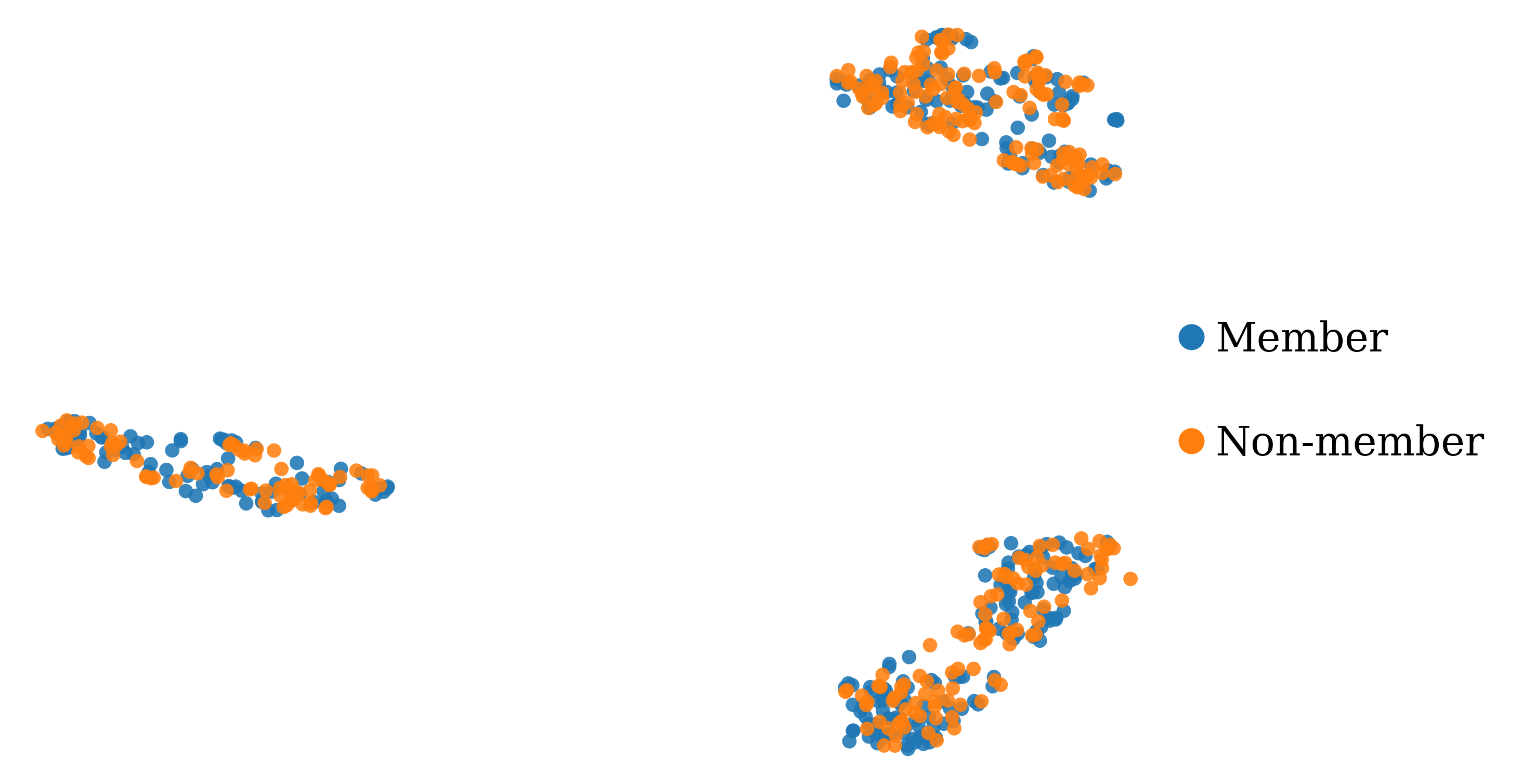}
    \caption{SFT contaminated model}
  \end{subfigure}
  \caption{Last token embedding visualization on the OlympiadBench dataset
  }
  \label{fig:embedding_olympiadbench}
\end{figure}


\newpage
\subsection{Contamination Inflation Comparison (Stage I: pre-LRM)}

To demonstrate performance inflation from SFT contamination, we choose Qwen2.5-7B-Instruct as the base model and compare pass@1 across three training settings: (i) 15K clean samples from Openthought3~\citep{guha2025openthoughts}, (ii) 10K clean samples plus three full repetitions of six entire benchmark data (13,735 samples in total), and (iii) 1.2M clean samples (i.e., open-thoughts/OpenThoughts3-1.2M). As shown in Tab.~\ref{tab:conta_inflation_comparison}, the contaminated model outperforms the 15K clean baseline by an average of 10.80\% across six benchmarks and even surpasses OpenThoughts3-1.2M on GPQA-Diamond and Minerva Math datasets. These results demonstrate that benchmark contamination can easily yield substantial performance inflation.

\begin{table}[htbp]
\centering
\footnotesize
\setlength{\tabcolsep}{3pt}
\caption{Pass@1 (\%) of comparison between clean and contaminated SFT models. \textbf{Bold}=Best.}
\label{tab:conta_inflation_comparison}
\begin{tabular}{l|ccccccc}
\toprule

\multicolumn{1}{c}{Training Data} &  Olypaid & GPQA & AIME25 & AIME24 & Minerva & AMC23 & Avg.  \\ \midrule
15K clean & 50.81 & 41.67 & 21.67 & 29.17 & 34.01 & 77.50 & 42.47 \\
13.7K (clean $+$ benchmarks) & 58.52 & \textbf{59.09} & 40.00 & 40.00 & \textbf{44.49} & 77.50 & 53.27 \\ 
1.2M clean & \textbf{63.70} & 50.88 & \textbf{60.00} & \textbf{62.50} & 37.50 & \textbf{92.50} & \textbf{61.18} \\
\bottomrule
\end{tabular}%
\end{table}

\newpage
\subsection{Log-prob distribution after extensive SFT contamination on LRMs (Stage II: post-LRM)}

We provide the log prob distributions for members and non-members of deepseek distill models before and after contamination on Minvera Math and GPQA-Diamond in Fig.~\ref{fig:log_prob_r1_minerva_math} and~\ref{fig:log_prob_r1_gpqa}. Even though non-members have not been exposed to LRMs during the contamination, the log-prob would also increase, demonstrating that LRMs start to generalize after contamination.

\begin{figure*}[htbp]
  \centering
  \begin{subfigure}[t]{0.5\textwidth}
    \centering
    \includegraphics[width=\textwidth]{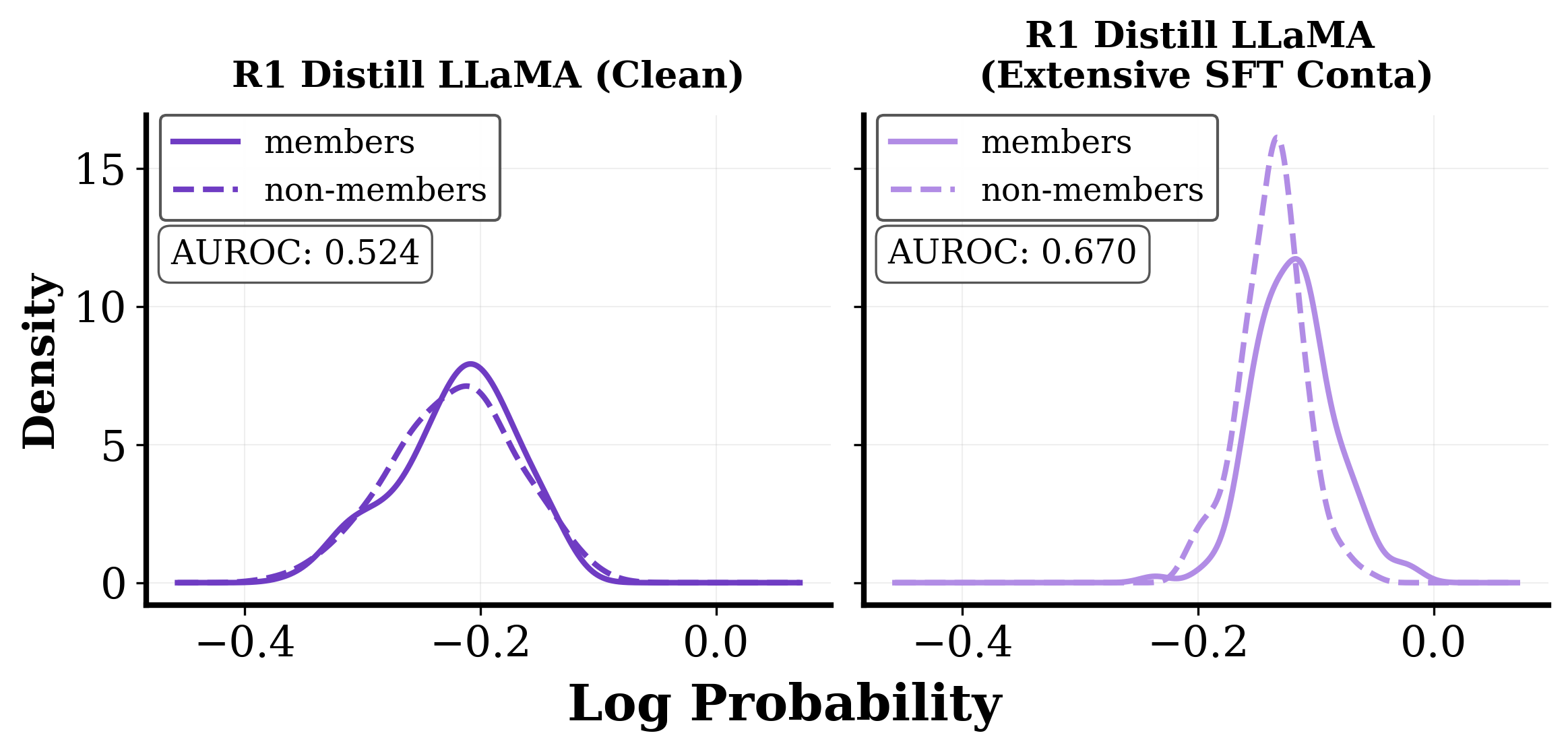}
    \caption{R1 Distill LLaMA results on Minerva Math}
    
  \end{subfigure}\hfill
  \begin{subfigure}[t]{0.5\textwidth}
    \centering
    \includegraphics[width=\textwidth]{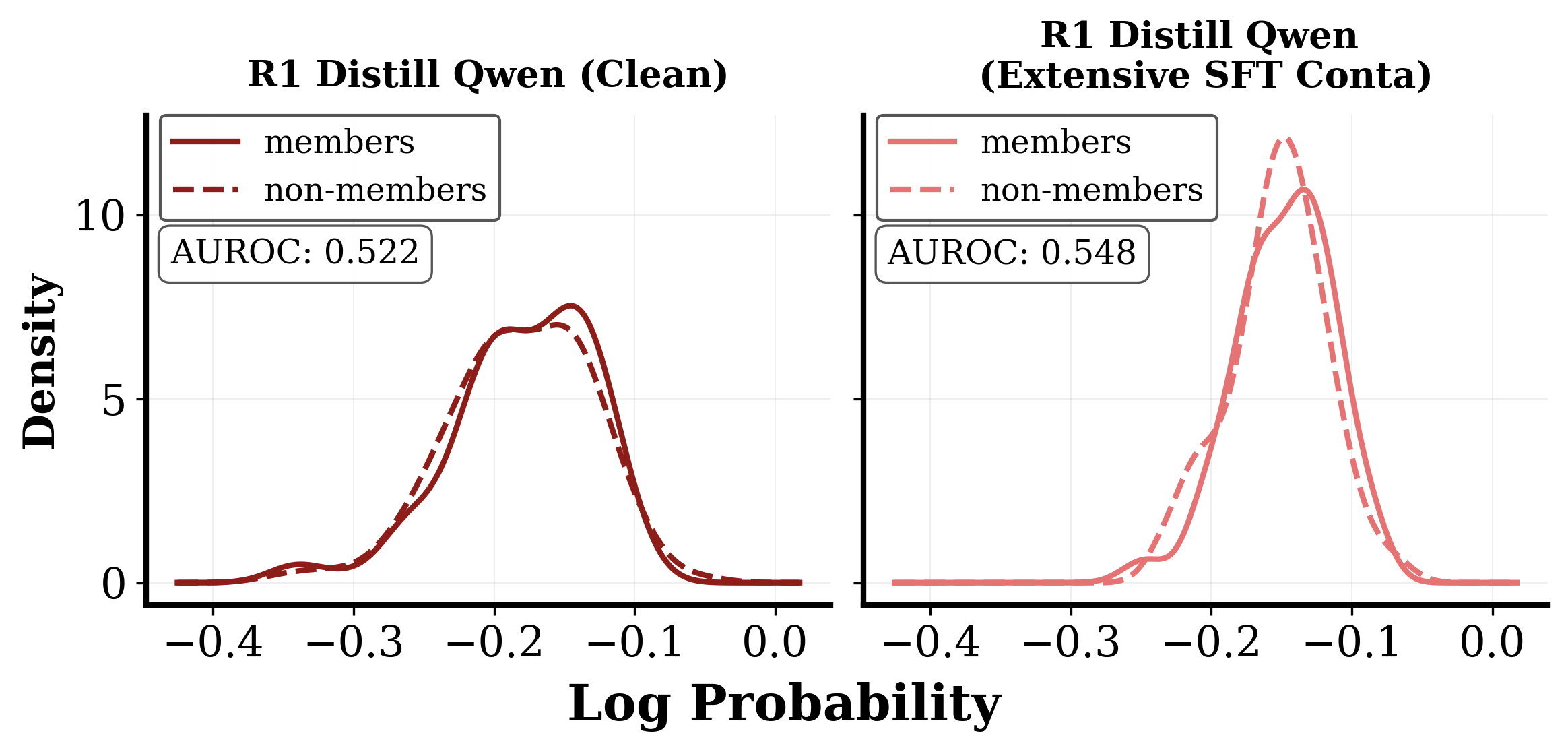}
    \caption{R1 Distill Qwen results on Minerva Math}
  \end{subfigure}
  \caption{Log-prob distributions for members vs. non-members of \textbf{advanced LRMs before and after contamination.} After extensive SFT contamination on members, the log prob of both members and non-members increases at a similar margin, due to the generalization.}
  \label{fig:log_prob_r1_minerva_math}
\end{figure*}

\begin{figure*}[htbp]
  \centering
  \begin{subfigure}[t]{0.5\textwidth}
    \centering
    \includegraphics[width=\textwidth]{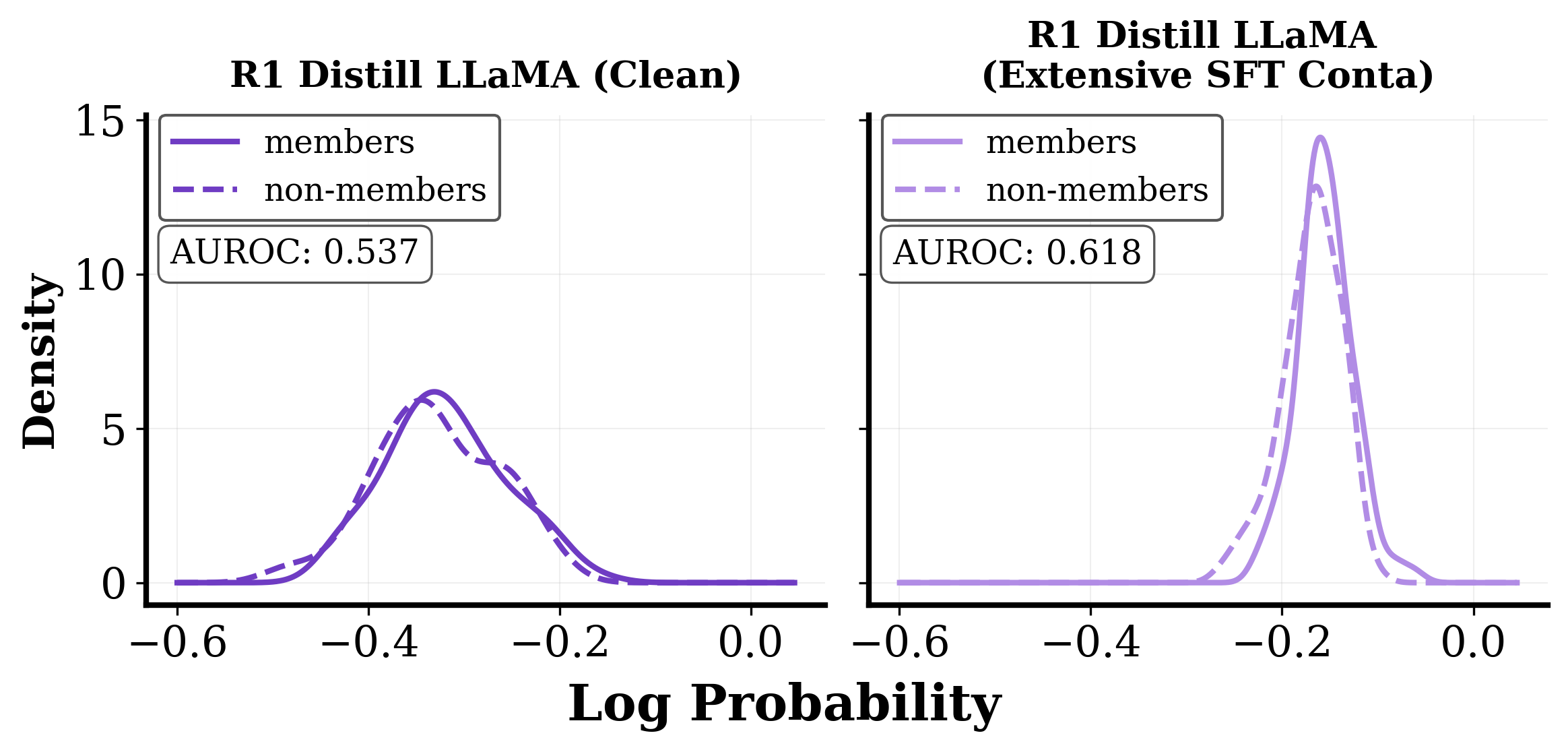}
    \caption{R1 Distill LLaMA results on GPQA-Diamond}
    
  \end{subfigure}\hfill
  \begin{subfigure}[t]{0.5\textwidth}
    \centering
    \includegraphics[width=\textwidth]{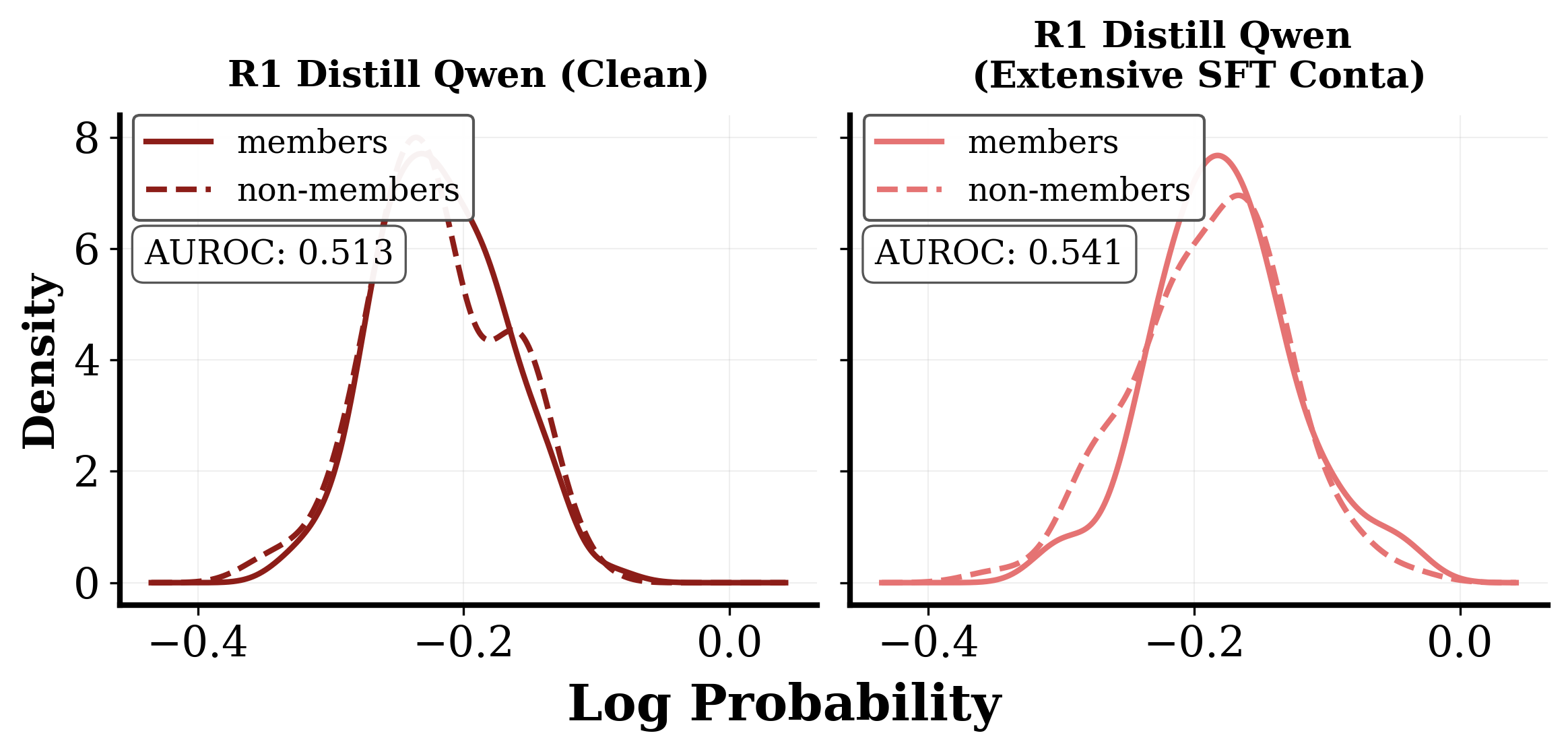}
    \caption{R1 Distill Qwen results on GPQA-Diamond}
  \end{subfigure}
  \caption{Log-prob distributions for members vs. non-members of \textbf{advanced LRMs before and after contamination.} After extensive SFT contamination on members, the log prob of both members and non-members increases at a similar margin, due to the generalization.}
  \label{fig:log_prob_r1_gpqa}
\end{figure*}

\newpage
\subsection{\rebuttal{Applicability in non-math domains (Stage I and II)}}

\rebuttal{We further evaluate our claims on datasets beyond math and science domains for both the pre-LRM and the post-LRM stage. We choose LiveCodeBench v2 (Coding) and MMLU-Pro (General QA). For MMLU-Pro, we only select the ‘health’, ‘history’, ‘law’, ‘other’, ‘philosophy’, and ‘psychology’ domains that are not related to math and science.}

\rebuttal{We first provide results for the pre-LRM stage (i.e., contamination happens when the base model evolves into LRMs) with Qwen2.5-7B-Instruct as the base model in Table~\ref{tab:pre_LRM_generaldomain}. Despite differences in AUROC degradation speed across domains, the trend remains consistent: RL conceals the SFT contamination. Overall, these results indicate that the effect of RL concealments is not limited to the reasoning domains but extends to more general domains as well.}

\rebuttal{We also provide the AUROC of contamination detection approaches evaluated on advanced LRM that is contaminated in the post-LRM stage (i.e., contamination with CoT on advanced LRMs) in Table~\ref{tab:post_LRM_generaldomain}. The results show that contaminated LRMs leave minimal contamination evidence in coding as well, indicating that the detection performance is near random guess for those reasoning-related domains (e.g., coding, math, science, etc). For domains that require less CoT (e.g., general QA), despite that the AUROC could reach 75\% when using the Loss as the detection for the contaminated Deepseek distilled Qwen-14B, the detection performance is much worse compared to results in the pre-LRM stage. Specifically, the AUROC for the Loss detector on MMLU-Pro can reach 93.72\%, and LiRA could achieve 79.85\% on LiveCodeBench when SFT contamination is applied to the base model, shown in Table 18 in the paper.}

\begin{table}[htbp]
  \centering
  \setlength{\tabcolsep}{6pt}
  \caption{\rebuttal{\textbf{AUROC (\%)} of contamination detection approaches evaluated starting from an SFT-contaminated model w/o RL to subsequently trained with GRPO \textbf{in pre-LRM stage}. Results demonstrate that after GRPO, AUROC decreases across \textbf{both general QA and coding domains} and different detection approaches. $\Delta$ measures the difference with the SFT-contaminated model w/o RL. Higher AUROC, better detection performance. The base model here is Qwen2.5-7B-Instruct.}}
  \label{tab:pre_LRM_generaldomain}
  \resizebox{\linewidth}{!}{
  \begin{tabular}{l|c|ccc|c}
    \toprule
    Contamination Detection Methods &
    Training Stages & \makecell[c]{MMLU-Pro\\(Non-STEM domains) } & LiveCodeBench & Avg. & $\Delta$ \\
    \midrule
    
    \multicolumn{1}{@{}l}{\textbf{Reference based}}\\
    \multirow{2}{*}{LiRA~\citep{mireshghallah2022empirical}} 
      & Before RL   & 96.68 & 79.85 & 88.26 & \deltab{+0.00} \\
      & RL w/ clean & 93.85 & 71.12 & 82.48 & \deltab{-5.78} \\
    \cmidrule(lr){1-6}
    \multirow{2}{*}{Ref~\citep{carlini2021extracting}} 
      & Before RL   & 86.25 & 69.51 & 77.88  & \deltab{+0.00} \\
      & RL w/ clean & 81.67 & 66.68 & 74.17 & \deltab{-3.71} \\
    \midrule
    
    \multicolumn{1}{@{}l}{\textbf{Reference free}}\\
    \multirow{2}{*}{Loss~\citep{carlini2021extracting}} 
      & Before RL   & 93.72 & 64.87 & 79.29 & \deltab{+0.00} \\
      & RL w/ clean & 88.69 & 54.26 & 71.47 & \deltab{-7.82} \\
    \cmidrule(lr){1-6}
    \multirow{2}{*}{Min\textendash K~\citep{shi2023detecting}} 
      & Before RL   & 93.43 & 64.12 & 78.77 & \deltab{+0.00} \\
      & RL w/ clean & 88.54 & 54.29 & 71.41 & \deltab{-7.36} \\
    \cmidrule(lr){1-6}
  \multirow{2}{*}{Max\textendash K~\citep{maini2024llm}} 
  & Before RL   & 82.27 & 63.75 & 73.01  & \deltab{+0.00} \\
  & RL w/ clean & 59.48 & 49.80  & 54.64  & \deltab{-18.37} \\
    
    \bottomrule
  \end{tabular}}
\end{table}

\begin{table}[htbp]
  \centering
  \setlength{\tabcolsep}{6pt}
  \caption{\rebuttal{AUROC (\%) of contamination detection approaches evaluated on advanced LRMs contaminated with CoT in both \textbf{general QA and coding domain} benchmarks \textbf{in post-LRM stage}.}}
  \vspace{-5pt}
  \label{tab:post_LRM_generaldomain}
  \resizebox{\linewidth}{!}{
  \begin{tabular}{l|c|ccc}
    \toprule
    Contamination Detection Methods & Init Models & \makecell[c]{MMLU-Pro\\(Non-STEM domains)}
    & LiveCodeBench & Avg. \\
    \midrule

    \multicolumn{1}{@{}l}{\textbf{Reference based}}\\
    \multirow{2}{*}{LiRA~\citep{mireshghallah2022empirical}}
      & DS Llama-8B & 72.85 & 61.37 & 67.11 \\
      & DS Qwen-14B & 74.75 & 58.78 & 66.77 \\
    \cmidrule(lr){1-5}
    \multirow{2}{*}{Ref~\citep{carlini2021extracting}}
      & DS Llama-8B & 60.77 & 58.75 & 59.76 \\
      & DS Qwen-14B & 63.43 & 58.17 & 60.80 \\
    \midrule

    \multicolumn{1}{@{}l}{\textbf{Reference free}}\\
    \multirow{2}{*}{Loss~\citep{carlini2021extracting}}
      & DS Llama-8B & 69.56 & 57.42 & 63.49 \\
      & DS Qwen-14B & 75.83 & 56.52 & 66.18 \\
    \cmidrule(lr){1-5}
    \multirow{2}{*}{Min\textendash K~\citep{shi2023detecting}}
      & DS Llama-8B & 68.12 & 56.59 & 62.36 \\
      & DS Qwen-14B & 76.09 & 56.08 & 66.09 \\
    \cmidrule(lr){1-5}
    \multirow{2}{*}{Max\textendash K~\citep{maini2024llm}}
      & DS Llama-8B & 61.07 & 56.33 & 58.70 \\
      & DS Qwen-14B & 66.88 & 54.80  & 60.84 \\
    \bottomrule
  \end{tabular}}
  \vspace{-10pt}
\end{table}


\begin{table}[htbp]
\centering
\footnotesize
\setlength{\tabcolsep}{3pt}
\caption{Pass@1 (\%) of the SFT-contaminated model after four additional epochs of SFT on clean data. The results show that further SFT does not make the model forget contamination; instead, pass@1 continues to increase by 0.25\% across six benchmark on average compared to the SFT contaminated model, indicating persistent performance inflation.}
\label{tab:further_sft_pass_at_1}
\begin{tabular}{l|ccccccc}
\toprule
\multicolumn{1}{c|}{Models} &  Olypaid & GPQA & AIME25 & AIME24 & Minerva & AMC23 & Avg.  \\ \midrule
SFT Conta (w/o RL) + Further SFT & 55.24 & 49.43 & 29.58 & 36.67 & 39.15 & 75.00 & 47.52 \\

\bottomrule
\end{tabular}%
\end{table}

\begin{table}[htbp]
\centering
\footnotesize
\setlength{\tabcolsep}{3pt}
\caption{Pass@1 (\%) of advanced LRMs before and after SFT contamination with CoT on both clean and member data. Some LRMs with strong reasoning ability may not have a huge performance inflation after SFT contamination with CoT on both clean and members. Thus, we choose the SFT contamination with CoT on members only as the default setup in our main analysis.}
\label{tab:lrm_sft_clean_mem_inflation}
\begin{tabular}{l|ccccccc}
\toprule
\multicolumn{1}{c}{Models} &  Olypaid & GPQA & AIME25 & AIME24 & Minerva & AMC23 & Avg.  \\ \midrule
DeepSeek-R1-Distill-Llama-8B & 52.10 & 43.94 &	33.33 &	43.33 & 32.97 & 84.58 & 48.38 \\

\quad $\hookrightarrow$ SFT w/ Clean \& Mem & 56.70 & 45.83 & 48.33 & 56.67 & 35.94 & 88.12 & 55.27 \\ \midrule

DeepSeek-R1-Distill-Qwen-7B & 55.70 & 48.65 & 39.26 & 53.70 & 37.25 & 91.94 & 54.42 \\

\quad $\hookrightarrow$ SFT w/ Clean \& Mem & 55.81 & 45.08 & 40.00 & 55.42 & 37.87 & 88.12 & 53.72 \\

\bottomrule
\end{tabular}%
\end{table}

\newpage
\section{Computation resources}
All experiments were run on a single node with 9$\times$ NVIDIA L40S GPUs (48~GiB each; $\sim$432~GiB total), NVIDIA driver 570.86.16, and CUDA 12.8. The node uses a 1-socket Intel Xeon Gold 6338 CPU (2.00~GHz base, up to 3.20~GHz), 128 hardware threads, 96~MiB L3 cache (two slices), and 1.0~TiB RAM, running Ubuntu 22.04 (Linux 6.8.0-79-generic).

\section{Use of the LLM}
We only use LLM to aid paper writing and retrieve related works.

\end{document}